\newcolumntype{M}{>{\centering\arraybackslash}p{1.3cm}}
\newtheorem{theorem}{Théorem}[section]
\newtheorem{remark}[theorem]{Remark}
\newtheorem{corollary}[theorem]{Corollary}
\newtheorem{proposition}[theorem]{Proposition}
\newtheorem{assumption}{Assumption}
\newcommand\nocaption{%
    \renewcommand\p@subfigure{}
    \renewcommand\thesubfigure{\thefigure\alph{subfigure}}
}
\def\Nmc{\Nmc}
\def\Ns{n_{s}}
\def\Nw{n_{w}}
\def\Eb{{\mathbb E}}
\def\Pb{{\mathbb P}}
\def\Nb{{\mathbb N}}
\def\Rb{{\mathbb R}}
\def\Sb{{\mathbb S}}
\def\Lb{{\mathbb L}}
\def\Qc{{\mathcal Q}}
\def\Nc{{\mathcal N}}
\def\Fc{{\mathcal F}}
\def\Ac{{\mathcal A}}
\def\Uc{{\mathcal U}}
\def\Mc{{\mathcal M}}
\def\ES{{\rm ES}}
\def\VaR{{\rm VaR}}
\def\ESh{\widehat{\rm ES}}
\def\mk{{\mathfrak m}}
\def\Ik{{\mathfrak I}}
\def\kk{{\mathfrak k}}
\def\1{{\mathbf 1}}
\def\Nmc{N}
\def\sr{{\rm s}}
\def\Frm{{\rm F}}
\def\ps{\mathtt p}
\def\is{{\mathtt i}}
\def\ks{{\mathtt k}}
\def\Ss{{\mathtt S}}
\def\ms{{\mathtt m}}
\def\[#1{[\![#1]\!]}
\def\essinf#1{{\rm ess}\!\!\!\!\inf_{#1}}
\def\E{{\mathbb E}}
\def\vs#1{\vspace{#1mm}}
\def\x{{\rm x}}
\def\X{{\rm X}}
\title{Computation of Expected Shortfall by fast detection of worst scenarios}
\author{Bruno Bouchard\footnote{CEREMADE,  CNRS, Universit\'e Paris Dauphine,  PSL University.}, Adil Reghai\footnote{Natixis.}, Benjamin Virrion\footnote{Natixis and CEREMADE, CNRS, Universit\'e Paris Dauphine, PSL University.}$^{\; \: ,}$\footnote{The authors would like to thank Nicolas Baradel for helping with the code, Rida Mahi and Mathieu Bernardo from the Natixis Quantitative Research Teams for providing the first results and ideas on the Fast Detection Algorithm, and finally William Leduc for providing all the necessary data to obtain the different book parameters.}}
\date{\today}
\begin{document}

\maketitle

\begin{abstract}
We consider a multi-step algorithm for the computation of the historical expected shortfall such as defined by the Basel {\sl Minimum Capital Requirements for Market Risk}. 
At each step of the algorithm, we use Monte Carlo simulations to reduce the number of historical scenarios that potentially belong to the set of worst scenarios. The number of simulations increases as the number of candidate scenarios is reduced and the distance between them diminishes. For the most naive scheme, we show that the ${\Lb}^{p}$-error of the estimator of the Expected Shortfall is bounded by a linear combination of the probabilities of inversion of favorable and unfavorable scenarios at each step, and of the last step Monte Carlo error associated to each scenario. By using concentration inequalities, we then show that, for sub-gamma pricing errors, the probabilities of inversion converge at an exponential rate in the number of simulated paths. We then propose an adaptative version in which the algorithm improves step by step its knowledge on the unknown parameters of interest: mean and variance of the Monte Carlo estimators of the different scenarios. Both schemes can be optimized by using dynamic programming algorithms that can be solved off-line.
To our knowledge, these are the first non-asymptotic bounds for such estimators. Our hypotheses are weak enough to allow for the use of estimators for the different scenarios and steps based on the same random variables, which, in practice, reduces considerably the computational effort. First numerical tests are performed. 
\end{abstract}

{\bf Keywords: }Expected Shortfall, ranking and selection, sequential design, Bayesian filter.

\section{Introduction}

The Basel {\sl Minimum Capital Requirements for Market Risk} \cite{basel_minimum_capital_requirements_for_market_risk} has brought two main changes in the way that investment banks need to compute their capital requirements. 
Expected Shortfall ($\ES$) replaces Value at Risk (VaR) as the main risk indicator for the computation of capital requirements. The advantages of ES over VaR have been brought forward in Artzner et al.~\cite{artzner1999coherent}, and Expected Shortfall is now considered by most researchers and practitioners as superior to VaR as a risk measure, because it respects the sub-additivity axiom, see \cite{acerbi2002coherence,artzner1999coherent,rockafellar2002conditional}. 
The second main change is that the number of required daily computations of $\ES$ has been multiplied by a factor of up to 90. Where banks used to need to compute one VaR per day, they now need to compute up to three $\ES$ per liquidity horizon and risk class, as well as three $\ES$ per liquidity horizon for all risk classes combined.
The above has triggered several works on the fast computation of $\ES$. 
\vs2

Mathematically, if {$V$} is a random variable modeling the level of loss\footnote{All over this paper, we measure the performances in terms of losses. A positive number is a loss, a negative number is a gain.} of a portfolio that will be known at a future time, and $0 < \alpha < 1$, the expected shortfall of level $\alpha\in (0,1)$ is defined by
\begin{equation}
\ES_\alpha := \frac{1}{\alpha} \int_0^\alpha \VaR_\gamma( {V}) d\gamma,
\end{equation} 
where $\VaR_{\gamma}$ is the Value at Risk at level $\gamma$, i.e.
\begin{equation}
\VaR_\gamma \left( {V}\right) :=  {\max \left\{ x \in \mathbb{R} : \Pb[V\ge x] > \gamma \right\}}. 
\end{equation}

Nearly all of the literature concentrates on studying the $\ES$ by using parametric, non-parametric or semi-parametric approaches to approximate the distribution of ${V}$ based on historical data. See in particular \cite{broda2018approximating,elliott2009var,francq2014multi,hoogerheide2010bayesian,krause2014fast,nadarajah2014estimation,ortiz2014efficient,peracchi2008estimating,kamdem2005value,simonato2011performance,yu2010kernel,yueh2010analytical}. Another approach consists in using the fact that $V$ is the risk neutral value of a book, and therefore of the form $\mathbb{E}[P|S]$ in which $S$ is a random variable associated to market parameters and $P$ represents the future (discounted) payoffs of the book. This suggests using a nested Monte Carlo approach : simulate a set of values in the distribution of $S$ (outer scenarios), and, for each simulation of $S$, compute a Monte Carlo estimator of $\E[P|S]$ by using simulations in the conditional distribution (inner scenarios). This is for instance the approach of \cite{broadie2011efficient,gordy2010nested}.

But, as defined in the regulatory document of Basel \cite{basel_minimum_capital_requirements_for_market_risk}, the expected shortfall is based on $\Ns = 253$ scenarios of market parameters $\sr=(\sr^{i})_{i\le \Ns}$ that are generated in a deterministic way. Therefore, $S$ is just uniformly distributed in the sequence $\sr$ and there is no need for simulating outer scenarios. Since $V$ is defined by a pricing formula $\mathbb{E}[P|S]$ that is fully described by the value of $S$, there is also no room for approximating the law of $V$ based on historical data, if we are only interested by the requirements of \cite{basel_minimum_capital_requirements_for_market_risk}. 
The only issue is to compute in an efficient way the loss impacts $(\mu^{i})_{i\le \Ns}$ of the book,
\begin{equation*}
    \mu^{i} :=   \left(\mathbb{E}[P | S=\sr^{i}]-\mathbb{E}[P | S=\sr^{0}]\right), \;i=1 \ldots,\Ns,
\end{equation*}
in which $\sr^{0}$ is the current value of the market parameters, and then compute the average over the $\Nw = 6$ worst impacts, say
\begin{equation}
\ES = \frac{1}{\Nw} \sum_{i=1}^{\Nw} \mu^{i},
\end{equation}
if, for ease of notations, we assume that 
\begin{equation}\label{eq: ordre mu}
\mu^1\ge \mu^2 \ge \cdots\ge  \mu^{\Ns - 1}\ge \mu^{\Ns}.
\end{equation}

Methods that are in line with the above have also been studied, in particular in \cite{liu2010stochastic,risk2018sequential} in which the authors define a distance on the space of scenarios induced by the distance between their risk factors. Starting with the original outer-level scenarios (called ``prediction points"), they determine ``design points" that are included in their convex hull. Inner-level paths are simulated in order to evaluate the portfolio value at the design points. These values are then used to establish a metamodel of the portfolio price with respect to the risk factors, and this metamodel is then used to select among the prediction points those that are most likely to be part of the worst scenarios set. They are then added to the design points, and evaluated by using inner-level simulations, after which the metamodel is updated. 

These methods are very smart but neglect one important point for practitioners: the cost of launching a pricer is high, as it typically entails instanciating thousands of objects at initialization, as well as volatility surface calibrations and sometimes even graphical interfaces. Furthermore, these pricers usually do not have the flexibility to add dynamically, at each inner-level pricing, new paths to a given scenario. Therefore, we do not allow ourselves to adapt our strategies at such a level of granularity. 

Instead, we will consider strategies that only entail $L$-levels of sets of simulations, where $L$ is typically quite low, so as not to pay too many times the overhead of launching the pricer and/or calibrating the required volatility surfaces. 
We also do not use any concept of distance between scenarios induced by their risk factors. Although this enables \cite{liu2010stochastic} and \cite{risk2018sequential} to obtain better empirical convergence rates, we see at least one problem with this approach: at the scale of a bank, the space of risk factors is both of a very high dimension (a few thousands) and with a very complex geometry (the payoffs of the portfolio's derivative products are usually non-convex, and path-dependent), so that it is very difficult to establish a model describing the proximity of scenarios in a robust way. \\

We thus study a relatively simple procedure that also has the advantage of allowing us to establish non-asymptotic bounds on the $\mathbb{L}^{p}$-error of our estimator, in the spirit of the simplest ranking by mean procedures, see e.g.~\cite{bahadur1950problem,bechhofer1954single,bechhofer1954tow,gupta1991sequential}. It consists in using a first set of simulated paths to provide a crude estimation of the impact factors $\mu^{i}$. These first estimators are ranked to select the $q_{1}<\Ns$ outer-level scenarios with the highest estimated impact values. Then, only the impact values of these $q_{1}$ pre-selected scenarios are estimated again by using the previous estimators together with a new set of simulated paths. Among these new estimators we select the scenarios with the $q_{2}<q_{1}$ highest estimated impact factors. And so on. After $L\ge 2$ steps, $L$ being small in practice, we just keep the mean of the six highest estimated impacts. 

The rationale behind this is that a first crude estimation should be sufficient to rule out a large part of the scenarios from the candidates of being in the $6$ worst ones, because the corresponding values should be far enough. While the number of candidates reduces, one can expect that the differences between the corresponding impacts diminish as well and that more Monte Carlo simulations are needed to differentiate them. Under an indifference zone hypothesis, similar to the one used in the above mentioned paper, and a sub-gamma distribution assumption, the convergence is exponential in the number of simulations used at the different steps and of order $1/2$ in the total number of simulations. See Proposition \ref{prop: borne Lp generale} and Corollary \ref{cor: borne si sub gamma} below. 

The optimal number of additional paths that should be used at each step to minimize the strong estimation error, given a maximal computational cost, can be determined by a simple dynamic programming algorithm, that can be solved off-line, see Section \ref{subsec: DPP determinist}. In theory, this requires the a priori knowledge of the means and covariances of our estimators, which are obviously not known in practice. However, one can easily define a version based on a robust specification of the error. One can also take advantage of the different simulation sets to improve our prior on the true hidden parameters. This leads to a natural adaptative algorithm, see Section \ref{sec : adaptative}, for which convergence is also proved, see Proposition \ref{prop: convergence cas strat random}. Estimating the optimal policy associated to this adaptative algorithm is costly but can be done off-line by using a neural network approximation combined with a backward dynamic programming algorithm. We explain how this can be done in Section \ref{subsec: neural network} (further details are in Appendix \ref{sec: precise implementation}). 

The rest of the paper is organized as follows. Section \ref{sec: cas determinist} is dedicated to the most naive deterministic algorithm. In particular, Section \ref{sec: 2 Level} gives a very easy to use two levels algorithm for the case where the impacts decrease linearly in the scenarios' rank order. The adaptative version of the algorithm is presented in Section \ref{sec : adaptative}. Finally, we perform first numerical tests in Section \ref{sec: tests numeriques}. 
 
\section{Algorithm with a deterministic effort allocation}\label{sec: cas determinist}

In this section, we describe the simplest version of the algorithm. It uses a pre-defined deterministic number of simulations. We establish a strong $\Lb^{p}$-error bound and discuss several ways of choosing the optimal strategy for minimizing this error.

\subsection{The algorithm}\label{subsec: deterministic algorithm}

From now on, we assume that $\Eb[P|S=\sr^{0}]$ is known perfectly and set it to $0$ for ease of notations. As explained above, the algorithm relies on the idea of selecting progressively the scenarios that will be used for the computation of the Expected Shortfall. Namely, let $P_{|\rm s}:=(P_{|\rm s^{i}})_{i\le \Ns}$ be a $\Ns$-dimensional random variable such that each $P_{|{\rm s}^{i}}$ has the law of $P$ given $S={\rm s}^{i}$. 
We first simulate independent copies $(P_{j}^{1}, \ldots, P_{j}^{\Ns})_{j\ge 1}$ of $P_{|\rm s}$ and compute the Monte Carlo estimators of $\Eb[P|S=\sr^{i}]$, $i\le \Ns$: 
$$
\hat \mu^{i}_{1}:=\frac1{\Nmc_{1}}\sum_{j=1}^{\Nmc_{1}} P_{j}^{i}\;\mbox{ for } i\le \Ns,
$$ 
for some $\Nmc_{1}\ge 1$.
Among these random variables, we then select the ones that are the most likely to coincide with the worst scenarios $\sr^{1},\ldots,\sr^{\Nw}$, for some $1\le \Nw<\Ns$. 
To do this, one considers the (random) permutation $\mk_{1}$ on $[\![1,\Ns]\!]$ such that the components of $\left(\hat \mu_{1}^{\mk_{1}(i)}\right)_{i\le \Ns}$ are in decreasing order:
\begin{equation*} 
\left\{
\begin{array} {l}
\hat \mu_{1}^{\mk_{1}(1)} \ge  \hat \mu_{1}^{\mk_{1}(2)} \ge \ldots \ge  \hat \mu_{1}^{\mk_{1}(\Ns)},  
\\
\mk_{1}(i)<\mk_{1}(i')\;\mbox{ if } \hat \mu_{1}^{\mk_{1}(i)} =  \hat \mu_{1}^{\mk_{1}(i')}\mbox{ for } 1\le i< i'\le \Ns,
\end{array}\right.
\end{equation*}
and only keep the indexes $(\mk_{1}(\ell))_{\ell \le q_{1}}$ of the corresponding $q_{1}\ge \Nw$ highest values, i.e.~the indexes belonging to 
$$
 \Ik_{1}:= \Ik_{0}\cap   \mk_{1}( [\![1,q_{1}]\!])\;\mbox{ in which }  \Ik_0:= [\![1,\Ns]\!].
$$ 
We then iterate the above procedure on the scenarios in $\Ik_{1}$ and so on. Namely, we fix $L\ge 1$ different thresholds $(q_{\ell})_{\ell=0,\ldots,L {-1}}$ such that 
\begin{align}\label{eq; cond sur q}
 {\Nw=: q_{L-1}}  {\le }\cdots {\le }q_{0}:=\Ns.
\end{align}
Assuming that $  \Ik_{\ell-1}$ is given, for some $1\le \ell-1\le L- {1}$, we compute the estimators\footnote{Note from the considerations below that only the elements $(\hat \mu^{i}_{\ell})_{i\in  \Ik_{\ell-1}}$ are needed in practice, the others are only defined here because they will be used in our proofs.}  
\begin{equation}\label{eq:mu_hat_def}
\hat \mu^{i}_{\ell}:=\frac1{\Nmc_{\ell}}\sum_{j=1}^{\Nmc_{\ell}} P_{j}^{i}\;\mbox{ for } i\le \Ns, 
\end{equation}
for some $\Nmc_{\ell}\ge \Nmc_{\ell-1}$.
If $\ell\le L-1$, we consider the (random) permutation $\mk_{\ell}:[\![1,q_{\ell-1}]\!]\mapsto \Ik_{\ell-1}$ such that the components of $\left(\hat \mu_{\ell}^{i}\right)_{i\in   \Ik_{\ell-1}  }$ are in decreasing order
\begin{equation}\label{eq: def m}
\left\{
\begin{array} {l}
\hat \mu_{\ell}^{\mk_{\ell}(1)} \ge \hat \mu_{\ell}^{\mk_{\ell}(2)} \ge \ldots \ge \hat \mu_{\ell}^{\mk_{\ell}(q_{\ell-1})},
\\
\mk_{\ell}(i)<\mk_{\ell}(i')\;\mbox{ if } \hat \mu_{\ell}^{\mk_{\ell}(i)} =  \hat \mu_{\ell}^{\mk_{\ell}(i')}\mbox{ for } 1\le i< i'\le \Ns,
\end{array}\right.
\end{equation}
and only keep the elements in 
$$
\Ik_{\ell}:=   \Ik_{\ell-1} \cap \mk_{\ell}([\![1,q_{\ell}]\!])
$$
for the next step. If $\ell=L$, we just compute the final estimator of the $\ES$ given by 
\begin{equation*}
\ESh := \frac{1}{\Nw} \sum_{i=1}^{\Nw} \hat \mu_{L}^{\mk_{ {L-1}}(i)}= \frac{1}{\Nw} \sum_{i\in   \Ik_{ {L-1}}} \hat \mu_{L}^{i}.
\end{equation*}
 
Note that only the $L-1$-first steps are used to select the worth scenarios, the step $L$ is a pure Monte Carlo step. Again, the general idea is to reduce little by little the number of candidate scenarios to be part of the worst ones. As the number of candidates diminishes, one increases the number of simulated paths so as to reduce the variance of our Monte Carlo estimators and be able to differentiate between potentially closer true values of the associated conditional expectations. \\
 
 \begin{remark} Note that, given $j$, we do not assume that the $P^{i}_{j}$, $i\le \Ns$, are independent. The simulations associated to different scenarios are in general not independent. Moreover, the $\hat \mu^{i}_{\ell}$, $\ell \le L$, use the same simulated paths, only the number of used simulations changes. Both permit to reduce the computational cost, by allowing the use of the same simulations of the underlying processes across scenarios and steps.
 \end{remark}
 

\subsection{General a-priori bound on the $\Lb^{p}$ error}
In this section, we first provide a general $\Lb^{p}$ estimate of the error. A more tractable formulation will be provided in Corollary \ref{cor: borne si sub gamma} under an additional sub-gamma distribution assumption. 
\vs2

From now on, we assume that $P_{|{\rm s}} \in \Lb^{p}$ for all $p\ge 1$, and we use the notations\footnote{The element $q_{L}$ and $\Nmc_{0}$ are defined for notational convenience, they never appear in our algorithm. To fix ideas, they can be set to $q_{L}=\Nw$ and $N_{0}=0$ all over this paper.} 
\begin{align}
&q:=( {q_{0}}, q_{1},\ldots,q_{ {L}}) \;\mbox{ , }\;\Nmc=(\Nmc_{0},\Nmc_{1},\ldots,\Nmc_{L})\nonumber\\
 &\delta q_{\ell}:=q_{\ell-1}-q_{\ell}\;\mbox{ and }\; \delta N_{\ell}:=N_{\ell}-N_{\ell-1} \mbox{, for } 1\le \ell\le L, \label{eq: def delta N}\\
  &\delta \hat \mu^{i}_{\ell}:=\frac{\sum_{j=N_{\ell-1}+1}^{N_{\ell}} P^{i}_{j}}{\delta \Nmc_{\ell}}=\frac{\Nmc_{\ell}\hat \mu^{i}_{\ell}-\Nmc_{\ell-1} \hat \mu^{i}_{\ell-1}}{\delta \Nmc_{\ell}}, \mbox{ for $1 \leq i \leq \Ns$,}\label{eq : def delta hat mu i}
\end{align} 
with the convention $0/0=0$. 

\begin{proposition}{\label{prop: borne Lp generale}} For all $p\ge 1$, 
\begin{align}
    \mathbb{E}\left[ \left|\ES - \ESh\right|^p \right]^{\frac{1}{p}} \leq & 
    \sum_{\ell=1}^{L-1} (\delta q_{\ell})^{\frac{1}{p}} \max_{(i,k)\in [\![1, {\Nw}]\!]\times [\![q_{\ell}+1,\Ns]\!]}  (\mu^{i}-\mu^{k})\Pb[\hat \mu_{\ell}^{k}>\hat \mu_{\ell}^{i}]^{\frac{1}{p}} \nonumber\\
    &+ \frac{1}{\Nw} \frac{\delta N_{L}}{N_{L}} \underset{1 \leq i_1 <\cdots < i_{ {\Nw}} \leq \Ns}{\max} 
    \left( \sum_{j=1}^{ {\Nw}}  \Eb\left[\left| \delta \hat \mu^{i_j}_{L} -  \mu^{i_j} \right|^p\right]^{\frac{1}{p}} \right)   \label{eq: borne Lp generale}\\
    &+ \frac{1}{\Nw} \frac{N_{L-1}}{N_{L}} \sum_{i=1}^{ {\Ns}}  \Eb\left[\left|  \hat \mu_{L-1}^{ {i}} - \mu^{ {i}} \right|^p\right]^{\frac{1}{p}} . \nonumber
\end{align}
\end{proposition}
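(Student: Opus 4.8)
The strategy is to decompose the error $\ES - \ESh$ into two pieces: a \emph{selection error} coming from the possibility that, over the first $L-1$ steps, one of the true worst scenarios $\sr^{1},\ldots,\sr^{\Nw}$ got eliminated in favor of a non-worst scenario, and a \emph{Monte Carlo error} coming from the fact that, even when the correct set $\Ik_{L-1} = [\![1,\Nw]\!]$ is selected, $\hat\mu_L^i$ is only an estimator of $\mu^i$. I would introduce the event $A := \{\Ik_{L-1} = [\![1,\Nw]\!]\}$ that the selection is correct, write
\[
\ES - \ESh = (\ES - \ESh)\1_{A} + (\ES - \ESh)\1_{A^c},
\]
and estimate the $\Lb^p$-norm of each term, using the triangle inequality in $\Lb^p$ to split the final bound into the three summands of \eqref{eq: borne Lp generale}.

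For the term on $A$, on this event $\ESh = \frac1{\Nw}\sum_{i=1}^{\Nw}\hat\mu_L^i$, so $(\ES-\ESh)\1_A = \frac1{\Nw}\sum_{i=1}^{\Nw}(\mu^i - \hat\mu_L^i)\1_A$. I would write $\hat\mu_L^i = \frac{\delta N_L}{N_L}\delta\hat\mu_L^i + \frac{N_{L-1}}{N_L}\hat\mu_{L-1}^i$, subtract and add $\mu^i$ inside each group, and apply Minkowski's inequality; dropping the indicator $\1_A$ (bounding it by $1$) and maximizing over the choice of $\Nw$ indices yields exactly the second and third lines of \eqref{eq: borne Lp generale}. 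The third line keeps the full sum over $i\le\Ns$ rather than a max because on $A^c$ we have no control over which indices survive, so after removing $\1_A$ one must allow any of the $\Ns$ scenarios.

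The heart of the argument is the selection error $(\ES-\ESh)\1_{A^c}$, and this is where I expect the main obstacle. The key combinatorial observation is that $A^c = \bigcup_{\ell=1}^{L-1} B_\ell$ where $B_\ell$ is the event that some true-worst index $i\le\Nw$ that is still alive in $\Ik_{\ell-1}$ fails to be kept in $\Ik_\ell$, i.e. gets overtaken at step $\ell$ by at least $\delta q_\ell$ scenarios — in particular overtaken by some scenario $k$ with $k > q_\ell$ (using the ordering \eqref{eq: ordre mu} and the fact that $\Ik_{\ell-1}\supseteq\{k'\le q_\ell\}$ on the relevant event, so if $i$ is ranked below position $q_\ell$ at step $\ell$ there must be a genuinely worse-ranked true index above it). On $B_\ell$ the estimator $\ESh$ omits $\mu^i$ and includes instead some $\mu^k$ with $k\ge q_\ell+1$, creating an error contribution controlled by $\mu^i - \mu^k$. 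I would bound $\Eb[|(\ES-\ESh)\1_{B_\ell}|^p]^{1/p}$ by $\max_{i\le\Nw,\,k\ge q_\ell+1}(\mu^i-\mu^k)\,\Pb[B_\ell']^{1/p}$ where $B_\ell'$ is the event $\{\hat\mu_\ell^k > \hat\mu_\ell^i\}$ for the relevant pair; the factor $(\delta q_\ell)^{1/p}$ arises from a union bound over the at most $\delta q_\ell$ scenarios that could have displaced $i$ at that step, or equivalently from counting the number of "inversions" needed. Making this counting rigorous — precisely identifying, on $B_\ell$, a pair $(i,k)\in[\![1,\Nw]\!]\times[\![q_\ell+1,\Ns]\!]$ with $\hat\mu_\ell^k>\hat\mu_\ell^i$ and getting the combinatorial constant $\delta q_\ell$ rather than something larger — is the delicate step; everything else is Minkowski plus bounding indicators by one.
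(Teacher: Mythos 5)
There is a genuine gap in how you treat the term on $A^{c}$. Your decomposition by the event $A=\{\Ik_{L-1}=[\![1,\Nw]\!]\}$ does not separate the selection error from the Monte Carlo error: on $A^{c}$ (or on your $B_{\ell}$), the estimator is $\ESh=\frac1\Nw\sum_{i\in\Ik_{L-1}}\hat\mu_{L}^{i}$, not $\frac1\Nw\sum_{i\in\Ik_{L-1}}\mu^{i}$, so $(\ES-\ESh)\1_{B_{\ell}}$ still contains the final-step noise $\hat\mu_{L}^{i}-\mu^{i}$ for every surviving index. Consequently the claimed bound
\begin{equation*}
\Eb\bigl[\,\bigl|(\ES-\ESh)\1_{B_{\ell}}\bigr|^{p}\bigr]^{\frac1p}\le \max_{(i,k)}(\mu^{i}-\mu^{k})\,\Pb[\hat\mu_{\ell}^{k}>\hat\mu_{\ell}^{i}]^{\frac1p}
\end{equation*}
is false in general: the left-hand side includes an unbounded Monte Carlo contribution that a difference of true means times a probability cannot dominate. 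Your remark that the third line of the bound ``keeps the full sum over $i\le\Ns$'' to cover $A^{c}$ is in tension with this: as written, all Monte Carlo error is absorbed on $A$ (where it only involves $i\le\Nw$) while the $A^{c}$ piece is bounded purely by mean gaps. The paper avoids this by never conditioning on correct selection: it adds and subtracts the true values of the \emph{selected} scenarios, writing $\ES-\ESh$ as $\frac1\Nw\sum_{i\le\Nw}(\mu^{i}-\mu^{\mk_{L-1}(i)})$ plus $\frac1\Nw\sum_{i\le\Nw}(\mu^{\mk_{L-1}(i)}-\hat\mu_{L}^{\mk_{L-1}(i)})$, so the permutation part carries only mean gaps and the Monte Carlo part is controlled on the whole probability space, whatever $\Ik_{L-1}$ turns out to be. Your $B_{\ell}$ counting idea (a displaced $i\le\Nw$ forces some kept $k\ge q_{\ell}+1$ with $\hat\mu_{\ell}^{k}>\hat\mu_{\ell}^{i}$, with at most $\delta q_{\ell}$ candidates) is essentially the paper's treatment of that permutation part, but it must be applied to $\mu^{i}-\mu^{\mk_{L-1}(i)}$, not to $\ES-\ESh$ itself.

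A second, related omission: once the Monte Carlo error is taken over the random set $\Ik_{L-1}$ (as it must be after the repair above), the max over ordered $\Nw$-tuples in the second line is not free. The paper gets it by conditioning on $\mk_{L-1}$ and using that the increments $\delta\hat\mu_{L}^{i}$ are built from simulations with indices $j>N_{L-1}$, hence independent of the selection, whereas the $\hat\mu_{L-1}^{i}$ part is correlated with the selection and is therefore bounded by the full sum over $i\le\Ns$ (third line). In your plan the max appears only because on $A$ the surviving indices are deterministically $1,\dots,\Nw$; without the conditioning/independence argument you would be forced to a sum over all $\Ns$ scenarios for the $\delta\hat\mu_{L}$ term as well, which yields a strictly weaker bound than the one stated.
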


Before providing the proof of this general estimate, let us make some comments. The last two terms in \eqref{eq: borne Lp generale} are natural as they are due to the Monte Carlo error made on the estimation of the various conditional expectations that can enter, after the $(L-1)$-levels selection procedure, in the estimation of $\ES$. Note that it corresponds to the estimation errors using the cumulated number of Monte Carlo simulations $\Nmc_{L-1}$ of step $L-1$ and the number $\Nmc_{L}-\Nmc_{L-1}$ of simulations used only for the last step. In practice, these numbers should be sufficiently large. The first term involves the quantities $ \max_{(i,k)\in [\![1, {\Nw}]\!]\times [\![q_{\ell}+1,\Ns]\!]}  (\mu^{i}-\mu^{k})\Pb[\hat \mu_{\ell}^{k}>\hat \mu_{\ell}^{i} ]^{\frac{1}{p}}$ with $\ell=1,\ldots,L-1$. Each term corresponds to the situation in which an element $i\in \[{1,\Nw}$ gets out the set of selected indexes $\Ik_{\ell}$ exactly at the $\ell$-th step. In the worst situation, it is replaced by an element of index $k$ larger than $q_{\ell}$ and this can happen only if $\hat \mu_{\ell}^{k}>\hat \mu_{\ell}^{i}$. The probability of this event is controlled by the number of Monte Carlo simulations $\Nmc_{\ell}$ used at the step $\ell$ but also by the {\sl distance} between the two scenarios. More specifically, for $\ell$ small, one expects that $\Pb[\hat \mu_{\ell}^{k}>\hat \mu_{\ell}^{i} ]$ is small because the law of $P_{|{\sr_{k}}}$ is concentrated far away from where the law of $P_{|{\sr_{i}}}$ is. This quantity potentially increases with $\ell$, as we reduce the number of selected indexes. This should be compensated by an increase in the number of used Monte Carlo simulations. Otherwise stated, we expect to balance the various terms of 
\eqref{eq: borne Lp generale} by considering a suitable increasing sequence $(\Nmc_{\ell})_{\ell\le L}$. 

Obviously, \eqref{eq: borne Lp generale} implies that the algorithm converges as $N_{\ell}\to \infty$ for all $\ell\le L$, see Proposition \ref{prop: convergence cas strat random} below for a proof in a more general framework. 

\begin{proof}[Proof of Proposition \ref{prop: borne Lp generale}] 
We split the error into a permutation and a Monte Carlo error: 
\begin{align}\label{eq: decompo ES - hat ES avec hat mu}
\mathbb{E}\left[ \left|\ES - \ESh\right|^p \right]^{\frac{1}{p}} 
\le&
\Eb\left[\left|\frac{1}{\Nw} \sum_{i\le \Nw} \mu^{i} -   \mu^{\mk_{L-1}(i)} \right|^p\right]^{\frac{1}{p}} 
+ \Eb\left[\left|\frac{1}{\Nw} \sum_{i\le \Nw}  \hat \mu_{L}^{\mk_{L-1}(i)}-   \mu^{\mk_{L-1}(i)} \right|^p\right]^{\frac{1}{p}}.
\end{align}
Let us first look at the second term which corresponds to a Monte Carlo error. We have
\begin{align*}
 \Eb\left[\left|\frac{1}{\Nw} \sum_{i\le \Nw} \hat \mu_{L}^{\mk_{L-1}(i)}- \mu^{\mk_{L-1}(i)} \right|^p\right]^{\frac{1}{p}}
\leq &\frac{N_{L-1}}{N_{L}} \frac{1}{\Nw} \Eb\left[\left| \sum_{i \le \Nw} \hat \mu_{L-1}^{\mk_{L-1}(i)}- \mu^{\mk_{L-1}(i)} \right|^p \right]^{\frac{1}{p}} \\
& + \frac{N_L - N_{L - 1}}{N_{L}} \frac{1}{\Nw} \Eb \left[ \left| \sum_{i \le \Nw} \frac{\sum_{j = N_{L-1} + 1}^{N_{L}}\hat P_j^{\mk_{L-1}(i)}}{N_{L} - N_{L-1}} - \mu^{\mk_{L-1}(i)} \right|^p \right]^{\frac{1}{p}}
\end{align*}
in which 
\begin{align*}
 \Eb\left[\left|\sum_{i\le \Nw}  \hat \mu_{L - 1}^{\mk_{L - 1}(i)}-   \mu^{\mk_{L - 1}(i)} \right|^p\right]^{\frac{1}{p}} 
 &\le \sum_{i\le \Ns}  \Eb\left[\left|  \hat \mu_{L - 1}^{i}-   \mu^{i} \right|^p\right]^{\frac{1}{p}},
\end{align*} 
and
\begin{equation*}
\begin{split}
\Eb \left[\left|\sum_{i \le \Nw} \frac{\sum_{j = N_{L-1} + 1}^{N_{L}}\hat P_j^{\mk_{L-1}(i)}}{N_{L} - N_{L-1}} - \mu^{\mk_{L-1}(i)} \right|^p  \right]^{\frac{1}{p}}
& = \Eb \left[ \Eb \left[ \left| \sum_{i \le \Nw}  \delta \hat  \mu_{L}^{\mk_{L-1}(i)} - \mu^{\mk_{L-1}(i)} \right|^p \middle| \mk_{L-1} \right] \right]^{\frac{1}{p}} \\
& \leq \left(\underset{1 \leq i_1< ... <  i_{\Nw} \leq \Ns}{\max} \Eb \left[\left|  \sum_{j=1}^{\Nw} \delta \hat \mu_{L}^{i_j} -  \mu^{i_j} \right|^p \right] \right)^{\frac{1}{p}} \\
& \leq \underset{1 \leq i_1< ... <  i_{\Nw} \leq \Ns}{\max} 
 \sum_{j=1}^{\Nw}  \Eb\left[\left| \delta \hat \mu_{L}^{i_j} -  \mu^{i_j} \right|^p\right]^{\frac{1}{p}}.
\end{split}
\end{equation*}
To discuss the first term in the right-hand side of \eqref{eq: decompo ES - hat ES avec hat mu}, the permutation error, let us first define ${\rm S}_q[A]$ as the collection of the $q$ smallest elements of a set $A\subset \Nb$. If $i\in \[{1,\Nw}\cap \Ik_{\ell-1}\setminus  \Ik_{\ell}$, then $i\in {\rm S}_{q_{\ell}}[\Ik_{\ell-1}]\setminus  \Ik_{\ell} $ and therefore there exists $k_{i}\in {\mathcal R_{\ell}}:=\Ik_{\ell}\setminus {\rm S}_{q_{\ell}}[\Ik_{\ell-1}]$. Thus, on the set $\{\{i_{1},\ldots,i_{J}\}=(\Ik_{\ell-1}\setminus \Ik_{\ell})\cap  \[{1,\Nw}\}$, one can define  
$
\kk_\ell(i_{1}):=\max {\mathcal R_{\ell}} 
$
and $\kk_\ell(i_{j+1}):=\max\{ k<\kk_\ell(i_{j})~:~k\in {\mathcal R_{\ell}}\}$ for $j+1\le J$. Note that 
\begin{align}\label{eq: inclusion i in Ikell}
 \{i\in   \Ik_{\ell-1}\setminus \Ik_{\ell}\}\subset \{ \hat \mu_{\ell}^{\kk_{\ell}(i)}>\hat \mu_{\ell}^{i}\}\;\mbox{ and }\;|{\mathcal R_{\ell}}|\le q_{\ell-1}-q_{\ell},
\end{align}
since $ {\mathcal R_{\ell}} \subset \Ik_{\ell-1}\setminus {\rm S}_{q_{\ell}}[\Ik_{\ell-1}]$ and $|\Ik_{\ell-1}|= q_{\ell-1}$. Let ${\mathbf A}_{q, q'}$ denote the collection of subsets $A$ of $\[{q + 1,\Ns}$ such that $|A| = q'$. 
Then, it follows from \eqref{eq: ordre mu}, H\"older's inequality and \eqref{eq: inclusion i in Ikell} that 

\begin{equation*}
\begin{split}
\Eb \left[ \left| \frac{1}{\Nw} \sum_{i \leq \Nw} \mu^i - \mu^{\mk_{L-1}(i)} \right|^p \right]^{\frac{1}{p}}
&\leq \frac{1}{\Nw} \sum_{i \leq \Nw} \sum_{\ell=1}^{L - 1} \Eb \left[ \left| ( \mu^{i} - \mu^{\kk_{\ell}(i)})\1_{\{i\in \Ik_{\ell-1}\setminus \Ik_{\ell}\}} \right|^p \right]^{\frac{1}{p}} \\
&\leq \underset{i \leq \Nw}{\max} \sum_{\ell=1}^{L-1} \Eb \left[ \left| ( \mu^{i} - \mu^{\kk_{\ell}(i)})\1_{\{i\in \Ik_{\ell-1}\setminus \Ik_{\ell}\}} \right|^p \right]^{\frac{1}{p}} \\
&\leq \underset{i \leq \Nw}{\max} \sum_{\ell = 1}^{L - 1} \left( \max_{A\subset {\mathbf A}_{q_{\ell}, \delta q_\ell}} \sum_{k\in A}\Eb\left[ \left| ( \mu^{i}-   \mu^{k}) \right|^p\1_{\{i\in  \Ik_{\ell-1}\setminus \Ik_{\ell},\kk_{\ell}(i)=k\}} \right] \right)^{\frac{1}{p}} \\
&\leq \sum_{\ell = 1}^{L - 1} \left(\delta q_\ell \right)^{\frac{1}{p}} \max_{(i,k)\in [\![1, {\Nw}]\!]\times [\![q_{\ell}+1, {\Ns}]\!]}  (\mu^{i}-\mu^{k})\Pb[\hat \mu_{\ell}^{k}>\hat \mu_{\ell}^{i}]^{\frac{1}{p}}.
\end{split}
\end{equation*}

\end{proof}

\subsection{Error bound for Sub-Gamma distributions}

To illustrate how the general error bound of Proposition \ref{prop: borne Lp generale} can be used in practice to decide of the sequence $(q_{\ell},N_{\ell})_{\ell}$, we now consider the case where the components of $P_{|{\rm s}}$ have sub-gamma distributions, and apply Bernstein's inequality in \eqref{eq: borne Lp generale}, see e.g.~\cite[Chapter 2]{bercu2015concentration}. This requires the following assumption. 
\vspace{2mm}

\begin{assumption}\label{hyp:sub gamma} There exists $c\in \Rb_{+}$ such that the random variables $Z[i,k]:=(P_{|{\rm s}^{i}}-\mu^{i})-(P_{|{\rm s}^{k}}-\mu^{k})$, $i,k\le \Ns$, satisfy Bernstein's condition : 
\begin{equation*}
    \mathbb{E} \left[ \left| Z[i,k] \right|^p\right] 
    \leq \frac{p!\;c^{p-2}}{2} \mathbb{E} \left[   Z[i,k]^2\right],\;i,k\le \Ns, \;\mbox{ for all } p\ge 3.
\end{equation*}
\end{assumption}
From now on, we shall assume that the constant $c$ is known. It can usually be estimated in practice.

\begin{corollary}\label{cor: borne si sub gamma}
Assume that Assumption \ref{hyp:sub gamma} holds. Then, for all $p\ge 1$, 

\begin{align}\label{eq: borne Lp sub gamma}
    \mathbb{E}\left[ \left|\ES - \ESh\right|^p \right]^{\frac{1}{p}} \leq & \Frm_{p}(q,\Nmc)
\end{align}
in which 
\begin{align}
\Frm_{p}(q,\Nmc):= &
\sum_{\ell=1}^{L-1} (\delta q_{\ell})^{\frac{1}{p}} \max_{(i,k)\in [\![1, {\Nw}]\!]\times [\![q_{\ell}+1, {\Ns}]\!]}  (\mu^{i}-\mu^{k}) e^{-\frac{\Nmc_{\ell}(\mu^{i}-\mu^{k})^{2}}{ 2p(\sigma_{ik}^{2}+c(\mu^{i}-\mu^{k}))}}\nonumber \\
&+
  \frac{1}{\Nw} \frac{\delta N_L}{N_L} \underset{ 1 \leq i_1 <  ... < i_{{ {\Nw}}} \leq  {\Ns}  }{\max} \sum_{j=1}^{ {\Nw}} \left( C_{p, \sigma} \frac{p \sigma_{i_j}^p}{(\delta \Nmc_{L})^{\frac{p}{2}}} + C_{p, c} \frac{ p c^{p}}{(\delta \Nmc_{L})^{p}} \right)^{\frac{1}{p}}   \label{eq: def F} \\ 
  &+ \frac{1}{\Nw} \frac{N_{L-1}}{N_L} \sum_{i=1}^{ {\Ns}} \left( C_{p, \sigma} \frac{p \sigma_{i}^{p}}{(\Nmc_{L-1})^{\frac{p}{2}}}  + C_{p, c} \frac{p c^{p}}{(\Nmc_{L-1})^{p}} \right)^{\frac{1}{p}} \nonumber 
\end{align}
with 
\begin{equation}
\begin{split}
\begin{cases}
\sigma_{ik}^{2}:={\rm Var}[P_{|{\rm s}^{i}}-P_{|{\rm s}^{k}}] \;\mbox{ and }\; \sigma_{i}^{2}:={\rm Var}[P_{|{\rm s}^{i}}],\;i,k\le \Ns \\
C_{p, \sigma} := 2^{p-1} \Gamma \left(\frac{p}{2}\right) \; \mbox{ and } \;
C_{p, c} := 4^p \Gamma \left( p \right)
\end{cases}
\end{split}\label{eq: def Cpc Cpsig}
\end{equation}
where $\Gamma$ is the Gamma function defined by
\begin{equation*}
\Gamma\left(y\right) = \int_0^{+\infty} x^{y - 1} e^{- x} dx,\;y> 0.
\end{equation*}
\end{corollary}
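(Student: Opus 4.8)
The plan is to start from the a-priori estimate \eqref{eq: borne Lp generale} of Proposition \ref{prop: borne Lp generale} and to control, term by term, its three families of summands: the $L-1$ permutation probabilities, the ``fresh-simulations'' Monte Carlo errors $\Eb[|\delta\hat\mu^{i_j}_L-\mu^{i_j}|^p]^{1/p}$, and the ``carried-over'' errors $\Eb[|\hat\mu_{L-1}^i-\mu^i|^p]^{1/p}$. The first family will be handled by a one-sided Bernstein inequality, the other two by converting a two-sided Bernstein tail bound into a moment bound. Since the summands of \eqref{eq: borne Lp generale} are already separated, no rearrangement is needed afterwards: the three resulting estimates will assemble verbatim into $\Frm_p(q,\Nmc)$.

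For a permutation term, fix $\ell\in\{1,\ldots,L-1\}$, $i\in\[{1,\Nw}$ and $k\in\[{q_\ell+1,\Ns}$; since $q_\ell\ge q_{L-1}=\Nw$ we get $k\ge\Nw+1>i$, hence $\mu^i\ge\mu^k$ by \eqref{eq: ordre mu}. Writing $Z_j[i,k]:=(P^i_j-\mu^i)-(P^k_j-\mu^k)$, these are i.i.d.\ copies of $Z[i,k]$, centered, with $\Eb[Z[i,k]^2]=\Var[P_{|{\rm s}^i}-P_{|{\rm s}^k}]=\sigma_{ik}^2$, and satisfying Bernstein's condition with scale $c$ by Assumption \ref{hyp:sub gamma}. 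From $\Nmc_\ell\hat\mu_\ell^k-\Nmc_\ell\hat\mu_\ell^i=\Nmc_\ell(\mu^k-\mu^i)-\sum_{j=1}^{\Nmc_\ell}Z_j[i,k]$ one gets $\{\hat\mu_\ell^k>\hat\mu_\ell^i\}=\{\tfrac1{\Nmc_\ell}\sum_{j=1}^{\Nmc_\ell}(-Z_j[i,k])>\mu^i-\mu^k\}$, with $\mu^i-\mu^k\ge0$. Applying Bernstein's inequality (see e.g.\ \cite[Chapter 2]{bercu2015concentration}) to the centered i.i.d.\ sequence $(-Z_j[i,k])_j$ then yields
\begin{equation*}
\Pb[\hat\mu_\ell^k>\hat\mu_\ell^i]\le\exp\!\Big(-\frac{\Nmc_\ell(\mu^i-\mu^k)^2}{2(\sigma_{ik}^2+c(\mu^i-\mu^k))}\Big),
\end{equation*}
the case $\mu^i=\mu^k$ being harmless since the prefactor $(\mu^i-\mu^k)$ then vanishes. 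Raising to the power $1/p$ and maximising over $(i,k)\in\[{1,\Nw}\times\[{q_\ell+1,\Ns}$ reproduces the first line of $\Frm_p(q,\Nmc)$.

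For the two Monte Carlo families, observe that $\delta\hat\mu^{i_j}_L-\mu^{i_j}$ (resp.\ $\hat\mu_{L-1}^i-\mu^i$) is the empirical mean of $m:=\delta\Nmc_L$ (resp.\ $m:=\Nmc_{L-1}$) i.i.d.\ centered copies of $P_{|{\rm s}^{i_j}}-\mu^{i_j}$ (resp.\ $P_{|{\rm s}^i}-\mu^i$), a random variable of variance $\sigma_{i_j}^2$ (resp.\ $\sigma_i^2$) which, being sub-gamma, also obeys a Bernstein moment condition with scale $c$. For such a mean $\bar Y_m$, $\sigma^2:=\Var[Y_1]$, the two-sided Bernstein bound gives $\Pb[|\bar Y_m|>t]\le 2\exp\big(-mt^2/(2(\sigma^2+ct))\big)$ for $t\ge0$. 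Inserting this into $\Eb[|\bar Y_m|^p]=\int_0^{\infty}pt^{p-1}\Pb[|\bar Y_m|>t]\,dt$ and splitting the integral according to whether $ct\le\sigma^2$ (so that $\tfrac{mt^2}{2(\sigma^2+ct)}\ge\tfrac{mt^2}{4\sigma^2}$) or $ct>\sigma^2$ (so that $\tfrac{mt^2}{2(\sigma^2+ct)}\ge\tfrac{mt}{4c}$), the two pieces are — after the substitutions $u=\tfrac{mt^2}{4\sigma^2}$ and $u=\tfrac{mt}{4c}$ — bounded by multiples of $\Gamma(p/2)\,\sigma^{p}m^{-p/2}$ and $\Gamma(p)\,c^{p}m^{-p}$ respectively. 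Collecting the numerical prefactors yields $\Eb[|\bar Y_m|^p]\le C_{p,\sigma}\frac{p\sigma^{p}}{m^{p/2}}+C_{p,c}\frac{pc^{p}}{m^{p}}$ with $C_{p,\sigma},C_{p,c}$ as in \eqref{eq: def Cpc Cpsig}. Taking $m=\delta\Nmc_L$, $\sigma=\sigma_{i_j}$ in the middle term of \eqref{eq: borne Lp generale}, and $m=\Nmc_{L-1}$, $\sigma=\sigma_i$ in its last term, produces the last two lines of $\Frm_p(q,\Nmc)$, which finishes the proof.

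The only genuinely delicate step is this last tail-to-moment conversion: one has to pick the regime split cleanly and track the constants so that they come out exactly as $C_{p,\sigma}=2^{p-1}\Gamma(p/2)$ and $C_{p,c}=4^{p}\Gamma(p)$; everything else is a direct substitution into \eqref{eq: borne Lp generale} combined with Bernstein's inequality. One should also be mindful that it is the single-scenario centered variables $P_{|{\rm s}^i}-\mu^i$, and not merely the pairwise differences $Z[i,k]$ named in Assumption \ref{hyp:sub gamma}, that drive the Monte Carlo terms — this is covered by the standing sub-gamma assumption on the components of $P_{|{\rm s}}$.
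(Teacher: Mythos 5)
Your proof follows the paper's argument essentially verbatim: the permutation terms of \eqref{eq: borne Lp generale} are bounded via Bernstein's inequality applied to the centred sums of $-Z_j[i,k]$ (the power $1/p$ producing the $2p$ in the exponent), and the two Monte Carlo terms via the same tail-to-moment integration, with the identical split at $x=\sigma^2/c$ and substitutions yielding $2^{p-1}\Gamma(p/2)$ and $4^p\Gamma(p)$. The only point to watch is the factor $2$ you keep in the two-sided tail bound, which strictly gives $2C_{p,\sigma}$ and $2C_{p,c}$ rather than the constants of \eqref{eq: def Cpc Cpsig}; the paper itself silently drops this factor when bounding $\Pb[|\hat\mu^i_{L-1}-\mu^i|\ge x]$ by the one-sided form of \cite[Theorem 2.1]{bercu2015concentration}, so your argument matches the paper's up to this shared, harmless bookkeeping detail.
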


The upper-bound of Corollary \ref{cor: borne si sub gamma} has two advantages on Proposition \ref{prop: borne Lp generale}. First, the dependence on $(q_{\ell},N_{\ell})_{\ell\ge 0}$ is more explicit. It depends on unknown quantities, but we can estimate (at least rough) confidence intervals for them, see e.g.~Section \ref{subsec: DPP determinist} below. Second, as we will see in the next section, it allows one to define a tractable deterministic optimal control problem satisfying a dynamic programming principle, or even simple heuristics (see Section \ref{sec: 2 Level}), to select an appropriate sequence $(q_{\ell},N_{\ell})_{\ell\ge 0}$.\\

 \begin{proof}[Proof of Corollary \ref{cor: borne si sub gamma}] The first term in \eqref{eq: def F} is an upper-bound for the first term in the right-hand side of \eqref{eq: borne Lp generale}, see \cite[Theorem 2.1]{bercu2015concentration}. As for the two other terms in \eqref{eq: borne Lp generale}, we use the usual argument, for $i\le \Ns$,
$$
 \Eb\left[\left| \delta \hat \mu_{L}^{i}-   \mu^{i} \right|^p\right]=\int_0^\infty p x^{p-1} \Pb[|\delta \hat \mu_{L}^{i}-   \mu^{i}|\ge x]dx 
$$
and 
$$
 \Eb\left[\left|  \hat \mu_{L - 1}^{i}-   \mu^{i} \right|^p\right]=\int_0^\infty p x^{p-1} \Pb[|\hat \mu_{L - 1}^{i}-   \mu^{i}|\ge x]dx ,
$$
and then appeal to \cite[Theorem 2.1]{bercu2015concentration} again to deduce that 
\begin{align*}
 \Eb\left[\left| \delta \hat \mu_{L}^{i}-   \mu^{i} \right|^p\right]\le& \int_0^\infty p x^{p-1} e^{-\frac{\delta \Nmc_{L} x^{2}}{ 2(\sigma_{i}^{2}+cx) } } dx\\
 \le&\int_0^\infty p x^{p-1} e^{-\frac{\delta \Nmc_{L} x^{2}}{ 4 \sigma_{i}^{2}}} \1_{\{x\le \frac{\sigma_{i}^{2}}{c}\}} dx + \int_0^\infty p x^{p-1} e^{-\frac{\delta \Nmc_{L} x}{ 4 c } }  \1_{\{x>\frac{\sigma_{i}^{2}}{c}\}}dx\\ 
 \leq& \frac{p(\sigma^{2}_{i})^{\frac{p}2}}{(\delta \Nmc_{L})^{\frac{p}{2}}} \int_0^\infty y^{p-1} e^{-\frac{y^{2}}{4}}dy
 +\frac{p c^{p}}{(\delta \Nmc_{L})^{p}} \int_0^\infty y^{p-1} e^{-\frac{y}{4}}dy, \\
 \leq& \frac{p \sigma_i^p}{(\delta N_L)^{\frac{p}{2}}} 2^{p-1} \Gamma \left( \frac{p}{2} \right) + \frac{p c^p}{(\delta N_L)^p} 4^p \Gamma(p),
\end{align*}
and
\begin{align*}
 \Eb\left[\left| \hat \mu_{L - 1}^{i} - \mu^{i} \right|^p\right]\le& \int_0^\infty p x^{p-1} e^{-\frac{\Nmc_{L - 1} x^{2}}{ 2(\sigma_{i}^{2}+cx)}} dx\\
 \le&\int_0^\infty p x^{p-1} e^{-\frac{\Nmc_{L - 1} x^{2}}{ 4 \sigma_{i}^{2}}} \1_{\{x\le \frac{\sigma_{i}^{2}}{c}\}} dx + \int_0^\infty p x^{p-1} e^{-\frac{\Nmc_{L-1} x}{ 4 c } }  \1_{\{x>\frac{\sigma_{i}^{2}}{c}\}}dx\\ 
 \leq& \frac{p(\sigma^{2}_{i})^{\frac{p}2}}{(\Nmc_{L - 1})^{\frac{p}{2}}} \int_0^\infty y^{p-1} e^{-\frac{y^{2}}{4}}dy
 + \frac{p c^{p}}{(\Nmc_{L - 1})^{p}} \int_0^\infty y^{p-1} e^{-\frac{y}{4}}dy, \\
\leq& \frac{p \sigma_i^{p}}{(N_{L - 1})^{\frac{p}{2}}} 2^{p-1} \Gamma \left(\frac{p}{2} \right) + 
\frac{p c^p}{(N_{L - 1})^p} 4^p \Gamma \left( p \right). 
\end{align*}
\end{proof}
 
\begin{remark}\label{rem: c=0} {If the $(\hat \mu^{i}_{\ell})_{i\le \Ns}$ and $(\delta \hat \mu^{i}_{\ell})_{i\le \Ns}$ are Gaussian, which is the case asymptotically, then the bound of Corollary \ref{cor: borne si sub gamma} remains valid with $c=0$. This fact will be used later on for simplifying our numerical algorithms.} 
\end{remark}
 
\subsection{Optimal a-priori allocation by deterministic dynamic programming based on fixed a-priori bounds}\label{subsec: DPP determinist}

Given $\Nmc:=(\Nmc_{\ell})_{0 \le \ell\le L}$ and $q=(q_{\ell})_{0 \le \ell\le L-1}$, the total computation cost is 
$$
{\rm C}(q,\Nmc):=\sum_{\ell=0}^{L-1} q_{\ell}(\Nmc_{\ell+1}-  \Nmc_{\ell})
$$ 
with the convention $\Nmc_{0}:=0$. Let $\Nc$ denote the collection of non-decreasing sequences $\Nmc:=(\Nmc_{\ell})_{0\le \ell\le L}$ with values in $\Nb$ such that $N_{0}=0$, and let $\Qc$ denote the collections of non-increasing sequences\footnote{We write $(q_{\ell})_{0\le \ell\le L}$ for convenience also $q_{L}$ will never play any role.} $q=(q_{\ell})_{0\le \ell\le L}$ with values in $\[{\Nw,\Ns}$ satisfying \eqref{eq; cond sur q}. In this section, we fix a total effort $K>0$ and recall how $\Frm_{p}(q,\Nmc)$, as defined in \eqref{eq: def F}, can be minimized 
over the collection $\Ac$ of sequences $(\Nmc,q)\in \Nc\times \Qc$ satisfying ${\rm C}(\Nmc,q)\le K$ by using a standard dynamic programming approach. 
\vs2

 Given $(\bar q,\bar N) \in  \Qc\times \Nc$ and $ {0\le }\ell  {\le }  {L-1}$, we write
\begin{align}\label{eq:optimal_a_priori_bound_determinist_algorithm}
\Frm_{p}(\ell,\bar q,\bar N):= &
\frac{1}{\Nw} \frac{\delta \bar N_L}{\bar N_L} \underset{ 1 \leq i_1 <  ... < i_{ {\Nw}} \leq  {\Ns}  }{\max} \sum_{j=1}^{ {\Nw}} \left( C_{p, \sigma} \frac{p \sigma^{p}_{i_j}}{(\delta \bar \Nmc_{L})^{\frac{p}{2}}} + C_{p, c} \frac{p c^{p}}{(\delta \bar \Nmc_{L})^p} \right)^{\frac{1}{p}} \nonumber\\ 
&+ \frac{1}{\Nw} \frac{\bar N_{L-1}}{\bar N_{L}}  \sum_{i=1}^{ {\Ns}} \left( C_{p, \sigma} \frac{p \sigma^{p}_{i}}{(\bar \Nmc_{L - 1})^{\frac{p}{2}}} + C_{p, c} \frac{p c^{p}}{(\bar \Nmc_{L - 1})^{p}} \right)^{\frac{1}{p}} + \1_{\{\ell<L-1\}}\sum_{\ell'=\ell+1}^{L-1} f_{p}(\bar q_{\ell'},\bar q_{\ell'-1},\bar N_{\ell'}), 
\end{align}
where
$$
f_{p}(\bar q_{\ell'},\bar q_{\ell'-1},\bar N_{\ell'}):=  (\delta \bar q_{\ell'})^{\frac{1}{p}} \max_{(i,k)\in [\![1,\Nw]\!]\times [\![\bar q_{\ell'}+1,\Ns]\!]}  (\mu^{i}-\mu^{k})
      e^{-\frac{\bar N_{\ell'}(\mu^{i}-\mu^{k})^{2}}{ 2p(\sigma_{ik}^{2}+c(\mu^{i}-\mu^{k})) } }, 
$$
and define 
\begin{align*}
\hat \Frm_{p}(\ell,\bar q,\bar N)=\min_{(\bar q',\bar N')\in \Ac(\ell,\bar q,\bar N)} \Frm_{p}(\ell ,\bar q',\bar N') 
\end{align*}
where\footnote{In the following, we only write {$\Ac(0)$} for {$\ell=0$} as it does not depend on $(\bar q, \bar N)$.}
$$
\Ac(\ell,\bar q,\bar N):=\{(\bar q',\bar N')\in \Qc\times \Nc : (\bar q'_{l},\bar N'_{l})_{ {0} \le l\le \ell}= (\bar q_{l},\bar N_{l})_{ {0} \le l\le \ell}\mbox{ and } \;{\rm C}(\bar q',\bar N')\le K\} \;,\; \ell\ge {0}.
$$
 Then, the dynamic programming principle implies that 
 \begin{align*}
\hat  \Frm_{p}(\ell,\bar q,\bar N)=& \min_{(\bar q',\bar N')\in \Ac(\ell,\bar q,\bar N)}\left[\hat \Frm_{p}(\ell+1,\bar q',\bar N')+f_{p}(\bar q'_{\ell+1},\bar q_{\ell},\bar N'_{\ell+1})\right],\; \mbox{ for }0\le \ell< {L-1}.
 \end{align*}
 This reduces the search for an optimal selection of $(\bar \Nmc,\bar q)$ to $L-1$ one-step optimization problems, which is much simpler to solve than the optimization problem associated to the left-hand side of \eqref{eq: borne Lp sub gamma}. 
 \vs2
 
 In practice, the exact values of $(\mu^{i},\sigma^{2}_{i})_{i\le \Ns}$ and $(\sigma^{2}_{ik})_{i,k\le \Ns}$ are not known. However, one can consider robust versions of the above. For instance, if we know that there exists some $(\underline{\delta_{q}},\overline{\delta_{q}})_{q\le \Ns}$ and $\overline \sigma^{2}$ such that 
 \begin{eqnarray}\label{eq: a prior intervals for mu sig}
\left\{
\begin{array}{c}
0\le \underline{\delta_{q}}\le \mu^{i}-\mu^{k}\le \overline{\delta_{q}},\; (i,k)\in  [\![1,\Nw]\!]\times [\![q+1,\Ns]\!] \\
\sigma_{i}^{2} \vee\sigma_{k}^{2} \vee \sigma_{ik}^{2}\le \overline \sigma^{2},\;(i,k)\in  [\![1,\Nw]\!]\times [\![\Nw+1,\Ns]\!], 
\end{array}\right.\label{eq: bound mu sigma}
\end{eqnarray}
then one can similarly minimize the upper-bound of $\Frm_{p}$ defined as 
\begin{align*}
&
\frac{\delta \bar N_L }{\bar N_L} \left( C_{p, \sigma} \frac{p \overline \sigma^{p}}{(\delta \bar \Nmc_{L})^{\frac{p}{2}}} + C_{p, c} \frac{p c^{p}}{(\delta \bar \Nmc_{L})^{p}} \right) + \frac{\Ns}{\Nw} \frac{\bar N_{L-1}}{\bar N_L} \left( C_{p, \sigma} \frac{p \overline \sigma^{p}}{(\bar \Nmc_{L - 1})^{\frac{p}{2}}} + C_{p, c} \frac{p c^{p}}{(\bar \Nmc_{L - 1})^{p}} \right) \\
&+ \1_{\{\ell<L-1\}}\sum_{\ell'=\ell }^{L-1} \tilde f_{p}(\bar q_{\ell'},\bar q_{\ell'-1},\bar N_{\ell'}), 
\end{align*}
 with 
 $$
 \tilde f_{p}(\bar q_{\ell'},\bar q_{\ell'-1},\bar N_{\ell'}):=  
 (\delta\bar q_{\ell'})^{\frac{1}{p}} \max_{\underline{\delta}_{{\overline q_{\ell'}}}\le \delta\le \overline{\delta}_{{\overline q}_{\ell'}}}\;\delta
      e^{-\frac{\bar N_{\ell'}\delta^{2}}{ 2p(\overline \sigma^{2}+c\delta) }}.
 $$ 
 This corresponds to a worst case scenario, when only the a priori bounds $(\underline{\delta_{q}},\overline{\delta_{q}})_{q\le \Ns}$ and $\overline \sigma^{2}$ are known. 
 In the above, one can also impose that $q$ takes values in a given subset of $Q$ of $\Qc$. In this case, we will only need to know $(\underline{\delta_{q}},\overline{\delta_{q}})_{q\in \bar Q}$. 
\vs2

We refer to Section \ref{sec: tests numeriques} below for numerical tests that show that such an algorithm seems to perform pretty well. Note that the optimization can be done off-line. 

\subsection{Simplified 2-levels algorithm for a linear indifference zone's size}\label{sec: 2 Level}

Inspired by \cite{bahadur1950problem,bechhofer1954single,bechhofer1954tow,gupta1991sequential}, we assume here that we know the value of a constant $\delta_{0}>0$ such that the impacts of the $\Nw$ worst scenarios have values that are separated by at least $(k - \Nw) \delta_0$ from the $k$-th worst scenario, for $k>\Nw$: 
\begin{equation}\label{assumption:5}
   \mu^{ {\Nw}} - \mu^{ {k}} \geq \left(k - \Nw \right) \delta_0,\; \forall\; k \in [\![\Nw + 1, \Ns]\!].
\end{equation}

To illustrate this, we plot on Figures \ref{Fig: delta0 1}-\ref{Fig: delta0 4} the curves $k\mapsto  |\mu^{ {\Nw}} - \mu^{ {k}}|$ for different formerly used test books of Natixis. 
We see that they are more flat on the interval $[100,120]$, so that a rather conservative value would be the minimum (over the different books) of $(\mu^{100} - \mu^{120})/20$. {Another choice in practice could be to take the ratio $ (\mu^{\Nw} - \mu^{100})/(100 - \Nw)$ which amounts to considering only the first part of the curve, and neglecting points that are anyway far from the worst scenarios.}
\begin{figure}[H]
\begin{minipage}[t]{0.5\linewidth}
\centering
\includegraphics[width=\linewidth]{./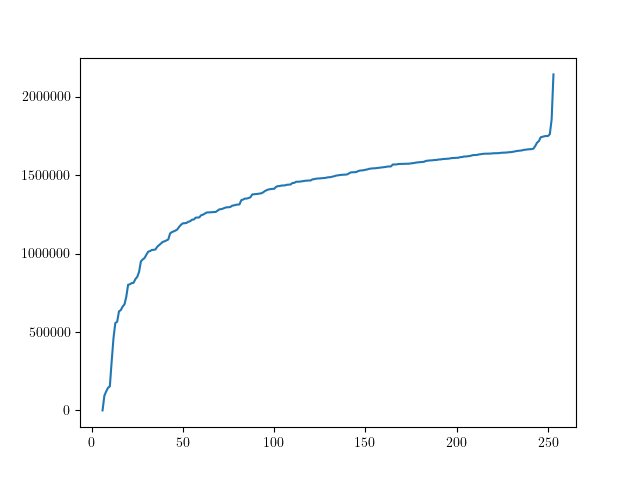}
\caption{ $k\mapsto  |\mu^{ {\Nw}} - \mu^{ {k}}|$ for   Book \#1}\label{Fig: delta0 1}
\end{minipage}
\begin{minipage}[t]{0.5\linewidth}
\centering
\includegraphics[width=\linewidth]{./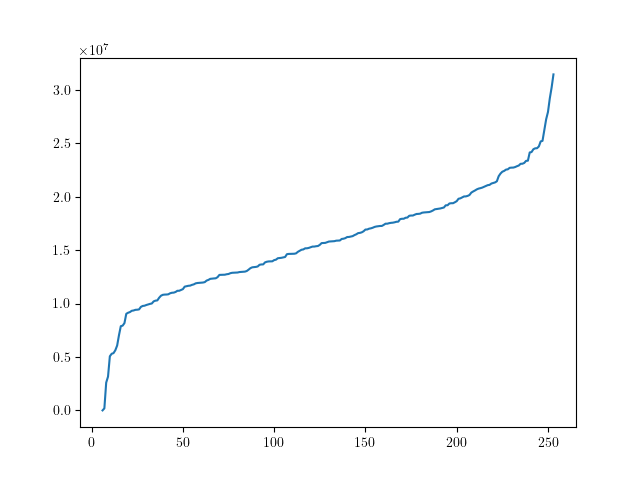}
\caption{ $k\mapsto  |\mu^{ {\Nw}} - \mu^{ {k}}|$ for   Book \#2}\label{Fig: delta0 2}
\end{minipage}
\end{figure}
\begin{figure}[H]
\begin{minipage}[t]{0.5\linewidth}
\centering
\includegraphics[width=\linewidth]{./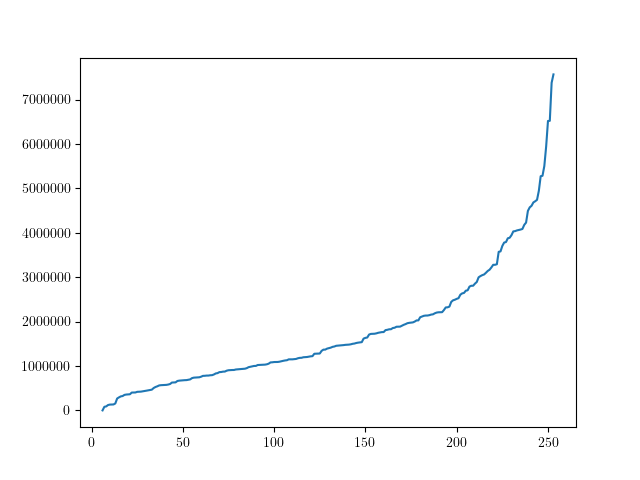}
\caption{ $k\mapsto  |\mu^{ {\Nw}} - \mu^{ {k}}|$ for   Book \#3}\label{Fig: delta0 3}
\end{minipage}
\begin{minipage}[t]{0.5\linewidth}
\centering
\includegraphics[width=\linewidth]{./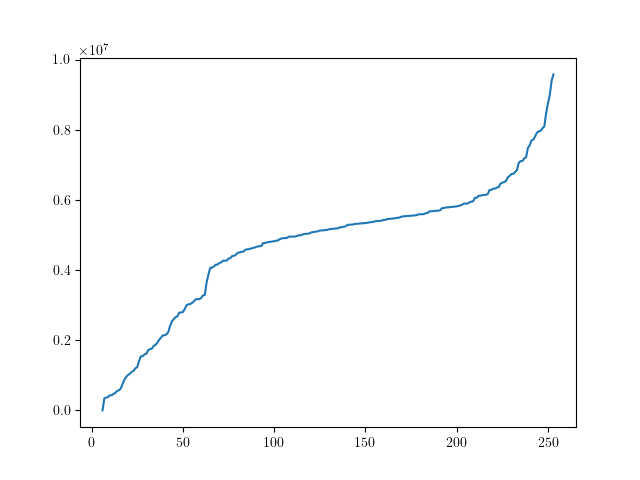}
\caption{ $k\mapsto  |\mu^{ {\Nw}} - \mu^{ {k}}|$ for   Book \#4}\label{Fig: delta0 4}
\end{minipage}
\end{figure}

We now consider a simplified version of the algorithm of Section \ref{subsec: deterministic algorithm} where we only do one intermediate ``fast pricing'' (meaning $N_{1}$ rather small) and one final ``full pricing" (meaning $N_{2}$ large). In theory, this corresponds to $L=3$ with $q_{2}
=\Nw$, $\delta N_{3}=0$ and $\delta N_{2}\to \infty$. As $\delta N_{2}\to \infty$, the second and third terms in \eqref{eq: def F} vanish, as well as the component of the first term corresponding to $\ell=2$. We therefore neglect them. In practice, we only take $N_{2}$ large enough (and given) from the point of view of the bank, and minimize over $(q_{1},N_{1})$ the remaining term in \eqref{eq: def F}:
\begin{align*} \label{simplified_2_steps_error} 
F_{1}^{\infty}(q_{1}):= (\Ns - q_{1})^{\frac{1}{p}} \max_{(i,k)\in [\![1,\Nw]\!]\times [\![q_{1}+1,\Ns]\!]}  (\mu^{i}-\mu^{k}) e^{-\frac{\Nmc_{1}(\mu^{i}-\mu^{k})^{2}}{ 2p(\overline \sigma^{2}+c(\mu^{i}-\mu^{k})) } },  
\end{align*}
in which $\bar \sigma$ is estimated to be as in \eqref{eq: a prior intervals for mu sig}, 
under the computation cost constraint 
$$
{\rm C}\left(N_1, q_1 \right) = q_1 (N_2 - N_1) + \Ns N_1 \le K
$$
for some given maximal cost $K\in {\mathbb N}^{*}$.  

{For $N_{1}$ (or $K$) large enough, the condition \eqref{assumption:5} leads} to minimizing over $q_1 \in [\![\Nw,\Ns]\!]\cap [1,K/N_{2}]$ {the upper-bound}
\begin{equation}\label{simplified_2_steps_optimization}
  h^{p}_0(q_1):=  (\Ns-q_1)^{{\frac1p}} \times 
(q_1+1-\Nw) \delta_0 \exp \left(- \frac{\left( K - q_1 N_2 \right) (q_1+1-\Nw)^2 \delta_0^2}{2  p (\Ns  {-q_{1}}) \left( \overline \sigma^{2} + c \left(q_1+1-\Nw\right) \delta_0 \right)} \right).
\end{equation}

 The optimal $q_{1}^{*}$ can then be found easily by performing a one-dimensional numerical minimization. {Upon replacing $\Ns  {-q_{1}}$ by $\Ns$ in the denominator of the exponential term, which provides a further upper-bound, the optimum can even be computed explicitly, see Appendix \ref{sec: h1 explicit}.} This provides a very easy to use algorithm.

Considering the case $p=1$, let us now perform first numerical tests to see if the proxy based on $h_{0}^{1}$ is far from $F^{\infty}_1$. We use the parameters of Tables \ref{tab:sample_books_params} and \ref{tab:computing_power} below and $\mu^{i}=-i\delta_{0}$, $i\le \Ns$, where $\delta_{0}:= (\mu^{\Nw} - \mu^{100})/(100 - \Nw)$ for the $\mu^i$s of Figure \ref{fig:book_sample_S1_raw_mu}. In particular, we take $c=0$, see Remark \ref{rem: c=0}.
 
In Figure \ref{fig: comp F et q 2-level bis}, the two increasing curves show the optimum $q_1^*$ (right axis) as found when applying the deterministic dynamic programming algorithm (dashed line) of Section \ref{subsec: DPP determinist} associated to the real sample book curve of Figure \ref{fig:book_sample_S1_raw_mu}, and the heuristic (solid line) based on \eqref{simplified_2_steps_optimization}. The two decreasing curves show the corresponding $F^{\infty}_1(q_1^*)$ (left axis) found when applying the deterministic dynamic programming algorithm (dotted line) and the heuristic (dashdot line). We see that the heuristic and the real minimizer are extremely close. {The noise in the lines associated to the dynamic programming algorithm are due to grid effects.}

\begin{figure}[H]
  \centering
  \includegraphics[width=0.6\linewidth]{./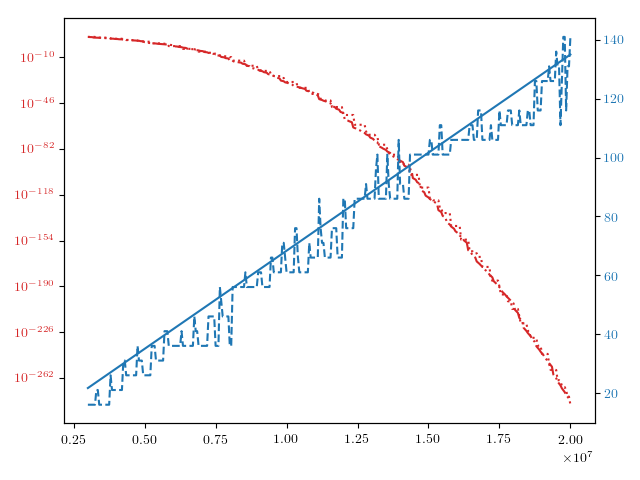}
  \caption{$q_1^*$ vs $K$ for the distribution of Figure \ref{fig:book_sample_S1_raw_mu}}\label{fig: comp F et q 2-level bis}
\end{figure}

\begin{table}[H]
\centering
\begin{tabular}{ |c|c|}
\hline
$\delta_0$ & 2 766\\
$c$ & 0 \\
$\bar \sigma$ & $\sqrt{2 (1 - \rho)} \times \, 2\,200\,000$ \\
$\rho $ & $0.6$ \\
$\Ns $ & 253 \\
$\Nw$ & 6 \\
\hline
\end{tabular}
\caption{ {Sample Book Parameters}}
\label{tab:sample_books_params}
\end{table}

\begin{table}[H]
\centering
\begin{tabular}{ |c|c|}
\hline
$K$ & $10^7$\\
$N_2$ & $10^5$ \\
\hline
\end{tabular}
\caption{ {Computing Power}}
\label{tab:computing_power} 
\end{table}

 \begin{figure}[H]
  \centering
  \includegraphics[width=\linewidth]{./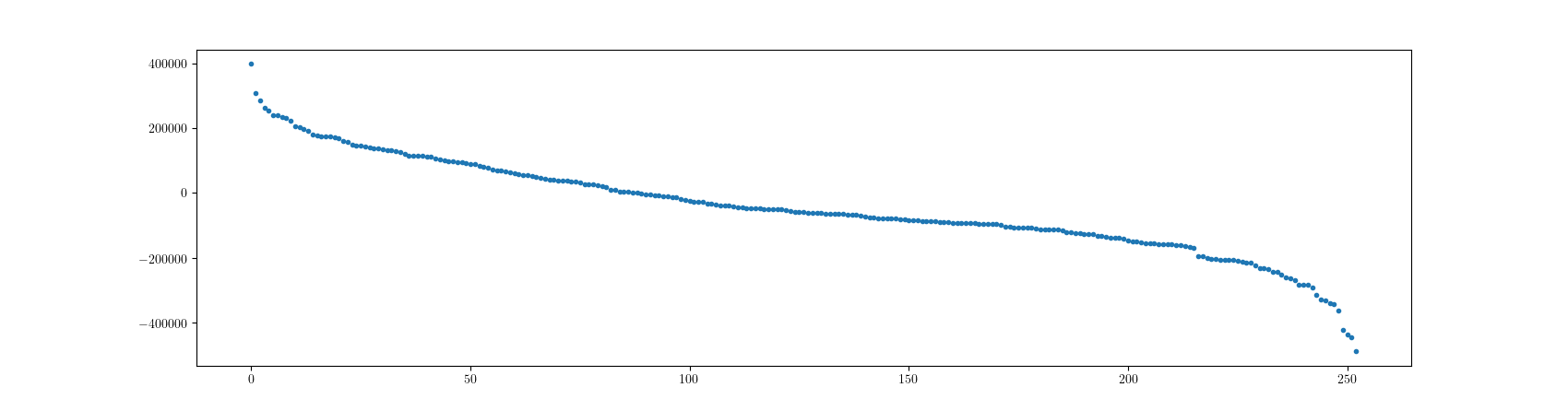}
  \caption{Sample book distribution : $i \mapsto \mu^i$ for $i\le \Ns$}
  \label{fig:book_sample_S1_raw_mu}
\end{figure}

\section{Adaptative algorithm}\label{sec : adaptative}

Although the true value $\theta_{\circ}=(\mu_{\circ},\Sigma_{\circ})$ of the vector of means and of the covariance matrix of $P_{|\rm s}$ are unknown, we can set on it a prior distribution, e.g.~based on previous Monte Carlo experiments, rather than just working on robust bounds as in the end of Section \ref{subsec: DPP determinist}. Since the estimation of $\ES$ uses Monte Carlo simulations of $P_{|\rm s}$, the knowledge of these quantities can be improved along the different steps $\ell$ of our estimation procedure. This suggests an adaptative algorithm for the optimization of the numerical effort allocation, in which we learn progressively the true value of these parameters, or part of them.
From now on, we therefore view the true value of the parameters as a random variable $\tilde \theta:=(\tilde \mu,\tilde \Sigma)$ on which a prior law $\nu_{0}$ is set. At each step $\ell$, new Monte Carlo simulations will allow us to update this prior, and our strategy for the next steps accordingly. 

\subsection{Error bounds and convergence for predictable strategies}

Let us first adapt the proof of Proposition \ref{prop: borne Lp generale} and Corollary \ref{cor: borne si sub gamma} to the case where controls are not deterministic but stochastic processes. Given a stochastic process $\alpha$ with values in $\Qc\times \Nc$, we set $(q^{\alpha},N^{\alpha}):=\alpha$ where $q^{\alpha}$ and $N^{\alpha}$ are respectively $\Qc$ and $\Nc$-valued. We then define $\hat \mu^{\alpha}=(\hat \mu^{\alpha}_{\ell})_{\ell \le L}$, $(\Ik^{\alpha}_{\ell},\mk^{\alpha}_{\ell})_{\ell \le L}$ as in Section \ref{subsec: deterministic algorithm} except that we see $\hat \mu^{\alpha}_{\ell}$ as a $q^{\alpha}_{\ell}$-dimensional random variables with entries given by $(\hat \mu^{\alpha,i}_{\ell})_{i\in \Ik^{\alpha}_{\ell}}$. We use the same convention for $\delta \hat \mu^{\alpha}_{\ell}$, recall \eqref{eq : def delta hat mu i}. We say that $\alpha$ is admissible if it is predictable with respect to $(\Fc^{\alpha}_{\ell})_{\ell \le L}$ in which $\Fc^{\alpha}_{0}$ is trivial and $\Fc^{\alpha}_{\ell}=\Fc^{\alpha}_{\ell-1}\vee \sigma(P_{j}^{i}, (i,j)\in \Ik^{\alpha}_{\ell}\times \[{1,N^{\alpha}_{\ell}} )$. We call $\Ac^{\rm ad}$ the collection of such processes. Then, one defines 
$$
\ESh^{\alpha}:=\frac{1}{\Nw} \sum_{i=1}^{\Nw} \hat \mu^{\alpha,\mk_{L - 1}^{\alpha}(i)}_{L},\;\alpha \in \Ac^{\rm ad}.
$$ 
The true value of the expected shortfall is now also written as a random variable
$$
\widetilde \ES:=\frac1\Nw \sum_{i=1}^{\Nw} \tilde \mu^{\tilde \mk(i)}, 
$$ 
in which $\tilde \mk$ is the random permutation such that 
\begin{equation*} 
\left\{
\begin{array} {l}
\tilde \mu^{\tilde \mk(1)} \ge  \tilde \mu^{\tilde \mk(2)} \ge \ldots \ge  \tilde \mu^{\tilde \mk(\Ns)},  
\\
\tilde \mk(i)<\tilde \mk(i')\;\mbox{ if } \tilde \mu^{\tilde \mk(i)} = \tilde \mu^{\tilde \mk(i')}\mbox{ for } 1\le i< i'\le \Ns.
\end{array}\right.
\end{equation*}

We let $\Mc$ be a collection of laws on $\Rb^{\Ns}\times \Sb^{\Ns}$, where $\Sb^{\Ns}$ denotes the collection of covariance matrices of size $\Ns$. Given $\nu \in \Mc$, we denote by $\Eb^{\nu}$ the expectation operator given that $\tilde \theta$ admits the law $\nu$. When $\nu$ is a Dirac mass, we retrieve the situation of Section \ref{sec: cas determinist} (up to re-ordering in a deterministic way the components of $\mu$).

We first provide a natural extension of Proposition \ref{prop: borne Lp generale}. 

\begin{proposition}{\label{prop: borne Lp generale controle aleatoire}} For all $p\ge 1$, $\nu \in \Mc$, and $\alpha\in\Ac^{\rm ad}$,  
\begin{align}
    \mathbb{E}^{\nu}\left[ \left|\widetilde \ES - \ESh^{\alpha}\right|^p \right]^{\frac{1}{p}} \leq & 
     \frac{1}{\Nw} \Eb^{\nu}\left[\left| \sum_{i\in \Ik_{L - 1}^{\alpha}} \hat \mu_{L}^{\alpha,i}-  \tilde \mu^{{i}} \right|^p\right]^{\frac{1}{p}} \label{eq: borne Lp generale controle aleatoire} \\
   +&  { \sum_{\ell=1}^{L - 1} \Eb^{\nu}\left[\delta q^{\alpha}_{\ell} \max_{(i,k)\in  \tilde \mk^{\alpha}_{\ell-1}(\[{1,\Nw}\times [\![q^{\alpha}_{\ell}+1,q^{\alpha}_{\ell-1}]\!])}  (\tilde \mu^{i}-\tilde \mu^{k})^p \Pb^{\nu}[\hat \mu_{\ell}^{\alpha,k}>\hat \mu_{\ell}^{\alpha,i} |\Fc^{\alpha}_{\ell-1}\vee \sigma(\tilde \theta)] \right]^{\frac{1}{p}}},\nonumber
\end{align}
with the convention $\max_{\emptyset}=0$ {and in which $\tilde \mk^{\alpha}_{\ell-1}$ is defined as $\tilde \mk$ but on the subset $\Ik^{\alpha}_{\ell-1}$ instead of $\Ik^{\alpha}_{0}=[\![1,\Ns]\!]$.}

\end{proposition}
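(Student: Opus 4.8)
The plan is to mirror the proof of Proposition \ref{prop: borne Lp generale}, with two modifications. First, the estimators now carry the random control $\alpha$; I would handle this by conditioning step by step and using that $\alpha\in\Ac^{\rm ad}$ is predictable. Second, the true impacts $\tilde\mu$ are no longer monotone in the index, so the fixed order $\mu^{1}\ge\mu^{2}\ge\cdots$ of \eqref{eq: ordre mu} is replaced, at step $\ell$, by the $\tilde\mu$-order on the surviving pool $\Ik^{\alpha}_{\ell-1}$, which is exactly what the permutation $\tilde\mk^{\alpha}_{\ell-1}$ encodes.

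First, as in \eqref{eq: decompo ES - hat ES avec hat mu}, I would split
\begin{equation*}
\widetilde\ES-\ESh^{\alpha}=\frac{1}{\Nw}\Big(\sum_{i\in\tilde\mk(\[{1,\Nw})}\tilde\mu^{i}-\sum_{i\in\Ik^{\alpha}_{L-1}}\tilde\mu^{i}\Big)+\frac{1}{\Nw}\sum_{i\in\Ik^{\alpha}_{L-1}}\big(\tilde\mu^{i}-\hat\mu^{\alpha,i}_{L}\big),
\end{equation*}
using that $|\Ik^{\alpha}_{L-1}|=q^{\alpha}_{L-1}=\Nw$ and $\widetilde\ES=\frac1\Nw\sum_{i\in\tilde\mk(\[{1,\Nw})}\tilde\mu^{i}$. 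Minkowski's inequality in $\Lb^{p}(\Pb^{\nu})$ then isolates the Monte-Carlo term, which is exactly the first term of \eqref{eq: borne Lp generale controle aleatoire}; since no ``level $L$'' allocation appears in the statement, this term is not decomposed further. It remains to bound the selection defect $\frac1\Nw\Eb^{\nu}\big[\big(\sum_{i\in\tilde\mk(\[{1,\Nw})}\tilde\mu^{i}-\sum_{i\in\Ik^{\alpha}_{L-1}}\tilde\mu^{i}\big)^{p}\big]^{1/p}$, the bracket being nonnegative since $\tilde\mk(\[{1,\Nw})$ indexes the $\Nw$ largest $\tilde\mu$'s.

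For the selection defect I would telescope over $\ell=1,\dots,L-1$ along the sets $\tilde\mk^{\alpha}_{\ell-1}(\[{1,\Nw})$ of the true $\Nw$ worst indices inside the pool $\Ik^{\alpha}_{\ell-1}$, which run from $\tilde\mk^{\alpha}_{0}(\[{1,\Nw})=\tilde\mk(\[{1,\Nw})$ to $\tilde\mk^{\alpha}_{L-1}(\[{1,\Nw})=\Ik^{\alpha}_{L-1}$. Conditionally on $\Fc^{\alpha}_{\ell-1}$ the pool $\Ik^{\alpha}_{\ell-1}$ and the thresholds $q^{\alpha}_{\ell-1},q^{\alpha}_{\ell}$ are frozen -- this is where predictability of $\alpha$ enters -- so step $\ell$ is literally one step of the deterministic algorithm run on $\Ik^{\alpha}_{\ell-1}$, and I would reproduce verbatim the construction of $\mathcal R_{\ell}$ and of the matching $\kk_{\ell}$ from the proof of Proposition \ref{prop: borne Lp generale}, with the integer order on $\[{1,\Ns}$ replaced by the $\tilde\mu$-order on $\Ik^{\alpha}_{\ell-1}$ given by $\tilde\mk^{\alpha}_{\ell-1}$. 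As in \eqref{eq: inclusion i in Ikell}, each $i\in\Ik^{\alpha}_{\ell-1}\setminus\Ik^{\alpha}_{\ell}$ is then matched to some $\kk_{\ell}(i)$ with $\hat\mu^{\alpha,\kk_{\ell}(i)}_{\ell}>\hat\mu^{\alpha,i}_{\ell}$ and $(i,\kk_{\ell}(i))\in\tilde\mk^{\alpha}_{\ell-1}(\[{1,\Nw}\times\[{q^{\alpha}_{\ell}+1,q^{\alpha}_{\ell-1}})$, with at most $\delta q^{\alpha}_{\ell}$ indices matched; together with H\"older's inequality this yields the pointwise bound $\sum_{i\in\tilde\mk(\[{1,\Nw})}\tilde\mu^{i}-\sum_{i\in\Ik^{\alpha}_{L-1}}\tilde\mu^{i}\le\sum_{\ell=1}^{L-1}\sum_{i}(\tilde\mu^{i}-\tilde\mu^{\kk_{\ell}(i)})\1_{\{i\in\Ik^{\alpha}_{\ell-1}\setminus\Ik^{\alpha}_{\ell}\}}$.

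Finally, after a Minkowski split over $\ell$ and a power-mean inequality (at most $\Nw\wedge\delta q^{\alpha}_{\ell}$ nonnegative terms occur at step $\ell$), I would condition the $\ell$-th contribution on $\mathcal G_{\ell}:=\Fc^{\alpha}_{\ell-1}\vee\sigma(\tilde\theta)$: on $\mathcal G_{\ell}$ the quantities $q^{\alpha}_{\ell},\delta q^{\alpha}_{\ell},\Ik^{\alpha}_{\ell-1},\tilde\mk^{\alpha}_{\ell-1}$ and every $(\tilde\mu^{i}-\tilde\mu^{k})^{p}$ are measurable, while $\{\kk_{\ell}(i)=k,\ i\in\Ik^{\alpha}_{\ell-1}\setminus\Ik^{\alpha}_{\ell}\}\subset\{\hat\mu^{\alpha,k}_{\ell}>\hat\mu^{\alpha,i}_{\ell}\}$; summing the resulting conditional estimate over the at most $\Nw\,\delta q^{\alpha}_{\ell}$ admissible pairs $(i,k)$ and taking $\Eb^{\nu}$ reproduces the second term of \eqref{eq: borne Lp generale controle aleatoire}. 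The main obstacle will be the combinatorial step: verifying that the construction of $\kk_{\ell}$ makes the telescoped pointwise bound hold with the stated index ranges (this is inherited from the proof of Proposition \ref{prop: borne Lp generale}, including the tie-breaking argument that upgrades $\ge$ to the strict $>$), and bookkeeping that every ``freezing under conditioning'' is legitimate given only that $\alpha$ is predictable and that the fresh simulations $P^{i}_{j}$, $j>N^{\alpha}_{\ell-1}$, used at step $\ell$ are, conditionally on $\tilde\theta$, independent of $\mathcal G_{\ell}$.
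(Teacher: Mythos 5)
Your proposal is correct and follows essentially the same route as the paper's proof: the same split into the Monte-Carlo term $\frac1\Nw\Eb^{\nu}[|\sum_{i\in\Ik^{\alpha}_{L-1}}\hat\mu^{\alpha,i}_{L}-\tilde\mu^{i}|^{p}]^{1/p}$ plus a selection error, the same matching of dropped true-worst indices to intruders taken from $\Ik^{\alpha}_{\ell}$ minus the $q^{\alpha}_{\ell}$ true-best of $\Ik^{\alpha}_{\ell-1}$ (the paper's $\mathcal R^{\alpha}_{\ell}$, i.e.\ the $\tilde\mu$-order replacing the integer order), and the same conditioning on $\Fc^{\alpha}_{\ell-1}\vee\sigma(\tilde\theta)$, which is legitimate precisely because $\alpha$ is predictable. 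The only deviations are bookkeeping ones that yield the identical bound: you telescope exactly along the within-pool top-$\Nw$ sets $\tilde\mk^{\alpha}_{\ell-1}([\![1,\Nw]\!])$ and use a power-mean with the counts $\Nw\wedge\delta q^{\alpha}_{\ell}$ and $\Nw\,\delta q^{\alpha}_{\ell}$ (the factors of $\Nw$ cancel against $1/\Nw$), where the paper decomposes over the step at which each global worst index exits and applies Minkowski over $i$; the per-step inequality you need does hold, since the new entrants to the within-pool top-$\Nw$ are by definition the largest true values of $\Ik^{\alpha}_{\ell}$ outside $\tilde\mk^{\alpha}_{\ell-1}([\![1,\Nw]\!])$ and hence dominate the matched intruders.
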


 \begin{proof} We proceed as in the proof of Proposition \ref{prop: borne Lp generale} to obtain that 
 \begin{align*}
 \mathbb{E}^{\nu}\left[ \left|\widetilde\ES - \ESh^{\alpha}\right|^p \right]^{\frac{1}{p}} 
\le& \Eb^{\nu}\left[\left|\frac{1}{\Nw} \sum_{i\le \Nw} \hat \mu_{L}^{\alpha,\mk^{\alpha}_{L - 1}(i)}-  \tilde  \mu^{\mk^{\alpha}_{L - 1}(i)} \right|^p\right]^{\frac{1}{p}} +\Eb^{\nu}\left[\left|\frac{1}{\Nw} \sum_{i\le \Nw} \tilde \mu^{\tilde \mk(i)} -   \tilde \mu^{\mk^{\alpha}_{L - 1}(i)} \right|^p\right]^{\frac{1}{p}},
\end{align*}
where 
\begin{align*}
 \Eb^{\nu}\left[\left|\frac{1}{\Nw} \sum_{i\le \Nw}  \hat \mu_{L}^{\alpha,\mk^{\alpha}_{L - 1}(i)}-   \tilde \mu^{\mk^{\alpha}_{L - 1}(i)} \right|^p\right]^{\frac{1}{p}} 
 &= \frac{1}{\Nw}   \Eb^{\nu}\left[\left| \sum_{i\in \Ik_{L - 1}^{\alpha}} \hat \mu_{L}^{\alpha,i}-   \tilde \mu^{i} \right|^p\right]^{\frac{1}{p}} 
  .
\end{align*} 
We define $\kk^{\alpha}_{\ell}$ as $\kk_{\ell}$ in the proof of Proposition \ref{prop: borne Lp generale} for the strategy $\alpha$, with $ \mathcal R_{\ell}$ replaced by $\mathcal R^{\alpha}_{\ell}$ $:=$ $\Ik^{\alpha}_{\ell}\setminus \tilde \mk({\rm S}_{q^{\alpha}_{\ell}}[\tilde \mk^{-1}(\Ik_{\ell-1})])$. Then, 
\begin{align*}
&\Eb^{\nu}\left[\left|\frac{1}{\Nw} \sum_{i\le \Nw} \tilde \mu^{\tilde \mk(i)} -   \tilde \mu^{\mk^{\alpha}_{L - 1}(i)} \right|^p\right]^{\frac{1}{p}} \\
&\le \Eb^{\nu}\left[ \left|\frac{1}{\Nw} \sum_{i\le \Nw}\sum_{\ell=1}^{L - 1} ( \tilde \mu^{\tilde \mk(i)}-   \tilde \mu^{\kk^{\alpha}_{\ell}(\tilde \mk(i))})\1_{\{\tilde \mk(i)\in \Ik^{\alpha}_{\ell-1}\setminus \Ik^{\alpha}_{\ell}\}} \right|^p\right]^{\frac{1}{p}} \\
&\le \frac{1}{\Nw} \sum_{\ell=1}^{L - 1} \sum_{i\le \Nw} \Eb^{\nu}\left[\left| ( \tilde \mu^{\tilde \mk(i)}-  \tilde  \mu^{\kk^{\alpha}_{\ell}(\tilde \mk(i))}) \right|^p\1_{\{\tilde \mk(i)\in  \Ik^{\alpha}_{\ell-1}\setminus \Ik^{\alpha}_{\ell}\}} \right]^{\frac{1}{p}}\\
&\le {\frac{1}{\Nw} \sum_{\ell=1}^{L - 1} \sum_{i\le \Nw} \Eb^{\nu}\left[ \sum_{k\in \tilde \mk^{\alpha}_{\ell-1}([\![q^{\alpha}_{\ell}+1,q^{\alpha}_{\ell-1}]\!])}\Eb^{\nu}\left[ \left| ( \tilde \mu^{\tilde \mk(i)} - \tilde  \mu^{k}) \right|^p\1_{\{\tilde \mk(i)\in  \Ik^{\alpha}_{\ell-1}\setminus \Ik^{\alpha}_{\ell},\kk^{\alpha}_{\ell}(\tilde \mk(i))=k\}}|\Fc^{\alpha}_{\ell-1} \vee \sigma(\tilde \theta)\right]\right]^{\frac{1}{p}}} \\
& {\le \sum_{\ell=1}^{L - 1} \Eb^{\nu}\left[\delta q^{\alpha}_{\ell} \max_{(i,k)\in  \tilde \mk^{\alpha}_{\ell-1}(\[{1,\Nw}\times  [\![q^{\alpha}_{\ell}+1,q^{\alpha}_{\ell-1}]\!])}  (\tilde \mu^{i}-\tilde \mu^{k})^p \Pb^{\nu}[\hat \mu_{\ell}^{\alpha,k}>\hat \mu_{\ell}^{\alpha,i} |\Fc^{\alpha}_{\ell-1}\vee \sigma(\tilde \theta)] \right]^{\frac{1}{p}}.
}  \end{align*} 
   \end{proof}

 \begin{remark}\label{rem: borne Lp control aleatoire vs determinist} Note that, when $\alpha$ is deterministic and $\nu$ is concentrated on a Dirac, the right-hand side of \eqref{eq: borne Lp generale controle aleatoire} is bounded from above by 
 \begin{align*}
 &\frac{1}{\Nw} \frac{\delta N_L^{\alpha}}{N_{L}^{\alpha}} \underset{1 \leq i_i < ... < i_{\Nw} \leq \Ns}{\max} \sum_{j = 1}^{\Nw} \Eb^{\nu} \left[  \left| \delta \hat \mu^{\alpha,i_j}_{L}-   \tilde \mu^{i_j} \right|^p \right]^{\frac{1}{p}}
 + \frac{1}{\Nw} \frac{N_{L - 1}^\alpha}{N_{L}^{\alpha}} \sum_{i=1}^{\Ns} \Eb^{\nu} \left[ \left| \hat \mu^{\alpha,i}_{L - 1}-   \tilde \mu^{i} \right|^p \right]^{\frac{1}{p}} \\
 +
& \sum_{i=1}^{\Ns} \sum_{\ell=1}^{L - 1} \left(\delta q^{\alpha}_{\ell}\right)^{\frac{1}{p}} \Eb^{\nu}\left[\max_{(i,k)\in \tilde \mk\[{1,\Nw}\times [\![q^{\alpha}_{\ell}+1,\Ns]\!]}  (\tilde \mu^{i}-\tilde \mu^{k})^{p}\Pb^{\nu}[\hat \mu_{\ell}^{\alpha,k}>\hat \mu_{\ell}^{\alpha,i} |\Fc^{\alpha}_{\ell-1}\vee \sigma(\tilde \theta) ]\right]^{\frac{1}{p}},
\end{align*}
which coincides with the bound of Proposition \ref{prop: borne Lp generale} 
 \end{remark}
 
 The above guarantees the convergence of the algorithm.

\begin{proposition}\label{prop: convergence cas strat random} Let $(K^{n})_{n\ge 1}\subset {\mathbb N}^{*}$ be a sequence converging to infinity and let $(\alpha^{n})_{n\ge 1}$ be a sequence in $\Ac^{\rm ad}$ such that 
${\rm C}(q^{\alpha^{n}},N^{\alpha^{n}})\le K^{n}$ for each $n\ge 1$. Assume further that 
$
\min_{1\le \ell\le L} N^{\alpha^{n}}_{\ell}\to \infty$ {\rm a.s.} 
Let $\nu$ be concentrated on the Dirac mass on $\theta_{\circ}$. Then, 
$$
\mathbb{E}^{\nu}\left[ \left|\widetilde \ES - \ESh^{\alpha^{n}}\right|^p \right] \to 0\;\;\mbox{ as $n\to \infty$.}
$$ 
\end{proposition}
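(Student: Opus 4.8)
The plan is to apply Proposition \ref{prop: borne Lp generale controle aleatoire} with $\nu$ the Dirac mass on $\theta_\circ$, and show that each of the two terms on the right-hand side of \eqref{eq: borne Lp generale controle aleatoire} tends to $0$. For the first, Monte Carlo, term, I would argue as in Remark \ref{rem: borne Lp control aleatoire vs determinist}: bound $\frac1\Nw\Eb^\nu[|\sum_{i\in\Ik^{\alpha^n}_{L-1}}\hat\mu_L^{\alpha^n,i}-\tilde\mu^i|^p]^{1/p}$ by splitting along the cumulated simulations, so that it is controlled by $\sum_{i\le\Ns}\Eb^\nu[|\hat\mu^{\alpha^n,i}_{L-1}-\mu_\circ^i|^p]^{1/p}$ plus a term of the form $\max_{i_1<\cdots<i_\Nw}\sum_j\Eb^\nu[|\delta\hat\mu^{\alpha^n,i_j}_{L}-\mu_\circ^{i_j}|^p]^{1/p}$. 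Here is where a little care is needed: because $N^{\alpha^n}$ is itself random, $\hat\mu^{\alpha^n,i}_{L-1}$ is a Monte Carlo average over a random (but $\Fc^{\alpha^n}_{L-1}$-measurable, hence predictable) number of i.i.d.\ terms. Conditioning on $N^{\alpha^n}_{L-1}$ and using the independence of the summands from the stopping level, a Marcinkiewicz--Zygmund (or Rosenthal) inequality gives $\Eb^\nu[|\hat\mu^{\alpha^n,i}_{L-1}-\mu_\circ^i|^p\,|\,N^{\alpha^n}_{L-1}]\le C_p\,\|P_{|\sr^i}-\mu_\circ^i\|_p^p\,(N^{\alpha^n}_{L-1})^{-p/2}$ (using $P_{|\sr}\in\Lb^p$, which holds by assumption), and the same for $\delta\hat\mu$ with $\delta N^{\alpha^n}_L=N^{\alpha^n}_L-N^{\alpha^n}_{L-1}$ in the denominator; the weights $N^{\alpha^n}_{L-1}/N^{\alpha^n}_L\le1$ and $\delta N^{\alpha^n}_L/N^{\alpha^n}_L\le1$ are harmless.

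For the second, permutation, term, fix $\ell\le L-1$ and $(i,k)$ with $i\le\Nw<q^{\alpha^n}_\ell<k$ in the relevant ranges. Conditionally on $\Fc^{\alpha^n}_{\ell-1}\vee\sigma(\tilde\theta)$, the probability $\Pb^\nu[\hat\mu^{\alpha^n,k}_\ell>\hat\mu^{\alpha^n,i}_\ell\,|\,\Fc^{\alpha^n}_{\ell-1}\vee\sigma(\tilde\theta)]$ is the probability that an average of $N^{\alpha^n}_\ell$ i.i.d.\ copies of $Z[i,k]=(P_{|\sr^i}-\mu_\circ^i)-(P_{|\sr^k}-\mu_\circ^k)$ exceeds $\mu_\circ^i-\mu_\circ^k>0$ in absolute deviation; here I would use the fact that $\tilde\theta$ is deterministic equal to $\theta_\circ$ so the law of this average does not actually depend on $\tilde\theta$, and that $N^{\alpha^n}_\ell$ is $\Fc^{\alpha^n}_{\ell-1}$-measurable, so conditionally it is a genuine average over a fixed number of terms. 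A Chebyshev bound (second moment of $Z[i,k]$, finite by $\Lb^2$) gives $\Pb^\nu[\cdots\,|\cdots]\le \sigma_{ik}^2/(N^{\alpha^n}_\ell(\mu_\circ^i-\mu_\circ^k)^2)$, which, multiplied by $(\mu_\circ^i-\mu_\circ^k)^p$ and then by $\delta q^{\alpha^n}_\ell\le\Ns$, is dominated by $\Ns\,(\max_{i,k}(\mu_\circ^i-\mu_\circ^k)^{p-2})\,\max_{i,k}\sigma_{ik}^2\,/\,N^{\alpha^n}_\ell$ — a fixed constant over $N^{\alpha^n}_\ell$. Taking the expectation and then the $1/p$-th power, and summing over $\ell=1,\dots,L-1$, this whole term is bounded by $C\,\Eb^\nu[(\min_{1\le\ell\le L}N^{\alpha^n}_\ell)^{-1}]^{1/p}$ for a constant $C$ depending only on $p,L,\Ns,\theta_\circ$.

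It remains to pass to the limit. Combining the two bounds, there is a constant $C$, independent of $n$, with
\begin{equation*}
\Eb^\nu\!\left[\left|\widetilde\ES-\ESh^{\alpha^n}\right|^p\right]^{1/p}\le C\,\Eb^\nu\!\left[\Big(\min_{1\le\ell\le L}N^{\alpha^n}_\ell\Big)^{-p/2}\right]^{1/p}+C\,\Eb^\nu\!\left[\Big(\min_{1\le\ell\le L}N^{\alpha^n}_\ell\Big)^{-1}\right]^{1/p}.
\end{equation*}
Since $N^{\alpha^n}_\ell\ge1$ for all $\ell$ and $\min_{1\le\ell\le L}N^{\alpha^n}_\ell\to\infty$ a.s.\ by hypothesis, the integrands $(\min_\ell N^{\alpha^n}_\ell)^{-p/2}$ and $(\min_\ell N^{\alpha^n}_\ell)^{-1}$ are bounded by $1$ and tend to $0$ a.s., so dominated convergence gives that both expectations vanish as $n\to\infty$; hence $\Eb^\nu[|\widetilde\ES-\ESh^{\alpha^n}|^p]\to0$. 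I expect the main technical obstacle to be the first, Monte Carlo, term: making rigorous the moment bound for an empirical average over a \emph{predictable} (random) number of summands that share randomness across scenarios and steps — this is where one must invoke predictability of $N^{\alpha^n}$ together with a Marcinkiewicz--Zygmund-type inequality and conditioning, rather than a naive i.i.d.\ argument, and where one must double-check that the uniform-in-$n$ constant indeed does not blow up.
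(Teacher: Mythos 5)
Your overall architecture is the same as the paper's (apply Proposition \ref{prop: borne Lp generale controle aleatoire} with $\nu$ a Dirac, show the Monte Carlo term and the inversion term each vanish, conclude by dominated convergence using $\min_\ell N^{\alpha^n}_\ell\to\infty$), but the key moment estimates are justified by a step that is not valid. You condition on $N^{\alpha^n}_{L-1}$ (resp.\ on $\Fc^{\alpha^n}_{\ell-1}\vee\sigma(\tilde\theta)$) and then treat $\hat\mu^{\alpha^n,i}_{L-1}$ (resp.\ $\hat\mu^{\alpha^n,i}_\ell-\hat\mu^{\alpha^n,k}_\ell$) as an average of that many i.i.d.\ copies ``independent of the stopping level''. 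This is false in the adaptive setting: $N^{\alpha^n}_{\ell}$ is $\Fc^{\alpha^n}_{\ell-1}$-measurable, and $\Fc^{\alpha^n}_{\ell-1}$ is generated by the very simulations $P^i_j$, $j\le N^{\alpha^n}_{\ell-1}$, that form the first part of the average $\hat\mu^{\alpha^n,i}_{\ell}$. So conditionally on the sample size (or on $\Fc^{\alpha^n}_{\ell-1}$), the first $N^{\alpha^n}_{\ell-1}$ summands are not fresh i.i.d.\ draws — they are (partly) known and possibly biased by the selection rule — and neither the Marcinkiewicz--Zygmund bound $\Eb[|\hat\mu^{\alpha^n,i}_{L-1}-\mu^i_\circ|^p\,|\,N^{\alpha^n}_{L-1}]\le C_p(N^{\alpha^n}_{L-1})^{-p/2}$ nor the conditional Chebyshev bound $\sigma_{ik}^2/(N^{\alpha^n}_\ell(\mu^i_\circ-\mu^k_\circ)^2)$ for the inversion probability follows; a strategy that allocates few extra simulations precisely when the first-stage estimate is atypical breaks both. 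You flag this issue yourself at the end, but the fix you invoke is exactly the argument that needs repair, so the central bounds (and hence your uniform-in-$n$ constant $C$) are not established.

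The paper's proof is built to avoid this: it never bounds the full average conditionally on its random length. It writes $\hat\mu^{\alpha^n,i}_\ell=\frac{N^{\alpha^n}_{\ell-1}}{N^{\alpha^n}_\ell}\hat\mu^{\alpha^n,i}_{\ell-1}+\frac{\delta N^{\alpha^n}_\ell}{N^{\alpha^n}_\ell}\delta\hat\mu^{\alpha^n,i}_\ell$, applies the conditional moment bound only to the fresh increment $\delta\hat\mu^{\alpha^n,i}_\ell$ given $\Fc^{\alpha^n}_{\ell-1}$ (legitimate, since $\delta N^{\alpha^n}_\ell$ is $\Fc^{\alpha^n}_{\ell-1}$-measurable and the new simulations are independent of $\Fc^{\alpha^n}_{\ell-1}$), notes that $\frac{\delta N^{\alpha^n}_\ell}{N^{\alpha^n}_\ell}\Eb^\nu[|\delta\hat\mu^{\alpha^n,i}_\ell-\mu^i_\circ|^p\,|\,\Fc^{\alpha^n}_{\ell-1}]\to0$ a.s.\ (either $\delta N^{\alpha^n}_\ell\to\infty$ along a subsequence, or the weight $\delta N^{\alpha^n}_\ell/N^{\alpha^n}_\ell\to0$ while the conditional moment stays bounded), and then obtains $\Eb^\nu[|\hat\mu^{\alpha^n,i}_\ell-\mu^i_\circ|^p]\to0$ by induction on $\ell$ and bounded convergence. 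For the permutation term it does not use a variance bound with $N^{\alpha^n}_\ell$ in the denominator at all: it simply applies Markov's inequality, $(\mu^i_\circ-\mu^k_\circ)^p\Pb^\nu[\hat\mu^{\alpha^n,k}_\ell>\hat\mu^{\alpha^n,i}_\ell\,|\,\cdot]\le C\,\1_{\{\mu^i_\circ>\mu^k_\circ\}}\Eb^\nu[|\hat\mu^{\alpha^n,i}_\ell-\hat\mu^{\alpha^n,k}_\ell-(\mu^i_\circ-\mu^k_\circ)|]/(\mu^i_\circ-\mu^k_\circ)$, which tends to $0$ by the $\Lb^1$ convergence just proved. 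To repair your proof you would need to replace your one-shot conditioning by this increment decomposition and induction (or by a genuine randomly-stopped-martingale argument), and to replace the conditional Chebyshev step by an unconditional Markov bound of this type; also note that your final displayed bound with $\Eb^\nu[(\min_\ell N^{\alpha^n}_\ell)^{-p/2}]$ is never actually derived.
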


\proof It suffices to use the fact that, for some $C_{p}>0$, 
\begin{align*}
\Eb^{\nu}\left[\left| \hat \mu_{\ell}^{\alpha^{n},i} - \tilde \mu^{i} \right|^p\right]
&\le C_{p}\Eb^{\nu}\left[ \frac{\delta N^{\alpha^{n}}_{\ell}}{N^{\alpha^{n}}_{\ell}}  \Eb^{\nu}\left[\left|  \delta \hat \mu_{\ell}^{\alpha^{n},i} -    \mu^{i}_{\circ} \right|^p|\Fc^{\alpha^{n}}_{\ell-1}\right]\right]+
C_{p} \Eb^{\nu}\left[\frac{  N^{\alpha^{n}}_{\ell-1}}{N^{\alpha^{n}}_{\ell}}\left|   \hat \mu_{\ell-1}^{\alpha^{n},i} -    \mu^{i}_{\circ} \right|^p\right],
\end{align*}
in which 
$$
\frac{\delta N^{\alpha^{n}}_{\ell}}{N^{\alpha^{n}}_{\ell}} \Eb^{\nu}\left[\left|  \delta \hat \mu_{\ell}^{\alpha^{n},i} -    \mu^{i}_{\circ} \right|^p|\Fc^{\alpha^{n}}_{\ell-1}\right]\to 0,\;\nu_{\circ}-{\rm a.s.}, 
$$
for all $\ell>1$ and $i\le \Ns$. By induction, this implies that 
$$
\Eb^{\nu}\left[\left| \hat \mu_{\ell}^{\alpha^{n},i} - \tilde \mu^{i} \right|^p\right]=\Eb^{\nu}\left[\left| \hat \mu_{\ell}^{\alpha^{n},i} -   \mu^{i}_{\circ} \right|^p\right]\to 0
$$
for all $\ell\le L$ and $i\le \Ns$. Moreover, for some $C>0$, 
\begin{align*}
\Eb^{\nu}\left[  (\tilde \mu^{i}-\tilde \mu^{k})^p \Pb^{\nu}[\hat \mu_{\ell}^{\alpha^{n},k}>\hat \mu_{\ell}^{\alpha^{n},i} |\Fc^{\alpha}_{\ell-1}\vee \sigma(\tilde \theta)] \right]
&\le C\1_{\{\mu^{i}_{\circ}-\mu^{k}_{\circ}>0\}}\frac{\Eb^{\nu}[|\hat \mu_{\ell}^{\alpha^{n},i}-\hat \mu_{\ell}^{\alpha^{n},k}-(\mu^{i}_{\circ}-\mu^{k}_{\circ})|]}{\mu^{i}_{\circ}-\mu^{k}_{\circ}}\to 0
\end{align*}
for all $i<k$ and $\ell\le L-1$.
\endproof
 
 Using the fact that a control $\alpha \in \Ac^{\rm ad}$ is predictable, one can then proceed as in the proof of Corollary \ref{cor: borne si sub gamma} to derive a more tractable upper-bound. It appeals to the following version of Assumption \ref{hyp:sub gamma}.  
 
 \begin{assumption}\label{hyp:sub gamma control random} There exists $c>0$ such that, for all $\nu \in \Mc$,  
 \begin{equation*}
    \mathbb{E}^{\nu} \left[ \left| Z[i,k] \right|^p|\sigma(\tilde \theta)\right] 
    \leq \frac{p!\;c^{p-2}}{2} \mathbb{E}^{\nu} \left[   Z[i,k]^2|\sigma(\tilde \theta)\right]\;\nu-{\rm a.s.}, \;\mbox{ for all } \;i,k\le \Ns,\;p\ge 3.
\end{equation*}
\end{assumption}
 
\begin{corollary}\label{cor: borne si sub gamma controle aleatoire}
Let Assumption \ref{hyp:sub gamma control random} holds. Then, for all $p\ge 1$, $\alpha  \in \Ac^{\rm ad}$ and $\nu \in \Mc$, 

\begin{align*}\label{eq: borne Lp sub gamma controle aleatoire}
    \mathbb{E}^{\nu}\left[ \left|\widetilde \ES - \ESh^{\alpha}\right|^p \right]^{\frac{1}{p}} \leq &  \Frm^{\rm ad}_{p}(\alpha,\nu)
    \end{align*}
in which 
\begin{equation*}
\Frm^{\rm ad}_{p}(\alpha,\nu):= \frac{1}{\Nw}
\Eb^{\nu}\left[\left| \sum_{i\in \Ik_{L - 1}^{\alpha}} \hat \mu_{L}^{\alpha,i} - \tilde \mu^{i} \right|^p \right]^{\frac{1}{p}} 
+ \sum_{\ell=1}^{L - 1} \Eb^{\nu} \left[f^{\rm ad}_{p}(\ell,\alpha,\tilde \theta)\right]^{\frac{1}{p}} \label{eq: def F controle aleatoire}  
      \end{equation*}
where 
\begin{equation}\label{eq: def fap}
f^{\rm ad}_{p}(\ell,\alpha,\tilde \theta) := \delta q^{\alpha}_{\ell} \max_{(i,k)\in  {\tilde \mk^{\alpha}_{\ell-1}(\[{1,\Nw}\times [\![q^{\alpha}_{\ell}+1,q^{\alpha}_{\ell-1}]\!])}}   (\tilde \mu^{i}-\tilde \mu^{k})^{p}
      \left(e^{-\frac{\delta N^{\alpha}_{\ell}(\rho^{\alpha}_{\ell}[i,k])^{2}}{ 2(\tilde \sigma_{ik}^{2}+c \rho^{\alpha}_{\ell}[i,k]) } }\1_{\{\rho^{\alpha}_{\ell}[i,k]\ge  0\}}
+ \1_{\{\rho^{\alpha}_{\ell}[i,k]< 0\}}\right)
\end{equation}
with 
$$
\rho^{\alpha}_{\ell}[i,k]:=\tilde \mu^{i}- \tilde \mu^{k} + \frac{N^{\alpha}_{\ell-1}}{\delta N^{\alpha}_{\ell}}  
(\hat \mu_{\ell-1}^{\alpha,i}-\hat \mu_{\ell-1}^{\alpha,k})  \mbox{ for $\ell \ge 1$ and $i,k\le \Ns$.}
$$

\end{corollary}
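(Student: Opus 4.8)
The plan is to read the estimate off directly from Proposition~\ref{prop: borne Lp generale controle aleatoire}. Its right-hand side already contains, as first term, the quantity $\frac{1}{\Nw}\Eb^{\nu}[|\sum_{i\in\Ik^{\alpha}_{L-1}}\hat\mu^{\alpha,i}_{L}-\tilde\mu^{i}|^{p}]^{1/p}$ that reappears verbatim in $\Frm^{\rm ad}_{p}(\alpha,\nu)$, so no work is needed there. It therefore only remains, for each $1\le\ell\le L-1$, to bound the conditional inversion probabilities $\Pb^{\nu}[\hat\mu^{\alpha,k}_{\ell}>\hat\mu^{\alpha,i}_{\ell}\mid\Fc^{\alpha}_{\ell-1}\vee\sigma(\tilde\theta)]$ occurring in the $\ell$-th summand, and then to recognise $f^{\rm ad}_{p}(\ell,\alpha,\tilde\theta)$ by pulling the finite maximum over $(i,k)$ inside the bound and using the monotonicity of $x\mapsto\Eb^{\nu}[x]^{1/p}$ on nonnegative integrands.

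First I would rewrite the inversion event in terms of the ``fresh'' increments. From $N^{\alpha}_{\ell}\hat\mu^{\alpha,i}_{\ell}=N^{\alpha}_{\ell-1}\hat\mu^{\alpha,i}_{\ell-1}+\delta N^{\alpha}_{\ell}\,\delta\hat\mu^{\alpha,i}_{\ell}$, and writing $Z_{j}[i,k]:=(P^{i}_{j}-\tilde\mu^{i})-(P^{k}_{j}-\tilde\mu^{k})$ for the conditionally centered copies of $Z[i,k]$ attached to the simulations $j\in[\![N^{\alpha}_{\ell-1}+1,N^{\alpha}_{\ell}]\!]$ (recall $\Eb^{\nu}[P_{|{\rm s}^{i}}\mid\sigma(\tilde\theta)]=\tilde\mu^{i}$), one checks after subtracting conditional means that $\{\hat\mu^{\alpha,k}_{\ell}>\hat\mu^{\alpha,i}_{\ell}\}=\{-\sum_{j=N^{\alpha}_{\ell-1}+1}^{N^{\alpha}_{\ell}}Z_{j}[i,k]>\delta N^{\alpha}_{\ell}\,\rho^{\alpha}_{\ell}[i,k]\}$, with $\rho^{\alpha}_{\ell}[i,k]$ exactly as in the statement. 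Since $\alpha\in\Ac^{\rm ad}$ is predictable, $q^{\alpha}_{\ell},N^{\alpha}_{\ell},N^{\alpha}_{\ell-1},\Ik^{\alpha}_{\ell-1}$ and the estimators $(\hat\mu^{\alpha,\cdot}_{\ell-1})$ are $\Fc^{\alpha}_{\ell-1}$-measurable and $\tilde\mk^{\alpha}_{\ell-1}$ is $\Fc^{\alpha}_{\ell-1}\vee\sigma(\tilde\theta)$-measurable; hence, conditionally on $\Fc^{\alpha}_{\ell-1}\vee\sigma(\tilde\theta)$, the integer $\delta N^{\alpha}_{\ell}$, the real $\rho^{\alpha}_{\ell}[i,k]$, the factor $\delta q^{\alpha}_{\ell}$ and the weights $(\tilde\mu^{i}-\tilde\mu^{k})^{p}$ are constants, whereas the family $(Z_{j}[i,k])_{j>N^{\alpha}_{\ell-1}}$ is, conditionally on $\sigma(\tilde\theta)$, i.i.d. with the conditional law of $Z[i,k]$ and independent of $\Fc^{\alpha}_{\ell-1}$.

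It then suffices to apply Bernstein's inequality conditionally on $\Fc^{\alpha}_{\ell-1}\vee\sigma(\tilde\theta)$, exactly as in the proof of Corollary \ref{cor: borne si sub gamma}: on $\{\rho^{\alpha}_{\ell}[i,k]\ge0\}$, Assumption \ref{hyp:sub gamma control random} (which provides the Bernstein condition for the conditional moments given $\sigma(\tilde\theta)$) together with $\Eb^{\nu}[Z[i,k]^{2}\mid\sigma(\tilde\theta)]=\tilde\sigma^{2}_{ik}$ and \cite[Theorem 2.1]{bercu2015concentration} give $\Pb^{\nu}[\hat\mu^{\alpha,k}_{\ell}>\hat\mu^{\alpha,i}_{\ell}\mid\Fc^{\alpha}_{\ell-1}\vee\sigma(\tilde\theta)]\le e^{-\delta N^{\alpha}_{\ell}(\rho^{\alpha}_{\ell}[i,k])^{2}/(2(\tilde\sigma^{2}_{ik}+c\,\rho^{\alpha}_{\ell}[i,k]))}$, while on $\{\rho^{\alpha}_{\ell}[i,k]<0\}$ one uses the trivial bound by $1$; altogether the inversion probability is at most $B_{\ell}[i,k]:=e^{-\delta N^{\alpha}_{\ell}(\rho^{\alpha}_{\ell}[i,k])^{2}/(2(\tilde\sigma^{2}_{ik}+c\,\rho^{\alpha}_{\ell}[i,k]))}\1_{\{\rho^{\alpha}_{\ell}[i,k]\ge0\}}+\1_{\{\rho^{\alpha}_{\ell}[i,k]<0\}}$, which is $\Fc^{\alpha}_{\ell-1}\vee\sigma(\tilde\theta)$-measurable and nonnegative. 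Hence $\delta q^{\alpha}_{\ell}\max_{(i,k)}(\tilde\mu^{i}-\tilde\mu^{k})^{p}\Pb^{\nu}[\cdot]\le\delta q^{\alpha}_{\ell}\max_{(i,k)}(\tilde\mu^{i}-\tilde\mu^{k})^{p}B_{\ell}[i,k]=f^{\rm ad}_{p}(\ell,\alpha,\tilde\theta)$ $\nu$-a.s., and inserting this into \eqref{eq: borne Lp generale controle aleatoire} together with the monotonicity of $\Eb^{\nu}[\cdot]^{1/p}$ yields $\mathbb{E}^{\nu}[|\widetilde\ES-\ESh^{\alpha}|^{p}]^{1/p}\le\Frm^{\rm ad}_{p}(\alpha,\nu)$. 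The one genuinely delicate point is this conditional use of Bernstein's inequality: one must check carefully that, given $\Fc^{\alpha}_{\ell-1}\vee\sigma(\tilde\theta)$, the fresh increments $(Z_{j}[i,k])_{j>N^{\alpha}_{\ell-1}}$ are i.i.d. with the right conditional Bernstein moments (an independence-of-future-randomness-under-sequential-sampling argument) and that predictability indeed turns every other quantity into a constant; the remainder is merely the computation of Corollary \ref{cor: borne si sub gamma} carried out conditionally.
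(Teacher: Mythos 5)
Your proposal is correct and follows essentially the same route as the paper: plug the conditional Bernstein bound, applied given $\Fc^{\alpha}_{\ell-1}\vee\sigma(\tilde\theta)$ to the fresh increments (which is exactly the rewriting of the inversion event with threshold $\rho^{\alpha}_{\ell}[i,k]$ used in the paper), into Proposition \ref{prop: borne Lp generale controle aleatoire}. The extra measurability/independence checks you flag (predictability of $\alpha$ making $\delta N^{\alpha}_{\ell}$, $\hat\mu^{\alpha,\cdot}_{\ell-1}$, etc.\ constants under the conditioning, and the conditional i.i.d.\ property of the new simulations) are precisely what the paper's short proof implicitly relies on.
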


\proof We use Bernstein's inequality, see \cite[Theorem 2.1]{bercu2015concentration}, conditionally to $\Fc^{\alpha}_{\ell-1}\vee \sigma(\tilde \theta)$, to deduce that 
\begin{align*}
&\Pb^{\nu}[\hat \mu_{\ell}^{\alpha,k}>\hat \mu_{\ell}^{\alpha,i} |\Fc^{\alpha}_{\ell-1}\vee \sigma(\tilde \theta) ]\\
&=\Pb^{\nu}[\delta \hat \mu_{\ell}^{\alpha,k}-\tilde \mu^{k}-(\delta \hat \mu_{\ell}^{\alpha,i}-\tilde \mu^{i})> \frac{N^{\alpha}_{\ell-1}}{\delta N^{\alpha}_{\ell}}  
(\hat \mu_{\ell-1}^{\alpha,i}-\hat \mu_{\ell-1}^{\alpha,k})-( \tilde \mu^{k}- \tilde \mu^{i}) |\Fc^{\alpha}_{\ell-1} \vee \sigma(\tilde \theta)]\\
&\le  e^{-\frac{\delta N^{\alpha}_{\ell}(\rho^{\alpha}_{\ell}[i,k])^{2}}{ 2(\tilde \sigma_{ik}^{2}+c \rho^{\alpha}_{\ell}[i,k]) } }\1_{\{\rho^{\alpha}_{\ell}[i,k]\ge  0\}}
+ \1_{\{\rho^{\alpha}_{\ell}[i,k]<  0\}}.
\end{align*}
\endproof

\subsection{A generic progressive learning algorithm}

Let us now describe how the result of Corollary \ref{cor: borne si sub gamma controle aleatoire} can be turned into a (stochastic) dynamic programming algorithm, in the spirit of Section \ref{subsec: DPP determinist}, that can be implemented in practice.
\vspace{5mm}

By Jensen's inequality, the upper-bound of Corollary \ref{cor: borne si sub gamma controle aleatoire} can be rewritten as
\begin{equation}\label{eq: bound Lp pour learning algo}
\Eb^\nu \left[ \left| \widetilde \ES - \ESh^{\alpha} \right|^p \right] \leq \Frm^{\rm ad}_{p}(0,\alpha,\nu)^p
\end{equation}
where
\begin{equation*} \label{equation:definition_upper_bound_control_problem}
\Frm^{\rm ad}_{p}(0, \alpha,\nu)  :=  \Eb^\nu \left[ \left| \frac{1}{\Nw} \sum_{i\in \Ik_{L - 1}^{\alpha}} \hat \mu_{L}^{\alpha,i} - \tilde \mu^{i} 
\right|^p + \sum_{\ell=1}^{L - 1} f^{\rm ad}_{p}(\ell,\alpha,\tilde \theta) \right],
\end{equation*}
to which we can associate the optimal control problem\footnote{Only the conditional law given $\Fc^{\alpha}_{\ell}$ of the components of $\tilde \theta$ corresponding to indexes in $\Ik^{\alpha}_{\ell}$ play a role in the definition of $\hat \Frm^{\rm ad}_{p}(\ell,\alpha,\nu)$ and $ \Frm^{\rm ad}_{p}(\ell,\alpha,\nu)$ below. To avoid introducing new complex notations, we shall indifferently take $\nu$ or only the conditional law of the corresponding components as an argument, depending on the context.}  
 \begin{align*}
\hat \Frm^{\rm ad}_{p}(\ell,\alpha,\nu)=\essinf{\alpha'\in \Ac^{\rm ad}(\ell,\alpha)}  \Frm^{\rm ad}_{p}(\ell ,  \alpha',\nu)\;\;\mbox{ for $0\le \ell \le L - 1$, $\nu \in \Mc$ and $\alpha\in \Ac^{\rm ad}$,}
\end{align*}
where 
$$
\Ac^{\rm ad}(\ell,\alpha):=\{\alpha'=( q', N')\in  \Ac^{\rm ad} : (\alpha'_{l})_{0\le l\le \ell}= (\alpha_{l})_{0\le l\le \ell}\mbox{ and} \;{\rm C}(q',N')\le K\}
$$
and
\begin{align*}
\Frm^{\rm ad}_{p}(\ell,\alpha',\nu)&:=  
\Eb^{\nu}\left[  \left| \frac{1}{\Nw} \sum_{i\in \Ik_{L - 1}^{\alpha'}} \hat \mu_{L}^{\alpha',i} - \tilde \mu^{i}  \right|^p + {\1_{\{\ell<L-1\}}} \sum_{l=\ell+1}^{L - 1} f^{\rm ad}_{p}(l,\alpha',\tilde \theta)~\Bigg|~\Fc^{\alpha'}_{\ell}\right] .
\end{align*}
  
 It admits a dynamic programming principle that involves a Bayesian update of the prior law on $\tilde \theta$ at each step of the algorithm, see e.g.~\cite{easley1988controlling}.

Let us first observe that, from step $\ell$ on, our bound only involves the components of $\tilde \theta$ associated to the indexes in $\Ik^{\alpha}_{\ell}$. We therefore set 
$$
\tilde \theta^{\alpha}_{\ell}=(\tilde \mu^{\alpha}_{\ell},\tilde \Sigma^{\alpha}_{\ell}):={\cal T}^{\Ik^{\alpha}_{\ell}}_{\Ik^{\alpha}_{\ell-1}}(\tilde \theta^{\alpha}_{\ell-1}),\;\ell\ge 1,\;\mbox{ with } \tilde \theta^{\alpha}_{0}:=\tilde \theta
$$ 
where, for two subsets $A'\subset A\subset \[{1,\Ns}$ and $(\mu,\Sigma)=( (  \mu^{i})_{i\in A},(  \Sigma^{ij})_{i,j\in A })$, we define 
$$
{\cal T}^{A'}_{A}(\mu,\Sigma)=(  (\mu^{i})_{i\in A'},(  \Sigma^{ij})_{i,j\in A' }).
$$ 
This means that the update of the prior can be restricted to a reduced number of components of $\tilde \theta$. This explains why we will concentrate on minimizing this upper-bound rather than directly the left-hand side of \eqref{eq: bound Lp pour learning algo}, which would lead to a very high-dimensional optimal control problem, at each step $\ell$. This way, we expect to reduce very significantly the computation cost of the corresponding ``optimal'' strategy.
\vs2

 In order to make the updating rule explicit, we use the following assumption. 
 
 \begin{assumption}\label{hyp: loi condi P sachant theta} Given $\nu_{0}\in \Mc$, there exists a measure $m$, such that, for all $\alpha\in \Ac^{\rm ad}$ and ${1}\le \ell\le L$, the law of $I^{\alpha}_{\ell}:=(P_{j}^{i}, (i,j)\in \Ik^{\alpha}_{{\ell-1}}\times \[{N^{\alpha}_{\ell-1}+1,N^{\alpha}_{\ell}})$ given $\Fc^{\alpha}_{\ell-1}\vee \sigma(\tilde \theta)$ admits $\nu_{0}$-a.s.~the density 
 $g^{\alpha}_{\ell}(\cdot,\tilde \theta^{\alpha}_{{\ell-1}}):=g(\cdot,\Ik^{\alpha}_{{\ell-1}},N^{\alpha}_{\ell-1},N^{\alpha}_{\ell},\tilde \theta^{\alpha}_{{\ell-1}})$ with respect to $m$, in which $g$ is a bounded measurable map\footnote{As for measurability, we identify $\Ik^{\alpha}_{{\ell-1}}$ to the element of $\Rb^{\Ns}$ with $i$-th entry given by $\1_{\{i\in \Ik^{\alpha}_{{\ell-1}}\}}$.}, that is continuous in its first argument, uniformly in the other ones. Moreover, for all $\alpha\in \Ac^{\rm ad}$ and $\ell\le L$, the law of $\tilde \theta$ given $\Fc^{\alpha}_{\ell}$ belongs to $\Mc$ $\nu_{0}$-a.s.
\end{assumption} 

Under this assumption, we can compute the law ${\nu^{\alpha,\ell-1}_{\ell}}$ of $\tilde \theta^{\alpha}_{{\ell-1}}={{\cal T}^{\alpha}_{\ell-1}(\tilde \theta)}$ given $\Fc^{\alpha}_{\ell}$ in terms of its counterpart $\nu^{\alpha}_{\ell-1}$ given $\Fc^{\alpha}_{\ell-1}$, 
in which 
$
{\cal T}^{\alpha}_{\ell-1} :={\cal T}^{\Ik^{\alpha}_{\ell-1}}_{\Ik^{\alpha}_{\ell-2}}\circ\cdots\circ {\cal T}^{\Ik^{\alpha}_{1}}_{\Ik^{\alpha}_{0}}.
$
It is given by 
\begin{equation*}\label{eq: nuell+1 en fonciton nuell}
{\nu^{\alpha,\ell-1}_{\ell}}=\Uc^{\frac12}(\ell,\alpha, \nu^{\alpha}_{\ell-1})
 \end{equation*}
 with $\nu^{\alpha}_{0}=\nu$ and 
\begin{align}\label{eq: def Uc}
\Uc^{\frac12}(\ell,\alpha, \nu^{\alpha}_{\ell-1})(A):=
\frac{\int_{{\cal D}^{\alpha}_{\ell-1}} g^{\alpha}_{\ell}(I^{\alpha}_{\ell}, { \theta}) 
\1_{\{{ \theta}\in A\}}
 {\nu^{\alpha}_{\ell-1}}  (d\theta)}{\int_{{\cal D}^{\alpha}_{\ell-1}} g^{\alpha}_{\ell}(I^{\alpha}_{\ell},  {\theta}) \nu^{\alpha}_{\ell-1}(d\theta)}
 \end{align}
for a Borel set $A$ of 
$
{\cal D}^{\alpha}_{\ell-1}:={\cal T}^{\alpha}_{\ell-1} (\Rb^{\Ns}\times \Sb^{\Ns}).
$ 
From this, one can deduce the law ${\nu^{\alpha}_{\ell}}$ of $\tilde \theta^{\alpha}_{ {\ell}}= {{\cal T}^{\alpha}_{\ell}(\tilde \theta)}$ given $\Fc^{\alpha}_{\ell}$, in the form 
$$
{\nu^{\alpha}_{\ell}}=\Uc(\ell,\alpha, \nu^{\alpha}_{\ell-1}),
$$ 
 by simply integrating on the components corresponding to indexes that are not in $\Ik^{\alpha}_{\ell}$ (meaning that $\Uc$ is explicit in terms of $\Uc^{\frac12}$).
\vs2

We are now in position to state our dynamic programming principle, see e.g.~\cite{easley1988controlling}. Again, note that {the law of} $f^{\rm ad}_{p}(\ell{+1},\alpha',\tilde \theta) $ {given $\Fc^{\alpha'}_{\ell}$} depends on $\tilde \theta$ only through $\tilde \theta^{\alpha'}_{\ell}$. For ease of notations, we identify all measures to an element of $\Mc$ (even if it supported by a space smaller than $\Rb^{\Ns}\times \Sb^{\Ns}$). 
\begin{proposition}\label{prop: dpp strat random} Let Assumption \ref{hyp: loi condi P sachant theta} hold. Then, for all $\alpha\in \Ac^{\rm ad}$, $0\le \ell\le L - 2$ and $\nu \in \Mc$, 
 \begin{align*}
\hat \Frm^{\rm ad}_{p}(\ell,\alpha,\nu)=\essinf{\alpha'\in \Ac^{\rm ad}(\ell,\alpha)}  \Eb^{\nu}[\hat \Frm^{\rm ad}_{p}(\ell+1 ,  \alpha',\Uc(\ell+1,\alpha', \nu)) + f^{\rm ad}_{p}(\ell+1,\alpha',\tilde \theta) | \Fc^{\alpha}_{\ell}].
\end{align*}
\end{proposition}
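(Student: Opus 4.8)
The plan is to establish the dynamic programming principle for $\hat\Frm^{\rm ad}_p$ by the standard two-inequality argument, adapted to the present Bayesian-filtering setting. The key structural observation, already emphasized before the statement, is that $\Frm^{\rm ad}_p(\ell,\alpha',\nu)$ is a conditional expectation given $\Fc^{\alpha'}_\ell$ of a sum of terms $f^{\rm ad}_p(l,\alpha',\tilde\theta)$, $l>\ell$, plus the terminal Monte Carlo term, and that each such term depends on $\tilde\theta$ only through the reduced vector $\tilde\theta^{\alpha'}_\ell$ (and on the new simulations drawn after step $\ell$). Consequently, conditionally on $\Fc^{\alpha'}_\ell$, the law of $\tilde\theta^{\alpha'}_\ell$ is exactly $\Uc(\ell,\alpha',\nu)$ by Assumption \ref{hyp: loi condi P sachant theta} and \eqref{eq: def Uc}, and the problem ``restarted at step $\ell$'' is a genuine copy of the original problem with prior $\Uc(\ell,\alpha',\nu)$ and with the first $\ell$ coordinates of the control frozen. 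First I would record this Markov/filtering property precisely: for $\alpha'\in\Ac^{\rm ad}(\ell,\alpha)$,
\begin{equation*}
\Frm^{\rm ad}_p(\ell,\alpha',\nu)=\Eb^{\nu}\!\left[\Frm^{\rm ad}_p(\ell+1,\alpha',\Uc(\ell+1,\alpha',\nu))+f^{\rm ad}_p(\ell+1,\alpha',\tilde\theta)\,\big|\,\Fc^{\alpha'}_\ell\right],
\end{equation*}
which is just the tower property applied to $\Fc^{\alpha'}_\ell\subset\Fc^{\alpha'}_{\ell+1}$ together with the identification of the conditional law of $\tilde\theta^{\alpha'}_{\ell+1}$ as $\Uc(\ell+1,\alpha',\nu)$.

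For the first inequality ($\le$), I would take any $\alpha'\in\Ac^{\rm ad}(\ell,\alpha)$ and, on the event that the step-$(\ell+1)$ data $I^{\alpha'}_{\ell+1}$ has been observed, choose an $\eps$-optimizer $\alpha''\in\Ac^{\rm ad}(\ell+1,\alpha')$ for $\hat\Frm^{\rm ad}_p(\ell+1,\alpha',\Uc(\ell+1,\alpha',\nu))$; a measurable-selection argument (the essential infimum over $\Ac^{\rm ad}$ is attained up to $\eps$ along a countable family, using continuity of $g$ in its first argument from Assumption \ref{hyp: loi condi P sachant theta}) produces a single admissible control agreeing with $\alpha'$ up to step $\ell+1$. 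Plugging this concatenated control into the displayed tower-property identity and using $\Frm^{\rm ad}_p(\ell+1,\alpha'',\cdot)\ge\hat\Frm^{\rm ad}_p(\ell+1,\alpha'',\cdot)$ is not quite what we want; rather one uses that the concatenation satisfies $\Frm^{\rm ad}_p(\ell+1,\cdot)\le\hat\Frm^{\rm ad}_p(\ell+1,\cdot)+\eps$, takes expectations, and then infima over $\alpha'$ and lets $\eps\downarrow0$. For the reverse inequality ($\ge$), I would start from an arbitrary $\alpha'\in\Ac^{\rm ad}(\ell,\alpha)$, apply the tower-property identity, and bound $\Frm^{\rm ad}_p(\ell+1,\alpha',\Uc(\ell+1,\alpha',\nu))\ge\hat\Frm^{\rm ad}_p(\ell+1,\alpha',\Uc(\ell+1,\alpha',\nu))$ pointwise (since $\alpha'\in\Ac^{\rm ad}(\ell+1,\alpha')$), then take the essential infimum over $\alpha'$ inside the conditional expectation, noting that the constraint set $\Ac^{\rm ad}(\ell,\alpha)$ decomposes into a choice of $\alpha'_{\ell+1}$ at step $\ell+1$ followed by a choice in $\Ac^{\rm ad}(\ell+1,\alpha')$.

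The main obstacle is the measurable-selection / stability step needed to make the ``$\le$'' direction rigorous: one must verify that the map $(\text{data up to step }\ell+1)\mapsto\hat\Frm^{\rm ad}_p(\ell+1,\alpha',\Uc(\ell+1,\alpha',\nu))$ is measurable and that $\eps$-optimal controls can be glued measurably across the realization of $I^{\alpha'}_{\ell+1}$. This is exactly where Assumption \ref{hyp: loi condi P sachant theta} is used — boundedness and continuity in the first argument of the density $g$ ensure that $\nu\mapsto\Uc(\ell+1,\alpha',\nu)$ is continuous (weakly) and that the value function depends measurably on the observed paths, so that a standard selection theorem (e.g.\ Jankov–von Neumann, or the elementary version available here because the control takes values in the countable set $\Qc\times\Nc$) applies. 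Everything else — the tower property, the pointwise inequalities, the reduction of the constraint set — is routine. I would therefore present the proof as: (i) the restart identity via tower property and Bayesian update; (ii) ``$\ge$'' by pointwise minoration and taking essential infimum; (iii) ``$\le$'' by $\eps$-optimal measurable concatenation; and conclude by combining the two and letting $\eps\downarrow0$.
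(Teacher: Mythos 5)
Your outline is sound, but there is no line-by-line comparison to make: the paper states Proposition \ref{prop: dpp strat random} without proof, referring the reader to the literature on controlled processes with unknown parameters (\cite{easley1988controlling}). What you propose is the canonical argument that backs up such a statement, and the ingredients you isolate are the right ones: (i) the restart identity, i.e.\ the tower property for $\Fc^{\alpha'}_{\ell}\subset\Fc^{\alpha'}_{\ell+1}$ combined with the fact — Assumption \ref{hyp: loi condi P sachant theta}, \eqref{eq: def Uc} and the footnote's convention that only the conditional law of the relevant components of $\tilde\theta$ enters — that the law of $\tilde\theta^{\alpha'}_{\ell+1}$ given $\Fc^{\alpha'}_{\ell+1}$ is $\Uc(\ell+1,\alpha',\nu)$, so the inner conditional expectation is $\Frm^{\rm ad}_{p}(\ell+1,\alpha',\Uc(\ell+1,\alpha',\nu))$; (ii) the inequality $\ge$ by minorizing $\Frm^{\rm ad}_{p}(\ell+1,\alpha',\cdot)$ by $\hat\Frm^{\rm ad}_{p}(\ell+1,\alpha',\cdot)$, using $\Fc^{\alpha'}_{\ell}=\Fc^{\alpha}_{\ell}$ and the fact that the right-hand side depends on $\alpha'$ only through its first $\ell+1$ components; (iii) the inequality $\le$ by concatenating $\eps$-optimal continuations. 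The only schematic step is the gluing in (iii); your countability remark about $\Qc\times\Nc$ does make a selection argument elementary, but a cleaner, selection-free route is to note that the family $\{\Frm^{\rm ad}_{p}(\ell+1,\alpha'',\cdot):\alpha''\in\Ac^{\rm ad}(\ell+1,\alpha')\}$ is directed downward — given two continuations, switch to the better one on the corresponding $\Fc^{\alpha'}_{\ell+1}$-measurable event, checking that the glued control remains predictable for its own filtration (the filtrations coincide up to step $\ell+1$) and satisfies the pathwise budget constraint — so the essential infimum is the a.s.\ limit of a decreasing sequence and can be exchanged with the conditional expectation by monotone convergence, dispensing with $\eps$ altogether. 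With that point made precise, your two inequalities yield exactly the stated identity.
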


In principle, this dynamic programming algorithm allows one to estimate numerically the optimal policy $\alpha^{\star}$ in a feed-back form, off-line. Importantly, solving this problem given an initial prior $\nu_{0}$ is very different from first estimating the parameter $\tilde \theta$ and then solving the control problem as if $\tilde \theta$ was given. In the first case, we take into account the risk due to the uncertainly on the true value of $\tilde \theta$, not in the second one. 
\begin{remark}\label{rem : tau rho}
In practice, the algorithm requires estimating and manipulating the law of a high-dimensional parameter, at least at the first steps. But the above can be modified by changing the filtration $(\Fc^{\alpha}_{\ell})_{\ell \le L}$ in $(\bar \Fc^{\alpha}_{\ell})_{\ell \le L}$ with $\bar \Fc^{\alpha}_{\ell}$ $=$ $ \sigma(\1_{\ell \ge \tau^{\alpha}}P_{j}^{i}, (i,j)\in \Ik^{\alpha}_{\ell}\times \[{1,N^{\alpha}_{\ell}} )$ with $\tau^{\alpha}:=\inf\{l\le L: q^{\alpha}_{l}\le \rho\}$ for some $\rho>0$. In this case, no additional information is considered up to step $\tau^{\alpha}$, the update of the prior only takes place from step $\tau^{\alpha}$ on and it only concerns $\tilde \theta^{\alpha}_{\tau^{\alpha}}$ whose dimension is controlled by $\rho$. As for the first steps of the algorithm, namely before $\tau^{\rho}$, one can replace $f^{\rm ad}_{p}$ by a robust version in the spirit of Section \ref{subsec: DPP determinist}. 
\end{remark}
\begin{remark}\label{rem: prior GW}
The algorithm also requires knowing the conditional density $g^{\alpha}_{\ell}$. Although, $P_{|\sr}$ can be simulated, its conditional density is not known in general. However, one can use a proxy and/or again modify the flow of information to reduce to a more explicit situation. Let us consider the situation in which $(\Fc^{\alpha}_{\ell})_{\ell \le L}$ is replaced by $(\bar \Fc^{\alpha}_{\ell})_{\ell \le L}$ with $\bar \Fc^{\alpha}_{\ell}$ $=$ $\bar \Fc^{\alpha}_{\ell-1}\vee \sigma(\delta \hat \mu^{{\alpha},i}_{\ell}, i \in \Ik^{\alpha}_{\ell} )$ and $\bar \Fc^{\alpha}_{0}$ is trivial. Then, conditionally to $\bar \Fc^{\alpha}_{\ell-1}\vee \sigma(\tilde \theta^{\alpha}_{\ell-1})$, $\sqrt{\delta N^{\alpha}_{\ell}}(\tilde \Sigma^{\alpha}_{\ell})^{-1}\left(\delta \hat \mu^{\alpha}_{\ell}-\tilde \mu^{\alpha}_{\ell}\right)$ is asymptotically Gaussian as $\delta N^{\alpha}_{\ell}$ increases to infinity. In practice, we can do as if $\sqrt{\delta N^{\alpha}_{\ell}}(\tilde \Sigma^{\alpha}_{\ell})^{-1} \left(\delta \hat \mu^{\alpha}_{\ell}-\tilde \mu^{\alpha}_{\ell}\right)$ was actually following a standard Gaussian distribution, conditionally to $\tilde \theta^{\alpha}_{\ell}$ and $\Fc^{\alpha}_{\ell-1}$, which provides an explicit formula for the conditional density $\bar g^{\alpha}_{\ell}$ of $\delta \hat \mu^{\alpha}_{\ell}$ given $\tilde \theta^{\alpha}_{\ell-1}$ and $\Fc^{\alpha}_{\ell-1}$, to be plugged into \eqref{eq: def Uc}. Namely, the updating procedure takes the form 
$$
\nu^{\alpha}_{\ell}=\check \Uc(\ell,\alpha, \nu^{\alpha}_{\ell-1})
$$ 
where $\check \Uc$ is explicit.

Then, if the initial prior $\nu_{0}$ is such that $(\tilde \mu,\tilde \Sigma)$ is a Normal-inverse-Wishart distribution, all the posterior distribution $\nu^{\alpha}_{\ell}$, $\ell\le L$, are such that $(\tilde \mu,\tilde \Sigma)$ remains in the class of Normal-inverse-Wishart distributions with parameters that can computed explicitly from our simulations. Namely, if, given $\bar \Fc^{\alpha}_{\ell}$, $\tilde \Sigma$ has the distribution\footnote{Hereafter ${\cal W}^{-1}_{\is}(\Sigma)$ stands for the Inverse-Wishart distribution with degree of freedom $\is$ and scale matrix $\Sigma$, while ${\cal N}(\ms,\Sigma)$ is the Gaussian distribution with mean $\ms$ and covariance matrix $\Sigma$.} ${\cal W}_{\is^{\alpha}_{\ell}}^{-1}(\Sigma^{\alpha}_{\ell})$ and $\tilde \mu$ has the distribution ${\cal N}(\ms^{\alpha}_{\ell},\tilde \Sigma/\ks^{\alpha}_{\ell})$ given $\tilde \Sigma$, then the coefficients corresponding to the law given $\bar \Fc^{\alpha}_{\ell+1}$ are {\color{red}}
\begin{align}\label{eq: formule de transition}
&\left\{\begin{array}{rl}
\is^{\alpha}_{\ell+1}=&\is^{\alpha}_{\ell}+\delta N^{\alpha}_{\ell+1},\;\ks^{\alpha}_{\ell+1}=\ks^{\alpha}_{\ell}+\delta N^{\alpha}_{\ell+1},\;\ms^{\alpha}_{\ell+1}=\frac1{\kappa^{\alpha}_{\ell} + \delta N^{\alpha}_{\ell+1}} \left[\kappa^{\alpha}_{\ell} {\cal T}^{\Ik^{\alpha}_{\ell+1}}_{\Ik^{\alpha}_{\ell}} (\ms^{\alpha}_{\ell})+\delta N^{\alpha}_{\ell+1}\delta \hat \mu^{\alpha}_{\ell+1}\right]\\
\Sigma^{\alpha}_{\ell+1}=& {\cal T}^{\Ik^{\alpha}_{\ell+1}}_{\Ik^{\alpha}_{\ell}}(\Sigma^{\alpha}_{\ell})+\sum_{j=N^{\alpha}_{\ell}+1}^{N^{\alpha}_{\ell+1}} ({\cal T}^{\alpha}_{\ell+1}(P_{j})-\delta \hat \mu^{\alpha}_{\ell+1} )({\cal T}^{\alpha}_{\ell+1}(P_{j})-\delta \hat \mu^{\alpha}_{\ell+1})^{\top}\\
&+ \frac{\kappa^{\alpha}_{\ell} \delta N^{\alpha}_{\ell+1}}{\kappa^{\alpha}_{\ell} + \delta N^{\alpha}_{\ell+1}}({\cal T}^{\Ik^{\alpha}_{\ell+1}}_{\Ik^{\alpha}_{\ell}}(\ms^{\alpha}_{\ell})- \delta \hat \mu^{\alpha}_{\ell+1})({\cal T}^{\Ik^{\alpha}_{\ell+1}}_{\Ik^{\alpha}_{\ell}}(\ms^{\alpha}_{\ell})-\delta \hat \mu^{\alpha}_{\ell+1})^{\top} ,
\end{array}\right.\end{align}
see e.g.~\cite[Section 9]{NWishart}. Later on, we shall write the corresponding law as ${\cal NW}^{-1}(\ps^{\alpha}_{\ell+1})$ with 
$$
\ps^{\alpha}_{\ell+1}:=(\ms^{\alpha}_{\ell+1},\ks^{\alpha}_{\ell+1},\is^{\alpha}_{\ell+1},\Sigma^{\alpha}_{\ell+1}).
$$
\end{remark}

\subsection{Example of numerical implementation using neural networks}\label{subsec: neural network}
In this section, we aim at solving the version of the dynamic programming equation of Proposition \ref{prop: dpp strat random}, using an initial Normal-inverse-Wishart prior and the approximate updating procedure suggested in Remark \ref{rem: prior GW}:
 \begin{align*}
\check \Frm^{\rm ad}_{p}(\ell,\alpha,\nu)=\essinf{\alpha'\in \Ac^{\rm ad}(\ell,\alpha)} \Eb^{\nu}[\check \Frm^{\rm ad}_{p}(\ell+1 ,  \alpha',\check \Uc(\ell+1,\alpha', \nu)) + f^{\rm ad}_{p}(\ell+1,\alpha',\tilde \theta) | \bar \Fc^{\alpha}_{\ell} ],
\end{align*}
with $\check \Uc$ as in Remark \ref{rem: prior GW} and 
$$
\check \Frm^{\rm ad}_{p}(L - 1,\alpha,\nu):= \Eb^{\nu}\left[\left|\frac{1}{\Nw} \sum_{i\in \Ik_{L - 1}^{\alpha'}} \hat \mu_{L}^{\alpha',i}-   \tilde \mu^{i} \right|^p  ~\Bigg|~\bar \Fc^{\alpha}_{L - 1}\right] .
$$
It would be tempting to use a standard grid-based approximation. However, to turn this problem in a Markovian one, one needs to let the value function at step $\ell$ depend on $q^{\alpha}_{\ell}$, $N^{\alpha}_{\ell}$, $C^{\alpha}_{\ell}$, $\hat \mu^{\alpha}_{\ell}$ and $\ps^{\alpha}_{\ell}$, where $C^{\alpha}_{\ell}$ is the running cost of strategy $\alpha$ up to level $\ell$, defined for $\ell \neq 0$ by $C^{\alpha}_{\ell} = \sum_{l=0}^{\ell - 1} q^{\alpha}_l \delta N^{\alpha}_{l+1} $ and $C^{\alpha}_0 = 0$. The dimension is then $1+1+1+q^{\alpha}_{\ell}+( 1+ q^{\alpha}_{\ell}+ 1+ (q^{\alpha}_{\ell})^{2})$. Even for $q^{\alpha}_{\ell}=20$, the corresponding space is already much too big to construct a reasonable grid on it. We therefore suggest using a neural network approximation. 
Let us consider a family of bounded continuous functions $\{\phi_{\x}, \x \in \X\}$, $\X$ being a compact subset of $\Rb^{d_{\X}}$ for some $d_{\X}\ge 1$, such that, for all $q, \delta q\le \Ns$ and $N, \delta N\ge 1$, 
$$
\phi_{\cdot}(\delta q, \delta N,q,N,C,\cdot):(\x,\mu,\ps)\in \X \times \Rb^{q}\times \Rb^{3+q+q^{2}} \mapsto \phi_{\x}(\delta q, \delta N,q,N,C,\mu,\ps)\in \Rb\;\mbox{ is continuous.}
$$
We then fix a family $\{\alpha^{k}\}_{k\le \bar k}$ of deterministic paths of $\Ac(0)$ and simulate independent copies $\{\tilde \theta^{j}\}_{j\le \bar j}$ of $\tilde \theta$ according to $\nu_{0}$, a Normal-inverse-Wishart distribution ${\cal NW}^{-1}(\ps_{0})$. For each $j$, we consider an i.i.d.~sequence $(P_{j'}^{j,1}, \ldots, P_{j'}^{j,\Ns})_{j'\ge 1}$ in the law $\Nc(\tilde \mu^{j},\tilde \Sigma^{j})$ with $\tilde \theta^{j}=:(\tilde \mu^{j},\tilde \Sigma^{j})$. We take these sequences independent and independent of $\tilde \theta$. 
For each $k$ and $j$, we denote by $(\hat \mu^{k,j}_{\ell})_{\ell \le L}$, $(\tilde \ps^{k,j}_{\ell})_{\ell \le L}$ and $(\Ik_{\ell}^{k,j})_{\ell\le L}$ the paths $(\hat \mu^{\alpha^{k}}_{\ell})_{\ell \le L}$, $(\ps^{\alpha^{k}}_{\ell})_{\ell \le L}$ and $(\Ik_{\ell}^{\alpha^{k}})_{\ell \le L}$ associated to the $j$-th sequence $(P_{j'}^{j,1}, \ldots, P_{j'}^{j,\Ns})_{j'\ge 1}$ and the control $\alpha^{k}$. Similarly, we write $f_{p}^{{\rm ad},k,j}(\ell,\cdot)$ to denote the function $f_{p}^{\rm ad}(\ell,\cdot)$ defined as in \eqref{eq: def fap} but in terms of $\Ik_{\ell-1}^{k,j}$ in place of $\Ik_{\ell-1}^{\alpha}$. 
Given an integer $r\ge 1$, we first compute $\check \x_{L - 1}$ as the argmin over $\x \in \X$ of 
\begin{equation*} \label{eq:expectation_last_step}
\sum_{k=1}^{\bar k} \sum_{j=1}^{\bar j} \left| \Eb^{\nu^{k,j}_{L - 1}}_{L - 1}\left[\left| \frac{1}{\Nw} \sum_{i\in \Ik_{L - 1}^{k,j}} (\hat \mu_{L}^{{k},j})^{i} - \tilde \mu^{i} \right|^p\right] - \phi_{\x}(0, {\delta N^{\alpha^{k}}_{L}},q^{\alpha^{k}}_{L - 1},N^{\alpha^{k}}_{L - 1},C^{\alpha^{k}}_{L - 1},\hat \mu^{k,j}_{L - 1},  \ps^{k,j}_{L - 1})  \right|^{r} 
\end{equation*}
in which $\Eb^{\nu^{k,j}_{L - 1}}_{L - 1}$ means that the expectation is taken only over $\tilde \mu$ according to the law $\nu^{k,j}_{L - 1}$, i.e.~${\cal NW}^{-1}(\ps^{k,j}_{ {L-1}})$, and $(\cdot)^{i}$ means that we take the $i$-th component of the vector in the brackets. 
Then, for any $\alpha \in \Ac^{\rm ad}$, 
we set 
 $$
\check \phi_{L-1}(q^{\alpha}_{L-1},N^{\alpha}_{L-1},C^{\alpha}_{L-1}, \cdot)
:=
\min_{{\left(0, \delta N\right)} \in {\rm A}(L-1,\alpha)} \phi_{\check \x_{L - 1}}(0, \delta N, q^{\alpha}_{L-1},N^{\alpha}_{L-1}, {C^{\alpha}_{L-1} }, \cdot), 
 $$
 where 
 $$
 {\rm A}(L-1,\alpha):=\{(\delta q,\delta N)\in \{0\}\times {\mathbb N} :  C^{\alpha}_{L-1}+  \Nw \delta N \le K \}.
 $$

Given $\check \phi_{\ell+1}$ for some $\ell\le L-2$, we then compute a minimizer $\check \x_{\ell}\in \X$ of 
\begin{equation*} \label{eq:expectation_non_last_steps}
\begin{split}
\sum_{k=1}^{\bar k} \sum_{j=1}^{\bar j} &\bigg| \Eb^{\nu^{k,j}_{\ell}}_{\ell}\left[\check \phi_{\ell+1}( q^{\alpha^{k}}_{\ell+1}, N^{\alpha^{k}}_{\ell+1},C^{\alpha^{k}}_{\ell + 1}, \hat \mu^{k}_{\ell+1},\ps^{k}_{\ell+1}) + f^{{\rm ad},k,j}_{p}(\ell+1,\alpha^{k},\tilde \theta) \right] \\
&- \phi_{\x}(\delta q^{\alpha^{k}}_{\ell+1}, \delta N^{\alpha^{k}}_{\ell+1},q^{\alpha^{k}}_{\ell},N^{\alpha^{k}}_{\ell},C^{\alpha^{k}}_{\ell},\hat \mu^{k,j}_{\ell},\ps^{k,j}_{\ell})  \bigg|^{r},
\end{split}
\end{equation*}
 where $\Eb^{\nu^{k,j}_{\ell}}_{\ell}$ means that the expectation is computed over $(\hat \mu^{\alpha^{k}}_{\ell+1},\ps^{\alpha^{k}}_{\ell+1} ,\tilde \theta, \tilde \mk^{\alpha^{k}}_{\ell})$ given $(\hat \mu^{k}_{\ell},\ps^{k}_{\ell} )=(\hat \mu^{k,j}_{\ell},\ps^{k,j}_{\ell})$ and using the prior $\nu^{k,j}_{\ell}$ on $\tilde \theta$ associated to $\ps^{k,j}_{\ell}$. Then, we set 
 $$
\check \phi_{\ell}(q^{\alpha}_{\ell},N^{\alpha}_{\ell},C^{\alpha}_{\ell}, \cdot)
:=
\min_{{\left(\delta q, \delta N\right)} \in {\rm A}(\ell,\alpha)} \phi_{\check \x_{\ell}}(\delta q, \delta N, q^{\alpha}_{\ell},N^{\alpha}_{\ell}, {C^{\alpha}_{\ell}} , \cdot), 
 $$
 where 
 \begin{align*}
 {\rm A}(\ell,\alpha)&:=\{(\delta q,\delta N)\in \[{ {0},q^{\alpha}_{\ell}-  {\Nw} }\times {\mathbb N} :   {C^{\alpha}_{\ell}}+  {(q^{\alpha}_{\ell} -\delta q)}\delta N\le K \},\;\ell<L-2,\\
 {\rm A}(L-2,\alpha)&:=\{(\delta q,\delta N)\in \{q^{\alpha}_{\ell}- \Nw \}\times {\mathbb N} :   {C^{\alpha}_{L-1}}+  {(q^{\alpha}_{L-2} -\delta q)}\delta N\le K \},
 \end{align*}
and so on until obtaining $\phi_{ {0}}( {\Ns,0},0,0,\ps_{0})$. By continuity of $\phi_{\cdot}(\cdot)$ and compactness of $\X$ and ${\rm A}(\ell,\alpha)$ for $\alpha$ given, the minimum is achieved in the above, possibly not unique, and one can choose a measurable map $  {\rm a}^{\star}_{ {\ell}}$ such that 
$$
  {\rm a}^{\star}_{ {\ell}}(q^{\alpha}_{\ell},N^{\alpha}_{\ell},C^{\alpha}_{\ell}, \cdot)\in \mbox{arg}\min_{{\left(\delta q, \delta N \right)} \in {\rm A}(\ell,\alpha)} \phi_{\check x^{N}_{\ell}}(\delta q, \delta N,q^{\alpha}_{\ell},N^{\alpha}_{\ell},  {C^{\alpha}_{\ell}} ,\cdot)
$$
for all $\alpha \in \Ac^{\rm ad}$. Then, given the parameter $\ps_{0}$ of our initial prior $\nu_{0}$, our estimator of the optimal policy is given by $\alpha^{\star}=(q^{\star},N^{\star})$ defined by induction by 
\begin{align*}
(\delta q^{\star}_{1},\delta N^{\star}_{1})&={\rm a}^{\star}_{ {0}}(\Ns,0,0,0,\ps_{0})\;\mbox{ and }\;
(\delta q^{\star}_{\ell+1},\delta N^{\star}_{\ell+1})={\rm a}^{\star}_{ {\ell}}(q^{\star}_{\ell},N^{\star}_{\ell}, C^{\star}_{\ell}, \hat \mu^{\alpha^{\star}}_{\ell},\ps^{\alpha^{\star}}_{\ell})\;\mbox{ for } 0<\ell<L.
\end{align*}

Note that the above algorithm for the estimation of the optimal control only requires off-line simulations according to the initial prior $\nu_{0}$. It is certainly costly but does not require to evaluate the real financial book, it can be trained on a proxy, and can be done off-line. It can be combined with the approach of Remark \ref{rem : tau rho} to reduce the computation time. In order to prepare for the use of a different initial prior, one can also slightly adapt the above algorithm by considering different initial values of $\ps_{0}$ (e.g.~drawn from another distribution around $\ps_{0}$), so as to estimate $\check \phi_{0}$ not only at the point $\ps_{0}$. When applied to the real book, the update of the prior according to \eqref{eq: formule de transition} leads to an additional cost that is negligible with respect to the simulation of the book. It leads to the computation of new priors associated to the financial book at hand, that can be used for a new estimation of the optimal policy or simply as a new initial prior for the next computation of the $\ES$.   

An example of a simple practical implementation is detailed in Appendix \ref{sec: precise implementation}, while numerical tests are performed in Section \ref{sec: tests numeriques}. 
 
\section{Numerical Experiments}\label{sec: tests numeriques}

This section is dedicated to first numerical tests of the different algorithms presented in the previous sections. The settings of the experiments are as follows. We first choose a Normal-inverse-Wishart prior distribution $\nu_0$ with parameters $\ps_{0}:=(\ms_{0},\ks_{0},\is_{0},\Sigma_{0})$. The vector $\ms_0$ is represented on Figure \ref{fig:book_sample_S1_raw_mu} with $\ms_{0}^{i}=\mu^{i}$, $i\le \Ns$, and $\Sigma_0 = (\is_0 - \Ns - 1) \Sigma$ where $\Sigma$ has entries 
\begin{equation} \label{eq:sigma_definition}
\begin{cases}
\Sigma^{ii} = 4.84 \times 10^{12} \textnormal{ if } i = j\\
\Sigma^{ij} = \rho\times 4.84 \times 10^{12} \textnormal{ if } i \neq j, 
\end{cases}
\end{equation}
 {with $\rho=0.6$ or $\rho=0$ depending on the experiments below.}
As for $\ks_0$ and $\is_0$, they are chosen equal to 300, meaning that we have a low confidence in our prior. The computing power is $K = 10^7$. 
\vspace{5mm}

We apply the four different algorithms on 5\,000 runs (i.e.~5\,000 independent implementations of each algorithm). For each run, we 
\begin{itemize} 
\item first simulate a value for the real scenarios and covariance matrices $\left(\tilde{\mu}, \tilde{\Sigma}\right) \sim \mathcal{NW}^{-1}(\ps_0)$,
\item apply each of the four algorithms, with simulated prices following $P_{ {|{\rm s}}} \sim \Nc\left(\tilde \mu, {\tilde \Sigma}\right)$, 
\item for each algorithm, we measure the relative error $\frac{\widehat{ES} - \widetilde \ES}{\widetilde \ES}$ and the error $\ESh - \widetilde\ES$, where $\widetilde \ES = \frac{1}{\Nw}\sum_{i=1}^{\Nw} \tilde{\mu}^{\tilde \mk(i)}$.
\end{itemize}
The four algorithms that we compare are:
\begin{itemize} 
\item A Uniform Pricing Algorithm: All the scenarios are priced with $K / \Ns$ Monte Carlo simulations, and the estimator $\ESh$ is the average of the $\Nw=6$ worst scenarios. This is the most naive method, with only one step and where all scenarios are priced with an equal number of Monte Carlo simulations. It serves as a benchmark.
\item The Heuristic Algorithm: We use the 2-levels strategy described in Section \ref{sec: 2 Level} with the book sample parameters of Table \ref{tab:sample_books_params} and the computation parameters of Table \ref{tab:computing_power}. {We do not evaluate the constant $c$ of Assumption \ref{hyp:sub gamma} but simply set it to $0$, see Remark \ref{rem: c=0}.} The optimal strategy is given by $\left(q_0, q_1, N_1, N_2\right) = (253, 68, 17\,297, 100\,000)$.
\item The Deterministic Algorithm: We run the deterministic algorithm of Section \ref{subsec: DPP determinist} optimized with $\mu=\ms_0$ as the values of the scenarios, $\Sigma$ with $\rho=0.6$ as the covariance matrix and $L=4$. Note that using the real mean parameter as an entry for optimization is quite favorable for this algorithm, although the ``true'' parameter of each run will actually deviate from this mean value. This gives us the strategy $\left(q_0, q_1, q_2, q_3, N_0, N_1, N_2, N_3, N_4 \right) = \left(253, 35, 10, 6, 0, 6\,000, 44\,000, 44\,000, 1\,235\,666\right)$, which we apply to each run.  
\item The Adaptative Algorithm: We do the training part of the adaptative algorithm using our prior $\ps_{0}:=(\ms_{0},\ks_{0},\is_{0},\Sigma_{0})$, with $\rho=0.6$, as parameters and $L=4$. We use a very simple one hidden-layer neural network. It could certainly be improved by using a more sophisticated multi-layers neural network, but this version will be enough for our discussion. Details on the implementation are given in the Appendix \ref{sec: precise implementation}.
 Once this is done, we apply the optimal adaptative strategy on each run. \end{itemize}

  \subsection{Positively correlated scenarios $\rho=0.6$ }
 
 In this first experiment, the simulated runs use the values $\rho=0.6$ and $\is_0 = \ks_0 = 300$. 
 \vspace{5mm}
 
To get an idea of how much noise is added to the average scenario values in our simulations, we plot in Figure \ref{fig:scenario_ranks_and_values_rho_0.6} the prior value $\ms_0^i$ for each scenario of index $i\le \Ns$ (this is the line) and the first $20$ $\tilde \mu^i_j$ out of the $5\,000$ runs for each scenario (these are the points). 

\begin{figure}[H]
  \centering
  \includegraphics[width=\linewidth]{./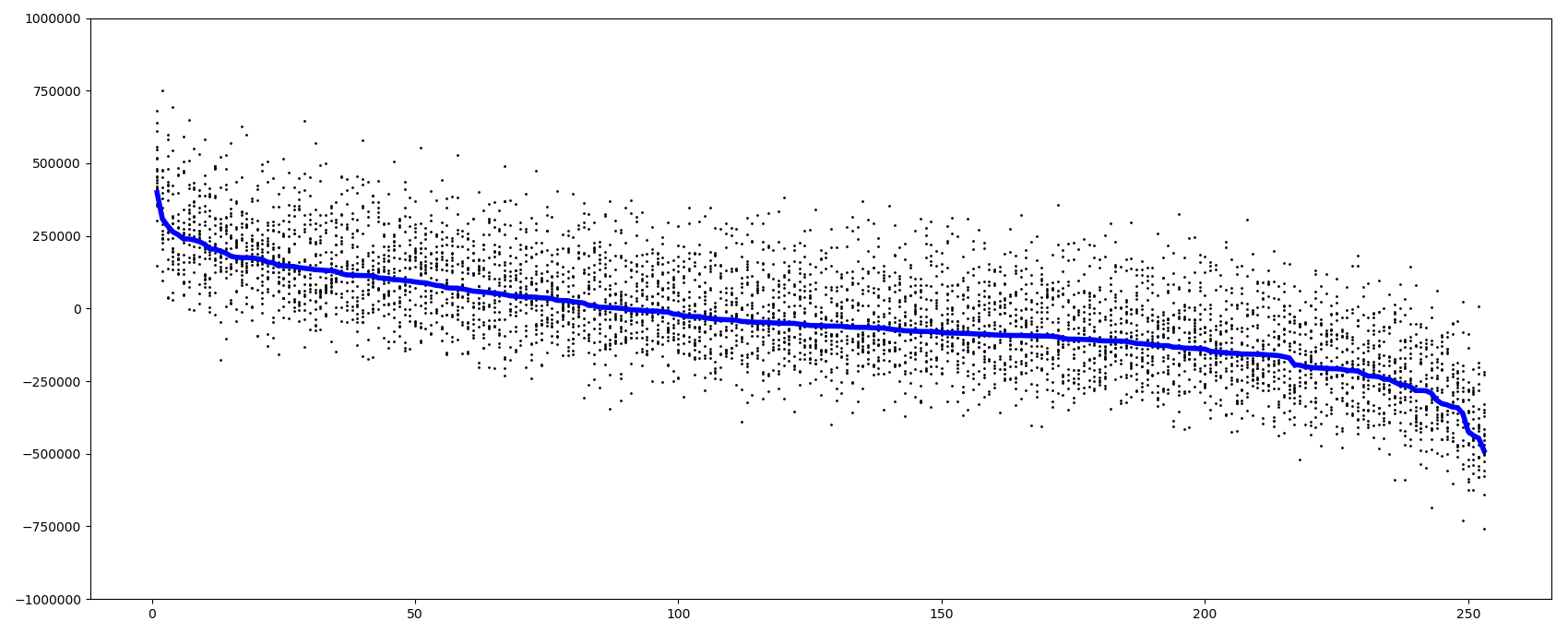}
  \caption{True value of $\mu_{\circ}$ and simulations of $\tilde \mu$}
  \label{fig:scenario_ranks_and_values_rho_0.6}
\end{figure}
 For the adaptative algorithm, the three mostly used strategies are:
\begin{itemize}
\item $\left(q_0, q_1, q_2, q_3, N_1, N_2, N_3, N_4\right) = \left(253, 40, 25, 6, 8\,399, 97\,995, 172\,504, 577\,252 \right)$  
\item $\left(q_0, q_1, q_2, q_3, N_1, N_2, N_3, N_4\right) = \left(253, 40, 30, 6, 8\,399, 99\,733, 148\,560, 608\,040 \right)$  
\item $\left(q_0, q_1, q_2, q_3, N_1, N_2, N_3, N_4\right) = \left(253, 40, 30, 6, 8\,399, 75\,033, 123\,860, 748\,007 \right)$ 
\end{itemize}
Compared to the deterministic algorithm, we see that the adaptative one uses much less Monte Carlo simulations at the final steps and focuses more on the intermediate steps to select the worst scenarios. The deterministic algorithm is also more aggressive in the choice of $q_{1}$ and $q_{2}$. This can be easily explained by the fact that the latter believes that the real distribution is not far from the solid curve on Figure \ref{fig:scenario_ranks_and_values_rho_0.6} (up to standard deviation) while the adaptative one only knows a much more diffuse distribution corresponding to the cloud of points of Figure \ref{fig:scenario_ranks_and_values_rho_0.6} since his level of uncertainty is quite high for our choice $\is_0 = \ks_0 = 300$. 
\vspace{2mm}

 On Figures \ref{fig:relative_error_histogram_ad_rho_0.6}-\ref{fig:relative_error_histogram_uni_rho_0.6}, we plot the histograms of the relative errors. We see that the distribution is tightest for the deterministic algorithm, followed quite closely by the adaptative algorithm. Both of them perform very well. As expected, the uniform algorithm is very poor. Note that the heuristic one already very significantly improves the uniform algorithm, although it does not reach the precision of the two most sophisticated algorithms (without surprise). Because of the huge uncertainty mentioned above, the adaptative algorithm is rather conservative while the deterministic algorithm makes full profit of essentially knowing the correct distribution, and performs better. We will see in our second experiment that things will change when we will deviate from the parameters used for optimizing the deterministic algorithm (by simply passing from $\rho=0.6$ to $\rho = 0$ in the simulated runs).
  
\begin{figure}[H]
\centering
\begin{minipage}[t]{0.35\linewidth}
\includegraphics[width=\linewidth]{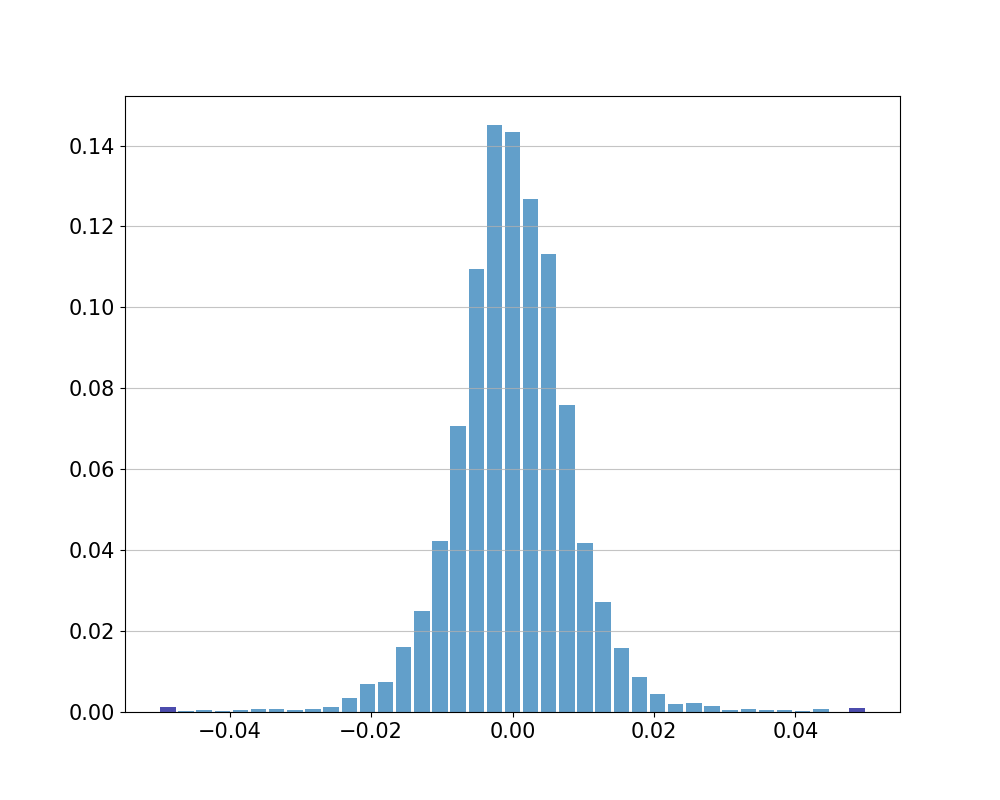}
\caption{Relative Error for Adaptative Algorithm}
\label{fig:relative_error_histogram_ad_rho_0.6}
\end{minipage}
  \begin{minipage}[T]{0.1\linewidth}
  \end{minipage}
\begin{minipage}[t]{0.35\linewidth}
\includegraphics[width=\linewidth]{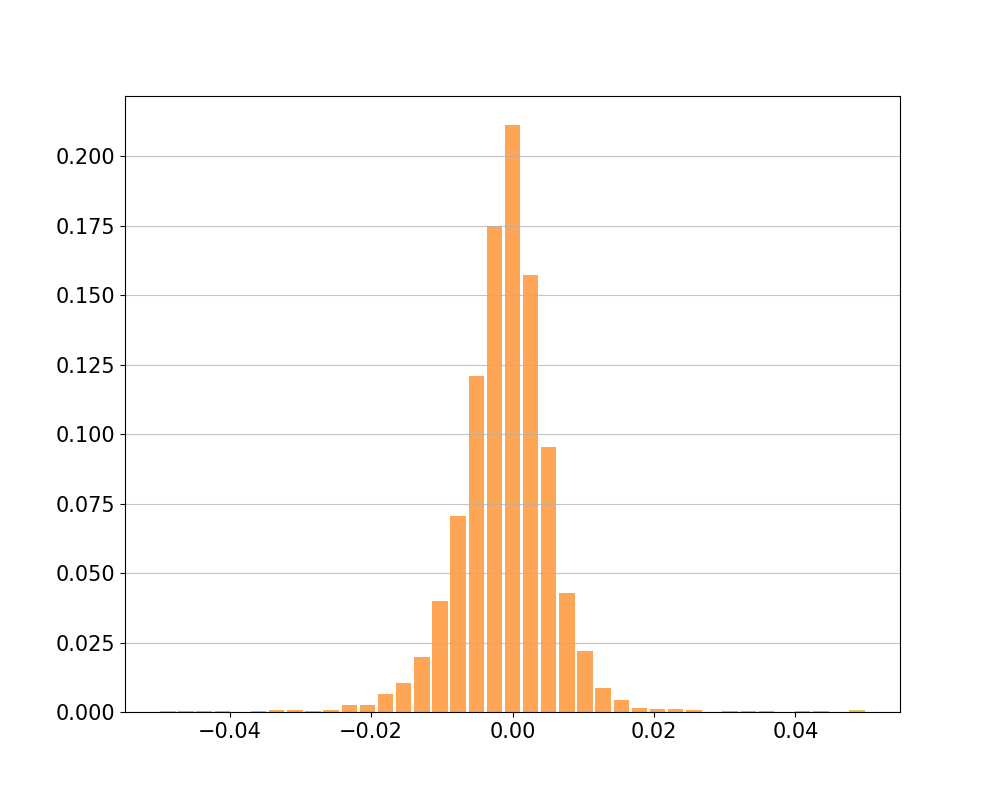}
\caption{Relative Error for Determinist Algorithm}
\label{fig:relative_error_histogram_det_rho_0.6}
\end{minipage}
\begin{minipage}[t]{0.35\linewidth}
\includegraphics[width=\linewidth]{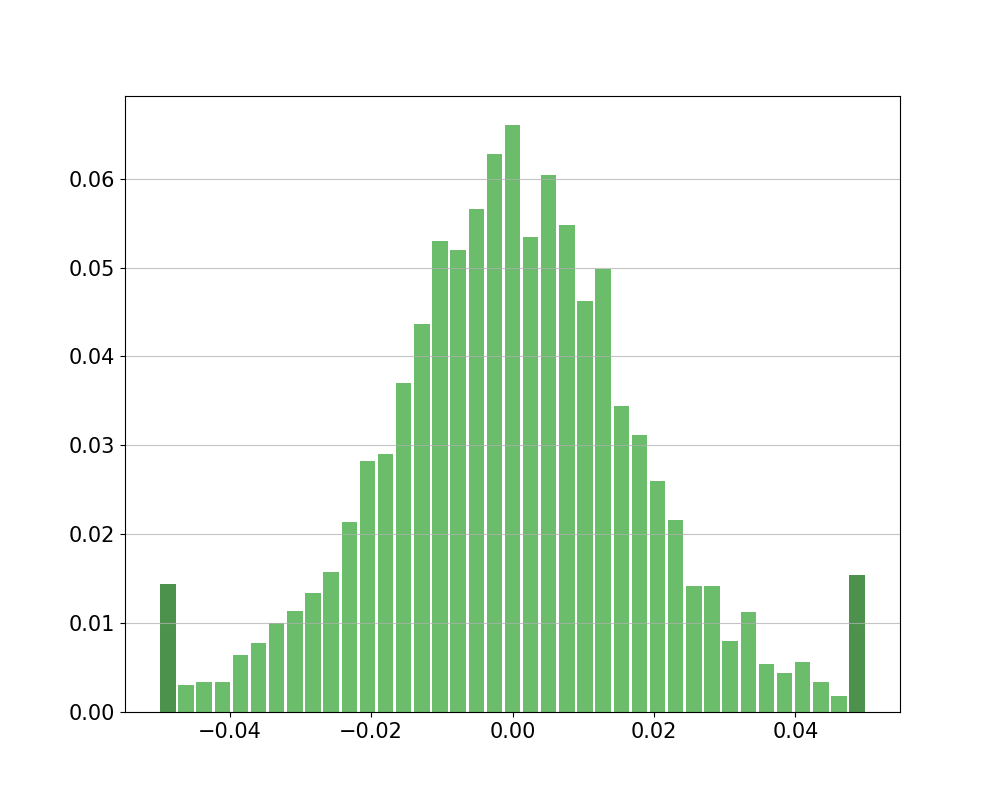}
\caption{Relative Error for Heuristic Algorithm}
\label{fig:relative_error_histogram_heur_1_rho_0.6}
\end{minipage}
\begin{minipage}[T]{0.1\linewidth}
\end{minipage}
\begin{minipage}[t]{0.35\linewidth}
\includegraphics[width=\linewidth]{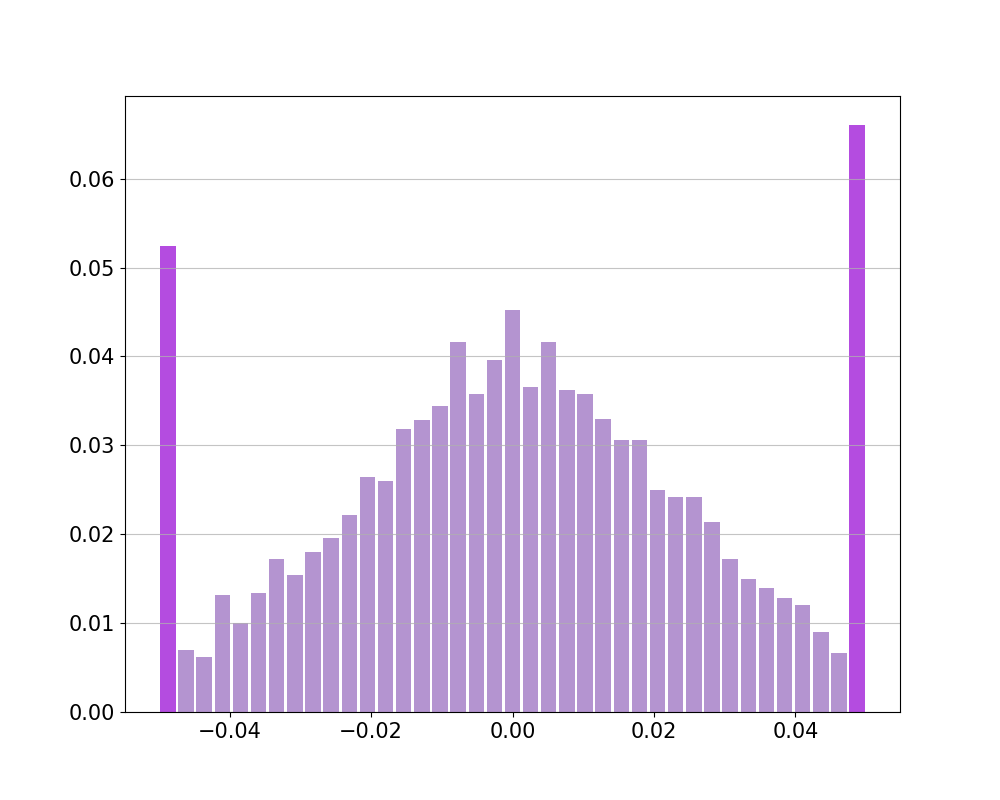}
\caption{Relative Error for Uniform Algorithm}
\label{fig:relative_error_histogram_uni_rho_0.6}
\end{minipage}
\end{figure}

In Table \ref{tab:K_10^7_errors_rho_0.6}, we provide the $\Lb^1$ and relative errors (with standard deviations), the $\Lb^2$ error and the number of correct selections, that is the number of runs for which a given algorithm has chosen the correct worst 6 scenarios. In terms of $\Lb^1$ or $\Lb^2$ error, the relative performance of the algorithms is as above. However, if we look at the number of correct selections, we see that the adaptive algorithm performs better than the other 3 algorithms. Again, by comparing the strategies of the deterministic and the adaptive algorithms, we see that those of the adaptative algorithm are more conservative on the ranking and filtering part versus the final pricing as it puts relatively more Monte Carlo simulations to detect the correct scenarios and relatively less for their estimation. 

\begin{table}[H]
\centering
\begin{tabular}{ |c|c|c|c|c|c|c|}
\hline
Algorithm & $\Lb^1$ Err. & $\Lb^1$ Err. Std & Rel. Err. (\%) & Rel. Err. Std (\%) & $\Lb^2$ Err. & Correct Selections\\
\hline
Ad. Alg. & 1 891 & 20.4 & 0.623 & 0.00886 & 2377 & 4247\\
Det. Alg. & 1 411 & 16.1 & 0.465 & 0.00693 & 1813 & 3499\\
Heur.~Alg. & 4 562 & 50.2 & 1.49 & 0.0234 & 5779 & 4054\\
Unif. Alg. & 7 269 & 81.6 & 2.38 & 0.0348 & 9279 & 3500\\
\hline
\end{tabular}
\caption{Errors for $\rho = 0.6$}
\label{tab:K_10^7_errors_rho_0.6}
\end{table}

In Figures \ref{fig:tail_distribution_error_histogram_ad_rho_0.6}, we plot the function $x \mapsto \mathbb{P}[X > 5000 - x]$ where $X$ is the absolute error of the algorithm on a run. 

\begin{figure}[H]
\centering
\begin{minipage}[t]{\linewidth}
\includegraphics[width=\linewidth]{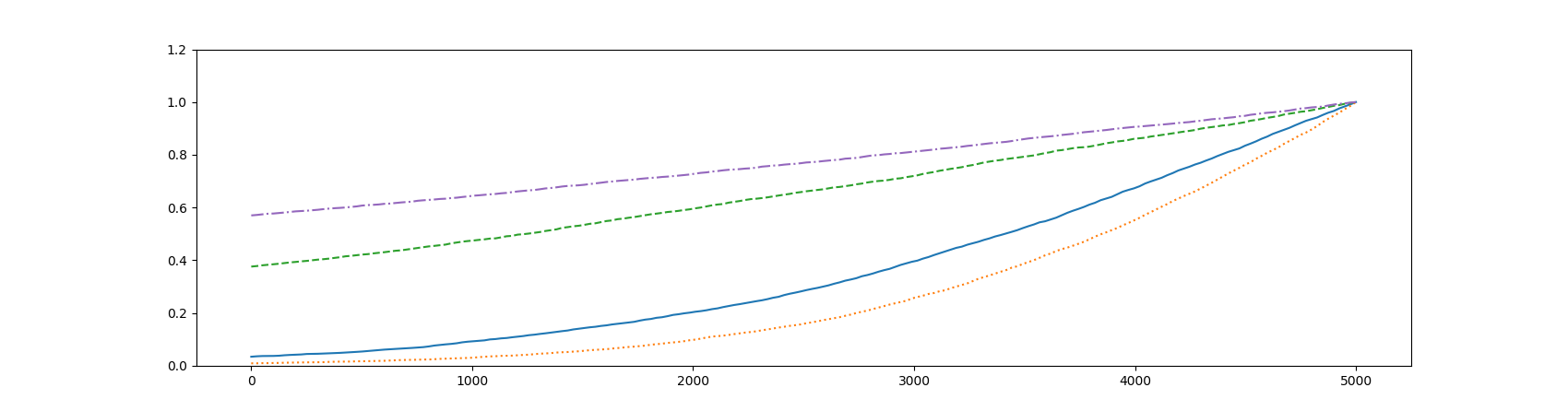}
\caption{Tail Distribution of the errors. First top lines: Uniform and Heuristic algorithms, respectively. Solid line: Adaptative algorithm. Dotted line: Deterministic algorithm. }
\label{fig:tail_distribution_error_histogram_ad_rho_0.6}
\end{minipage}
\end{figure}

In Figure \ref{fig:numbers_graph_rho_0.6}, we provide, for the first 4 runs, the values and real ranks of the 6 worst scenarios selected by each algorithm. The numbers displayed are the true ranks of the selected scenarios given by $\tilde{\mu}$ and their y-coordinate is the value obtained when running the algorithm. ``Real'' is the real values as sampled. 
\begin{figure}[H]
  \centering
  \includegraphics[width=\linewidth]{./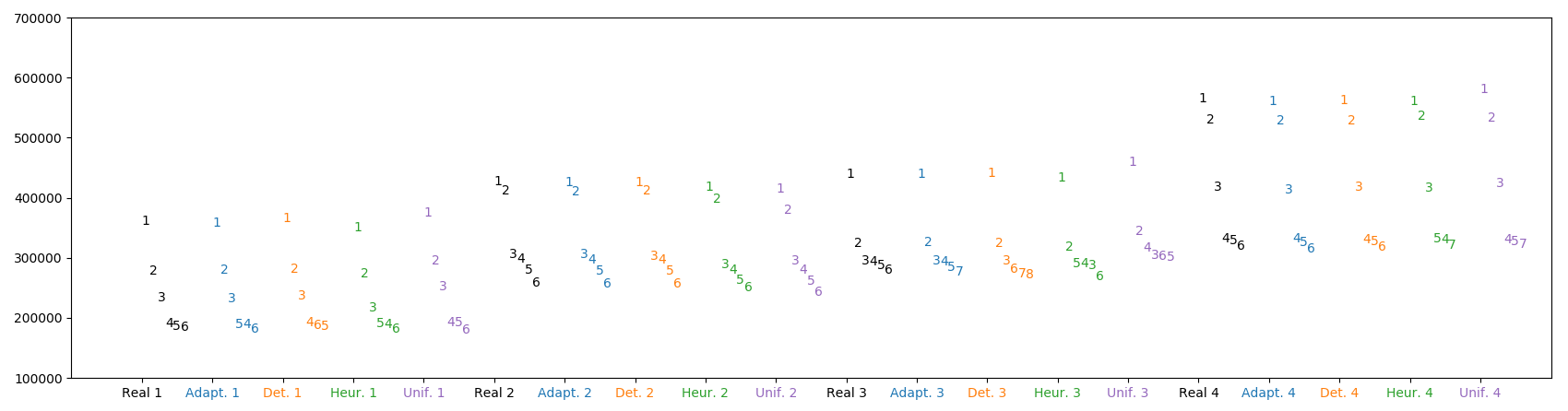}
  \caption{Worst Scenarios Ranks and Values}
  \label{fig:numbers_graph_rho_0.6}
\end{figure}

\subsection{Uncorrelated scenarios $\rho=0$}

We now do the numerical test with $\rho=0$ as the true correlation. The deterministic and adaptative algorithm are still trained with $\rho=0.6$, but $P_{|{\rm s}}$ is simulated using $\rho=0$.
 
On Figures \ref{fig:relative_error_histogram_ad_rho_0}-\ref{fig:relative_error_histogram_uni_rho_0}, we show the histograms of the relative errors. We see that the distribution of the relative errors is now tightest for the adaptative method, followed by the deterministic method, then by the heuristic and the uniform methods. Furthermore, we see that the distribution corresponding to the deterministic method is significantly biased to the left. This is actually true for all algorithms, but at a less significant level. This suggests that we now have a large part of the error that does not come from the final pricing error, but from errors in the selection of scenarios.

\begin{figure}[H]
\centering
\begin{minipage}[t]{0.35\linewidth}
\includegraphics[width=\linewidth]{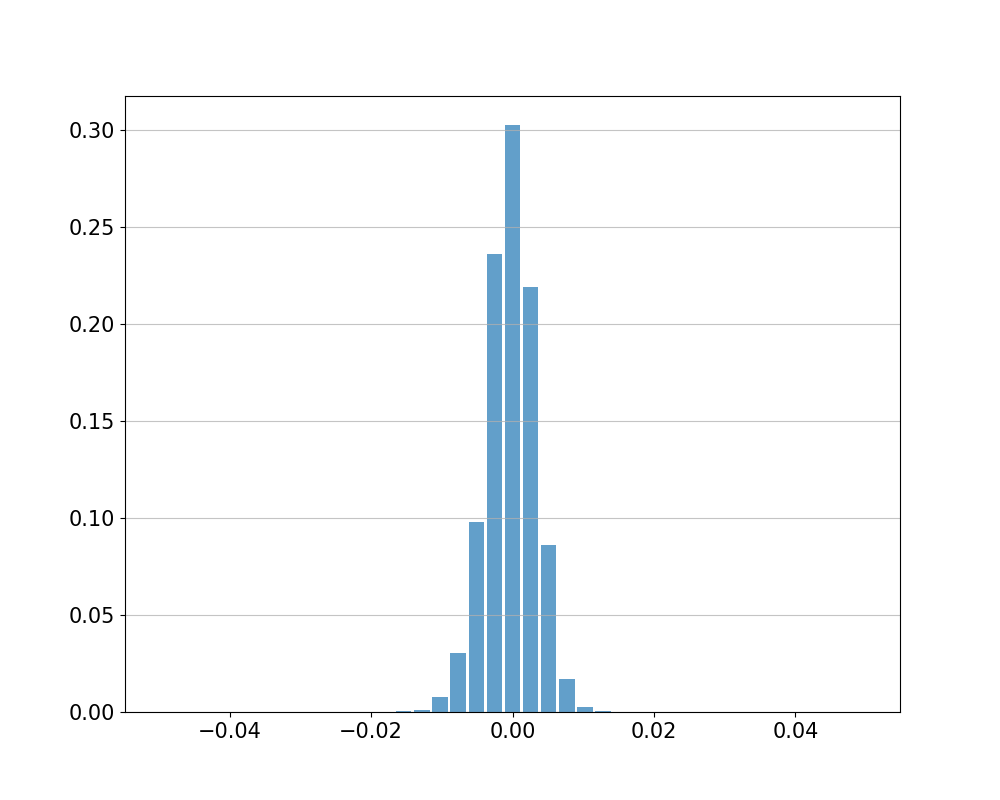}
\caption{Relative Error for Adaptative Algorithm}
\label{fig:relative_error_histogram_ad_rho_0}
\end{minipage}
  \begin{minipage}[T]{0.1\linewidth}
  \end{minipage}
\begin{minipage}[t]{0.35\linewidth}
\includegraphics[width=\linewidth]{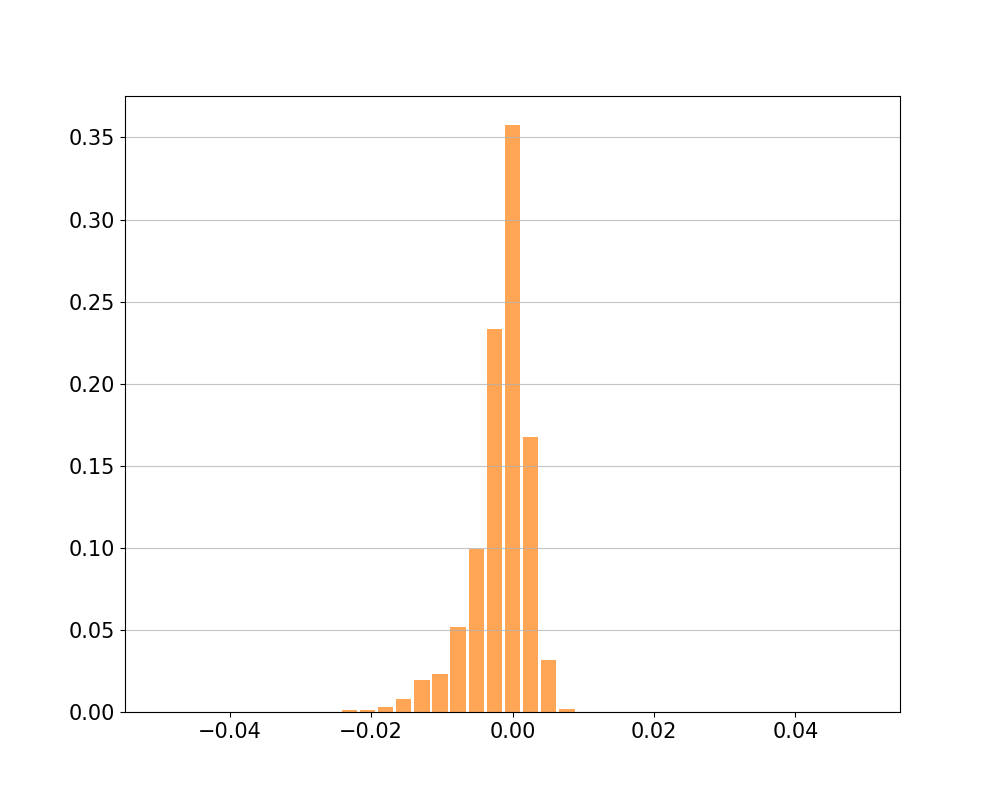}
\caption{Relative Error for Determinist Algorithm}
\label{fig:relative_error_histogram_det_rho_0}
\end{minipage} 
\begin{minipage}[t]{0.35\linewidth}
\includegraphics[width=\linewidth]{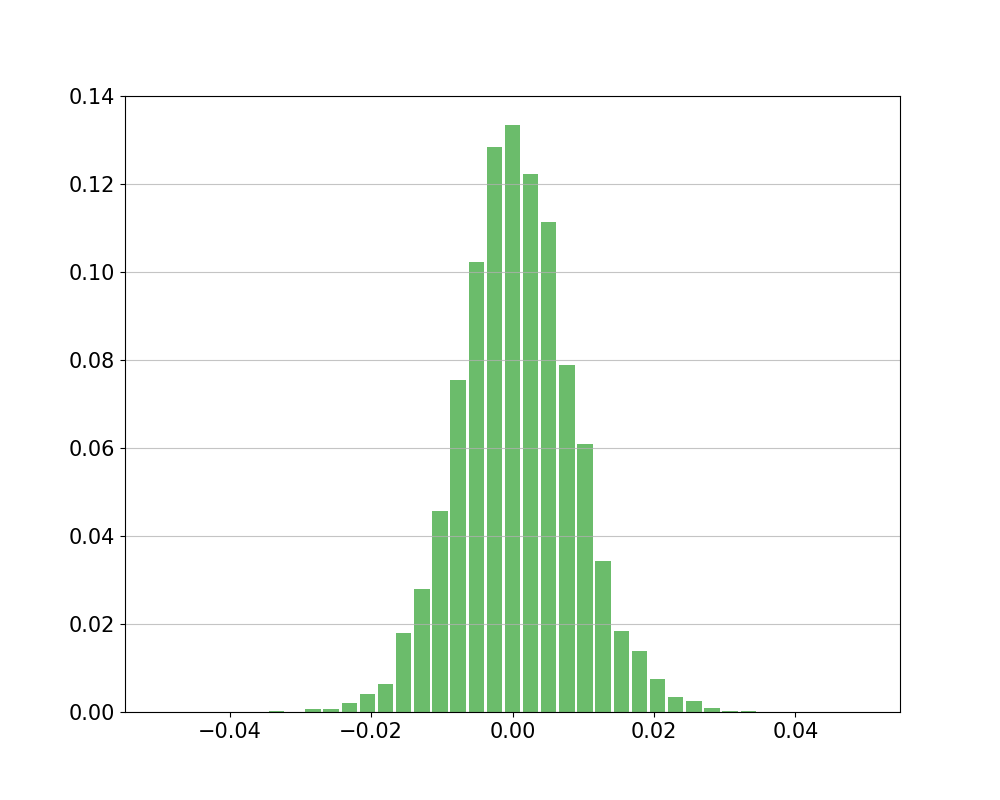}
\caption{Relative Error for Heuristic Algorithm}
\label{fig:relative_error_histogram_heur_1_rho_0}
\end{minipage}
\begin{minipage}[T]{0.1\linewidth}
\end{minipage}
\begin{minipage}[t]{0.35\linewidth}
\includegraphics[width=\linewidth]{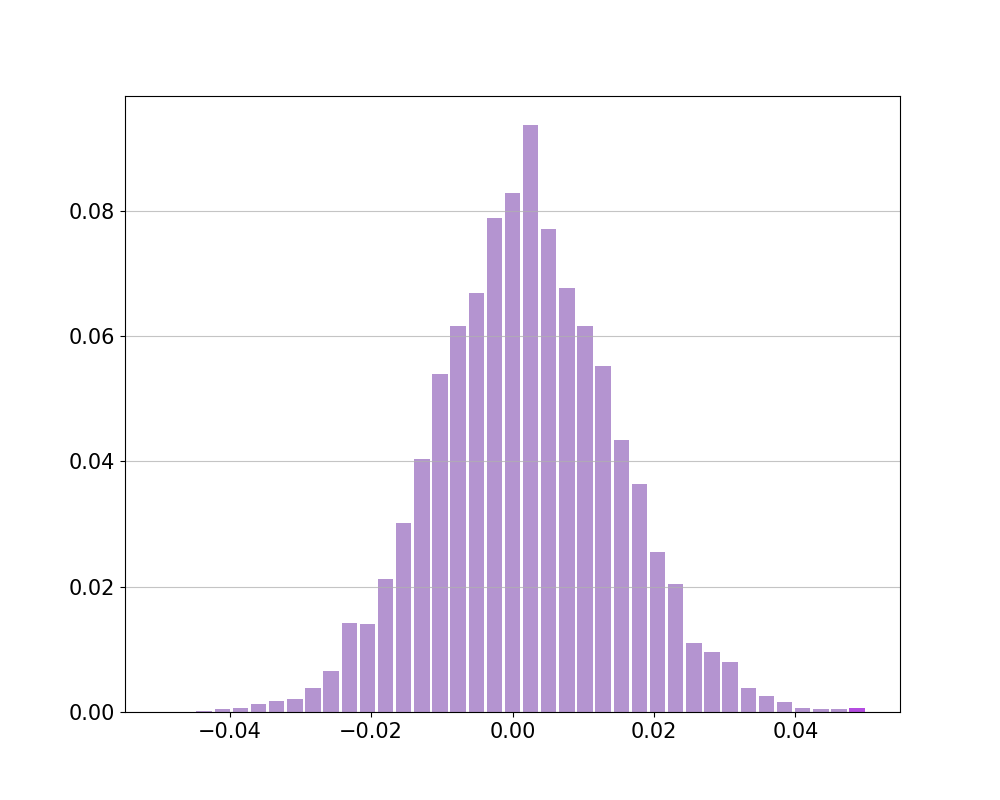}
\caption{Relative Error for Uniform Algorithm}
\label{fig:relative_error_histogram_uni_rho_0}
\end{minipage}
\end{figure}

In Table \ref{tab:K_10^7_errors_rho_0}, we provide the $\Lb^1$ and relative errors (with standard deviations), the $\Lb^2$ error and the number of correct selections for the 4 algorithms. For all algorithms, compared to the case $\rho=0.6$, we see that we have simultaneously a lower number of correct selections of scenarios (which we could expect to increase the errors) and a lower $\Lb^1$ error. This surprising result is explained by the fact that lowering the correlation has two effects. The filtering and ranking part of the algorithm becomes harder, as can be seen from Corollary \ref{cor: borne si sub gamma}. This explains why the number of correct selections becomes lower. However, we compute at the end an average over the $\Nw$ worst scenarios and the error on this average is lower when the pricings are uncorrelated compared to the case where they exhibit a positive correlation. 
 \vspace{5mm}

The adaptative algorithm has now simultaneously the lowest $\Lb^1$ and $\Lb^2$ errors, as well as the highest number of correct selections. We see that it is especially good in $\Lb^2$ error, so we expect it to present a very low number of large errors. As, by construction, it has been trained to detect misspecifications of the parameters, it now has a clear advantage on the deterministic algorithm which does not see it. This results in an improvement of almost 20\% of the $\Lb^2$ error.

 Following the above reasoning, we understand that, compared to the previous experiment, the final pricing error now plays a smaller role and the ranking and selection error a bigger role, which explains why the histogram of the errors for the determinist algorithm is strongly biased to the left, as it now incorrectly selects scenarios more often.

\begin{table}[H]
\centering
\begin{tabular}{ |c|c|c|c|c|c|c|}
\hline
Algorithm & $\Lb^1$ Err. & $\Lb^1$ Err. Std & Rel. Err. (\%) & Rel. Err. Std (\%) & $\Lb^2$ Err. & Correct Selections\\
\hline
Ad. Alg. & 1 083 & 11.8 & 0.27 & 0.00294 & 1 366 & 3 930\\
Det. Alg. & 1 175 & 17.5 & 0.293 & 0.00448 & 1 705 & 3 202\\
Heur. Alg. & 2 547 & 28.33 & 0.628 & 0.00700 & 3 240 & 3 753\\
Unif. Alg. & 4 062 & 44.7 & 1.00 & 0.0111 & 5 147 & 3 102\\
\hline
\end{tabular}
\caption{Errors for $\rho = 0$}
\label{tab:K_10^7_errors_rho_0}
\end{table}

In Figures \ref{fig:tail_distribution_error_histogram_ad_rho_0}, we plot the function $x \mapsto \mathbb{P}[X > 5000 - x]$ where $X$ is the absolute error of the algorithm on a run. As was suggested by the $\Lb^2$ errors of Table \ref{tab:K_10^7_errors_rho_0}, we see that the tail distribution of errors is lowest for the adaptative algorithm, followed by the deterministic algorithm (for big errors), and then by the heuristic and uniform algorithms.

\begin{figure}[H]
\centering
\begin{minipage}[t]{\linewidth}
\includegraphics[width=\linewidth]{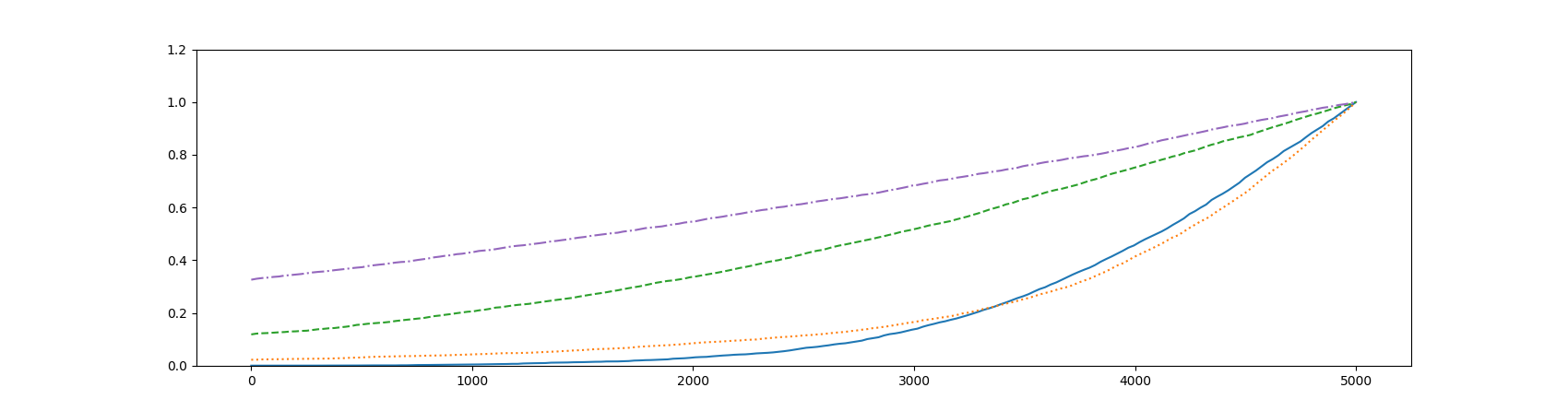}
\caption{Tail Distribution of the errors. First top lines: Uniform and Heuristic algorithms, respectively. Solid line: Adaptative algorithm. Dotted line: Determinist algorithm}
\label{fig:tail_distribution_error_histogram_ad_rho_0}
\end{minipage}
\end{figure}
 
 \section{Conclusion}
 
 We propose in this paper different algorithms for the computation of the expected shortfall based on given historical scenarios. All are multi-steps algorithms that use Monte Carlo simulations to reduce the number of historical scenarios that potentially belong to the set of worst scenarios. We  provide explicit error bounds and we test them on simulated data deviating from the true values of the historical impacts used for computing the associated optimal strategies. The first algorithm is a very easy to implement $2$-steps algorithm that already provides relatively small errors on our numerical tests. A four step deterministic dynamic programming algorithm performs very well when real datas are not far from the parameters used in the optimization procedure. It seems even to be quite robust, as shown by our numerical test in the case where the true correlation parameter is not the one used for computing the optimal policy. Finally, we propose an adaptative algorithm that aims at learning the true value of the parameters at the different steps of the algorithm. Our first numerical tests suggest that it is more conservative than the deterministic one, but probably more robust to parameters misspecifications, as expected. The version we use is built on a very simple one hidden layer neural network and can certainly be considerably improved for industrial purposes.

\newpage
\appendix
 
\section{Proxy of the optimal strategy for the heuristic \eqref{simplified_2_steps_optimization}}\label{sec: h1 explicit}
 
 In the case $p=1$, \eqref{simplified_2_steps_optimization} can even be further simplified by using the upper-bound
\begin{align}\label{eq: borne h1 h2}
\tilde h^{1}_{0}(q_{1})\le \max\{\tilde h_{1}(q_{1});\tilde h_{2}(q_{1})\}
\end{align}
where 
\begin{align*}
\tilde h_1(q_1) &:= \Ns \left(q_1+1-\Nw \right) \delta_0 \exp \left(- \frac{ \left( K - q_1 N_2 \right) (q_1+1-\Nw)\delta_0}{4 \Ns  c} \right)
\\
\tilde h_2(q_1) &:= \Ns \left( \Ns - \Nw \right) \delta_0 \exp \left(- \frac{ \left( K - q_1 N_2 \right) ((q_1+1-\Nw)\delta_0)^2}{4 \Ns  \overline \sigma^{2}}\right).
\end{align*}

The right-hand side of \eqref{eq: borne h1 h2} is now tractable for minimization. 
Given, 
{\small
\begin{equation}
\begin{cases}
\Delta := \left( K - \left( \Nw - 1 \right) N_2 \right)^2 - \frac{32 \Ns N_2 c}{\delta_0} \\
B := \frac{\bar \sigma^2}{c \delta_0} + \Nw - 1 \\
q_1^{2, *} := \max \left(\frac{\Nw - 1}{3} + \frac{2K}{3 N_2}, \Nw \right) \\
q_1^{1, 1, *} := \max \left(\frac{3 \left( \Nw - 1\right)}{4} + \frac{K - \sqrt{\Delta}}{4 N_2}, \Nw \right)\\
q_1^{1, 2, *} := \max \left(\frac{3 \left( \Nw - 1\right)}{4} + \frac{K + \sqrt{\Delta}}{4 N_2}, \Nw \right)
\end{cases}
\end{equation}
 the optimal policy $q^{h}_{1}$ is defined by the following table\footnote{We optimize here over real positive numbers.}:  
\begin{table}[H]
\centering
\begin{tabular}{ |c|c|c|c|c|c|}
\hline
Cond. on $B$ & Cond. $\Delta$ & Cond. $q_1^{2, *}$ & Cond. $q_1^{1, 1, *}$ & Cond. $q_1^{1,2, *}$ & Choice of $q_1^h$ \\
\hline
$  \geq \Ns$  &  &  &  &  & $q_1^h := q_1^{2, *}$ \\
\hline
$  \leq \Nw$  & $  > 0$  &  &  &  & $q_1^h := \underset{q_1 \in \{\Nw, q_1^{1, 2, *}\}}{\textnormal{argmin}}h_0^1(q_1)$ \\
\hline
$  \leq \Nw$ & $  \leq 0$  &  &  &  & $q_1^h := \Nw$ \\
\hline
$\Nw < \cdot < \Ns$ & $  > 0$ & $ \leq B$ & $ \leq B$ & $  \leq B$ & $q_1^h := \underset{q_1 \in \{q_1^{2, *}, B\}}{\textnormal{argmin}} h_0^1(q_1)$ \\
\hline
$\Nw < \cdot < \Ns$  & $  > 0$ & $ \leq B$ & $ \leq B$ & $  \geq B$ & $q_1^h := \underset{q_1 \in \{q_1^{2, *}, q_1^{1, 2, *}\}}{\textnormal{argmin}} h_0^1(q_1)$ \\
\hline
$\Nw < \cdot < \Ns$  & $  > 0$ & $ \leq B$ & $ \geq B$ & $  \geq B$ & $q_1^h := \underset{q_1 \in \{q_1^{2, *}, B, q_1^{1, 2, *}\}}{\textnormal{argmin}} h_0^1(q_1)$ \\
\hline
$\Nw < \cdot < \Ns$ & $  > 0$ & $ \geq B$ & $  \leq B$ & $  \leq B$ & $q_1^h := B$ \\
\hline
$\Nw < \cdot < \Ns$  & $  > 0$ & $ \geq B$ &  & $  \geq B$ & $q_1^h := \underset{q_1 \in \{B, q_1^{1,2, *}\}}{\textnormal{argmin}} h_0^1(q_1)$ \\
\hline
$\Nw < \cdot < \Ns$   & $  \leq 0$ & $  \leq B$ &  &  & $q_1^h := \underset{q_1 \in \{q_1^{2, *}, B\}}{\textnormal{argmin}} h_0^1(q_1)$ \\
\hline
$\Nw <\cdot < \Ns$ & $  \leq 0$ & $  \geq B$ &  &  & $q_1^h := B$ \\
\hline
\end{tabular}
\caption{Optimal $q_1^h$ for $\tilde h^{1}_{0}$.}
\label{tab:definition_of_q_1_h_all_cases}
\end{table}
} 

For simplicity, let us consider the case $c=0$, see Remark \ref{rem: c=0}. 

On Figure \ref{fig:h_0_vs_q_1}, the square is $q_1^{1, 2, *} = 52.41$, the circle is $q_1^{2, *}=68.33$ and the cross is the real optimum $q_1^*=71$ of $h^{1}_{0}$, for the parameters of Tables \ref{tab:sample_books_params} and \ref{tab:computing_power}. We see that we actually almost reach the correct minimum. It corresponds (up to rounding) to $N_1^{1, *} = 23723$, $N_1^{2, *} = 17148$, $N_1^*=15934$.

\begin{figure}[H]
  \centering
  \includegraphics[width=0.6\linewidth]{./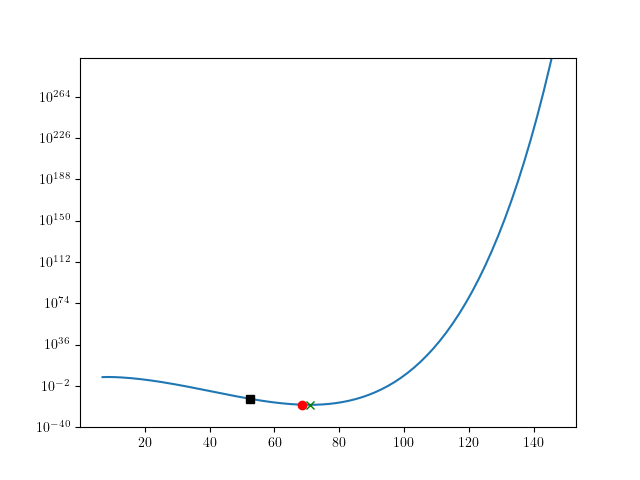}
  \caption{Square: $h_0^1(q_1^{1, 2, *})$. Circle: $h_0^1(q_1^{ 2, *})$. Cross: $h_0^1(q_1^{*})$. }
  \label{fig:h_0_vs_q_1}
\end{figure}

Using the same set of parameters, we plot on Figure \ref{fig:h_0_vs_max_h_1_h_2} the two functions $h_0^1$ and $\tilde h_1$. Although these two functions have values of different orders of magnitude, their shapes are quite close, which explains why we manage to obtain a relatively good approximation for the minimizer.

\begin{figure}[H]
  \centering
  \includegraphics[width=0.6\linewidth]{./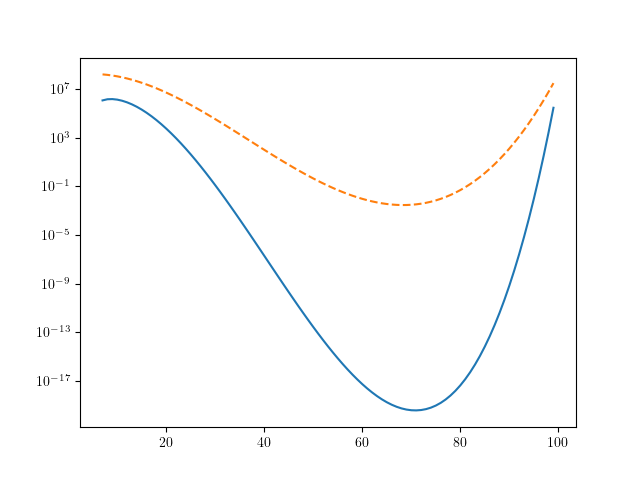}
  \caption{Solid line : $h_0^1$. Dashed line : $\tilde h_1$.}
  \label{fig:h_0_vs_max_h_1_h_2}
\end{figure}

\section{Precise implementation of the neural network algorithm}\label{sec: precise implementation}

In this Appendix, we describe in more details how the neural network approximation of the optimal policy of the adaptative algorithm is constructed. All the parameters values are given in Tables \ref{tab:renorm_constants}, \ref{tab:learning_rates_K_10**7} and \ref{tab:implementation_parameters} below. 

\subsection{Initialization}

\begin{itemize}
\item In practice, the neural network input's size depends on the window size $q$. Therefore, we need to train different neural networks for each window size. In order to get enough points to train each of these neural networks, we have chosen the grid 
\begin{equation*}
q_g=[6, 10, 15, 20, 25, 30, 35, 40, 45, 50, 60, 70, 80, 90, 100, 150, 200, 253]
\end{equation*}
of possible values for $q$. 
\item We simulate independent copies $\{\tilde \theta^{j}\}_{j\le {\rm j\_bar}} = \{ ( \tilde \mu^j, \tilde \Sigma^j) \}_{j \leq {\rm j\_bar}}$ of $\tilde \theta$, where {\rm j\_bar} is given in Table \ref{tab:implementation_parameters}. For each $1 \leq j \leq$ {\rm j\_bar}, $\tilde \Sigma^j$ is an inverse-Wishart of parameters $\is_0, \Sigma_0$, and $\tilde \mu^j$ is a Gaussian random vector of mean $\ms_0$ and covariance matrix $\tilde \Sigma^j / \ks_0$. The parameters $\is_0, \ks_0$ and $\Sigma_0$ are defined in Table \ref{tab:implementation_parameters} and \eqref{eq:sigma_definition}, while $\ms^{i}_{0}=\mu^{i}$, $i\le \Ns$, with the $\mu^{i}$'s of Figure \ref{fig:book_sample_S1_raw_mu}. 
\end{itemize}

\subsection{Strategy Generation}
To generate the deterministic strategies $(\alpha^{k})_{k\le {\rm k\_bar}}$, where {\rm k\_bar} is given in Table \ref{tab:implementation_parameters}, we proceed as follows. 
\begin{itemize}
\item For each $1 \leq k \leq${\rm k\_bar}, we simulate $L+1$ uniform random variables $\left( U_{n} \right)_{n=0}^L$ between $0$ and $1$. We sort them in {increasing} order $\left( U_{s(n)} \right)_{n=0}^L$ and define a cost $K_\ell := K (U_{s(\ell)} - U_{s(\ell-1)})$ when $1 \leq \ell \leq L - 1$, and $K_L = K (U_{s(0)} + 1 - U_{s(L - 1)})$. The idea is that we select $L + 1$ points randomly on a cercle of total length $K$: we choose one of these points, and starting from it, the computational power that we will use at each level $1 \leq \ell \leq L - 1$ is the length of the arc between the previous and the next point. For the last step, we take $K$ times the length between the points $L-1$ and $0$, so as to put, in average, twice more computational power on this last step. 

\item Once we have the computational cost for each step, we can choose the $q_\ell$ for each strategy, so that we can deduce $\delta N_{\ell+1} {:=K_{\ell}/q_{\ell}}$. For $\ell=0$, we choose $\textnormal{q\_index}_0 =  {18}$, where $18$ is the number of terms in the grid $q_g$, which therefore gives $q_0 = q_g[\textnormal{q\_index}_0] = \Ns$. For $\ell = L - 1$, we choose $\textnormal{q\_index}_{L - 1}  = 0$, that is, $q_{L - 1} =  \Nw$. For  $1 \leq \ell \leq L - 2$, we choose $\textnormal{q\_index}_{\ell}$ as a random  integer between $[L - \ell, \textnormal{q\_index}_{\ell-1} - 1]$. The choice of $q_\ell$ is then $q_\ell = q_g[\textnormal{q\_index}_{\ell}]$. We check that the sequence $(N_{\ell})_{1\le \ell\le L}$ is non-decreasing. If this is the case, we keep it, if not, we reject it and do another run. 
\end{itemize}

\subsection{Forward Pass}
The next step is to generate all prices and execute for each $k$ and each $j$ the strategy $k$.

\begin{itemize}
\item  For $1 \leq j \leq \textnormal{j\_bar}$, $1 \leq k \leq \textnormal{k\_bar}$ and $1 \leq \ell \leq L$, we simulate $\delta N_\ell^{k}$ Gaussian variables $\left(P_{  j'}^{j,1},\ldots,P_{ j'}^{j,\Ns}\right)_{ j'=N_{\ell-1}^{k}}^{N_\ell^{k}}$ of mean $\tilde \mu^j$ and covariance matrix $\tilde \Sigma^j$ (independently across $j$ and $k$). 

\item We then update $\hat \mu_\ell^{k, j}, \ms_\ell^{k, j}, \is_\ell^{k, j}, \ks_\ell^{k, j}, \Sigma_\ell^{k, j}$ accordingly, recall \eqref{eq: formule de transition}.

\item Updating $\Sigma_\ell^{k, j}$ from level $\ell - 1$ to level $\ell$ can use a lot of memory. Indeed,  $\sum_{j=N^{\alpha}_{\ell}+1}^{N^{\alpha}_{\ell+1}} ({\cal T}^{\alpha}_{\ell+1}(P_{j})-\delta \hat \mu^{\alpha}_{\ell+1})({\cal T}^{\alpha}_{\ell+1}(P_{j})-\delta \hat \mu^{\alpha}_{\ell+1})^{\top}$ consists in $\delta N_{\ell + 1}^\alpha \times |q_{\ell+1}|^2$ terms, which can quickly exceed memory limits. Therefore, we do the sum with only N\_memory\_new\_pricings\_opt terms at a time, see Table \ref{tab:implementation_parameters} below.
\end{itemize}

\subsection{Computation of f\_precompute, running\_costs and admissible\_sets}

\begin{itemize}
\item In order to speed up the computation time, we now precompute several values that will be used many times afterwards. First, we compute  f\_precompute$(\ell,k,j)$ as $f^{\rm ad}_{1}(\ell,\cdot)$ at the point corresponding to $(k,j)$ except that, in the definition of $f^{\rm ad}_{1}(\ell,\cdot)$, we replace the random permutation $\tilde \mk$ by its estimation from the previous step,    $\tilde \mu$  by its average under the posterior distribution at $\ell+1$, and $\tilde \sigma$ by its estimation at step $\ell+1$.

\item We   compute  $\textnormal{running\_cost}(\ell, k) := C^{\alpha^k}_{\ell}$ of each $k$ at step $\ell$. 

\item We restrict the set of possible actions at step $\ell$,   given that we have followed the strategy $k$ so far, to $\textnormal{admissible\_sets}(\ell, k)$ defined as the collection of $\{(\delta q_{\ell+1}^{ k'}, \delta N_{\ell + 1}^{ k'})$, $k'\le  {\rm k\_bar}\}$, such that 
\begin{equation*}
q_{\ell}^k + \delta q_{\ell + 1}^{k'} \in q_g, \;
N_{\ell + 1}^{k'} > N_\ell^{k}, \; 
\textnormal{running\_cost}(\ell, k) + q_\ell^k \delta N_{\ell + 1}^{k'} \leq \underset{1 \leq k'' \leq \bar k}{\max} \textnormal{running\_cost}(\ell + 1, k'').
\end{equation*}
The last condition avoids inducing a strategy with a running cost that is not present in our data set, when doing the one step optimization. 

\end{itemize}
 
\subsection{Computation of the final expectations}

\item We first pre-compute the quantities 
\begin{equation*}
\Eb^{\nu^{k,j}_{L}}_{L}\left[\left| \frac{1}{\Nw} \sum_{i\in \Ik_{L - 1}^{k,j}} (\hat \mu_{L}^{{k},j})^{i} - \tilde \mu^{i} \right|\right]
\end{equation*}
by Monte Carlo using $N_e$ simulations. As the simulation of an inverse-Wishart random variable is significantly slower than the simulation of a Gaussian random variable, we only simulate 1 inverse-Wishart for $N_p$ Gaussians.  The values of $N_e$ and $N_p$ are given by N\_mu\_tildes\_simulated and N\_wishart\_proportion of Table \ref{tab:implementation_parameters}. The estimation is called $\textnormal{expectation}_L^{k, j}$. 

\subsection{Training of the neural network at level $L$}

\begin{itemize}
\item We use a  neural network with one inner layer with 256 neurons and 1 output layer with 1 neuron to fit $(\textnormal{expectation}_L^{k, j})_{j\le {\rm j\_bar},k\le  {\rm k\_bar}}$. The neurons of the inner layer consist of the composition of the softplus function with an affine transformation of the inputs.

\item We initialize the neural network parameters using a Xavier initialization. We then train the neural network by selecting a random new batch every N\_batch\_change\_proportion. This random new batch is composed of the samples indexed by $1 \leq \mk_a(j) \leq \textnormal{j\_batch}$ and strategies indexed by $1 \leq \mk_b(k) \leq \textnormal{k\_batch}$, where $\mk_a$ and $\mk_b$ are uniform random permutations of $[\![1, {\rm j\_bar}]\!]$ and $[\![1, {\rm k\_bar}]\!]$. For each batch, the algorithm used for the training is the standard gradient descent of Tensorflow. We do N\_Iter training steps in total. The learning rate used is given in Table \ref{tab:learning_rates_K_10**7}. In order to bring the input values of the parameters close to $0$ and $1$, we renormalize them according to the values in Table \ref{tab:renorm_constants}. 
\end{itemize}

\subsection{Computation of the expectations at level $L-1$}

We now estimate 
\begin{equation*}
\Eb^{\nu^{k,j}_{L - 1}}_{L - 1}\left[\check \phi_{L}( q^{\alpha^{k}}_{L}, N^{\alpha^{k}}_{L},C^{\alpha^{k}}_{L}, \hat \mu^{k}_{L},\ps^{k}_{L}) \right]
\end{equation*}
where $\check \phi_{L}$ is the fit of $(\textnormal{expectation}_L^{k, j})_{j\le  {\rm j\_bar},k\le  {\rm k\_bar}}$ from the previous step. The most cpu demanding part    is no more the simulation of the inverse-Wisharts, but the updates of the parameters of the inverse-Wishart. Therefore,  we simulate as many Gaussian random variables as inverse-Wishart random variables, with $N_e$  given by N\_mu\_tildes\_simulated\_non\_final\_level of Table  \ref{tab:implementation_parameters}.

For our computations, we need to update  $\Sigma^{k,j}_{L-1}$ to the corresponding posterior parameter according to \eqref{eq: formule de transition}. This can however lead to an enormous amount of multiplications and additions. Therefore, instead of updating the whole matrix, we only   update  the diagonal terms  according to \eqref{eq: formule de transition} and estimate non diagonal terms by keeping the correlation terms equal to  the ones of $\Sigma^{k,j}_{L-1}$. 
This enables us to approximately gain a factor of $q_{L}^k$ in speed in this critical step. \\

\subsection{Training of the neural network at level $L - 1$}

\begin{itemize}

\item To fit the expectation of the previous step, we use a neural network with the same structure as in level $L$, with the same cost function.

\item The initialization, choice of batches, and training of the neural network are the same as for the level $L$. The number of iteration, learning rate, and renormalization constants are given in Tables  \ref{tab:renorm_constants}, \ref{tab:implementation_parameters} and \ref{tab:learning_rates_K_10**7}.
\item We take $\textnormal{j\_batch} = \min \left( \textnormal{j\_batch\_size}, \textnormal{j\_bar} \right)$ and $\textnormal{k\_batch} = \min \left(\textnormal{k\_batch\_size}, \textnormal{k\_bar} \right)$, where j\_batch\_size and k\_batch\_size are defined in Table  \ref{tab:implementation_parameters}. 

\end{itemize}

\subsection{Computation of the expectations at levels $0\le \ell\le L - 2$}

\begin{itemize}
\item The expectations at step $\ell$ are computed by Monte Carlo after replacing the value function at step $\ell+1$ by its neural network approximation, and $f^{\rm ad}_{1}(\ell,\cdot)$ by f\_precompute$(\ell,\cdot)$.
\item We simulate as many Gaussian random variables as inverse-Wishart random variables, with $N_e$  given by N\_mu\_tildes\_simulated\_non\_final\_level of Table  \ref{tab:implementation_parameters}.
\item We not not fully update  $\Sigma^{k,j}_{\ell}$ to the corresponding posterior parameter but proceed as in level $L-1$. 
\end{itemize}

\subsection{Training of neural networks at levels $ 0 \leq \ell \leq L - 2 $}

\begin{itemize}

\item We now   have to optimize over $q_{\ell}^{k} \in q_g$. Therefore, we must now train up to $|q_g|$ different neural networks (with different inputs' sizes). In practice, we   only train neural networks indexed by $q \in (q_{\ell}^k)_{1 \leq k \leq \textnormal{k\_bar} } \subset q_g$, that is, for all the choices of $q$ that are obtained by at least one strategy at level $\ell$. 

\item We must also choose a $\delta N$ that should be added as an entry of the neural network before optimizing. Furthermore, to help the neural networks converge, we decided to add $\textnormal{f\_precompute}(\ell, j, k)$ as an input. 

\item The loss function and the structure of the neural network is as above, and we still use   Xavier initialization, and  bring the inputs of the neural networks to reasonable values close to 0 and 1 by renormalizing them using the constants of Table \ref{tab:renorm_constants}. 

\item Compared to levels $L$ and $L - 1$, the choice of batches is slightly different. Indeed, to train a neural network associated to $q \in q_g$, we  only  use strategies such that $q_\ell^k = q$. To do so, we first define $S_q = \{k \in [\![1, {\rm k\_bar}]\!]  : q_\ell^k = q \}$. We then define $\textnormal{k\_batch} = \min \left(\textnormal{k\_batch\_size}, |S_q|\right)$ and $\textnormal{j\_batch} = \min \left(\textnormal{j\_batch\_size}, {\rm j\_bar} \right)$. We then proceed nearly identically as for levels $L$ and $L - 1$. We select a new batch every N\_batch\_change\_proportion, composed of indices $1 \leq \mk_a(j) \leq \textnormal{j\_batch}$, $1 \leq \mk_b(k)\leq \textnormal{k\_batch}$, where $\mk_a$  and $\mk_b$ are uniform random permutations of $[\![1, {\rm j\_bar}]\!]$ and $S_q$. For each batch, the algorithm used for the training is again the standard gradient descent of Tensorflow.

\item Compared to levels $L$ and $L - 1$, we found that making the neural networks converge was much harder. In particular, the learning rate had to be really  fine tuned. In order to automatize the process, for each $q$, we proceed as follows. We do not intanciate one, but \\ number\_of\_neural\_networks\_for\_learning\_rate\_test neural networks. For each of these neural networks, we do N\_Iter\_learning\_rate\_test training steps, but use different learning rates for each. For the first neural network, we use base\_learning\_rate as the learning rate, for the second, \\ base\_learning\_rate/10, and for the $k$-th, $\textnormal{base\_learning\_rate}/10^{k-1}$. For each of these neural networks, we store at each iteration step the log error. Once the N\_Iter\_learning\_rate\_test training steps have been done for each of these neural networks, we keep the neural network instance that has the lowest average log error. If it is the $k$-th neural network, we then train it again for N\_Iter training steps, using as learning rate $\textnormal{base\_learning\_rate}/10^{k}$. 
 
\end{itemize}

\subsection{Parallelization}

In practice, we parallelize the forward pass according to the strategy indices $k$. We run thread\_batch\_size processes in parallel, where thread\_batch\_size is defined in Table \ref{tab:implementation_parameters}. 
\vs2

At a given level $\ell$, the computation of $\textnormal{expectation}_\ell^{k, j}$ can be parallelized according to the sample indices $j$. In practice, we run number\_of\_threads\_for\_level\_expectations number of processes in parallel, where number\_of\_threads\_for\_level\_expectations is defined in Table \ref{tab:implementation_parameters}. 
\vs2 

For a given level, the training of each neural network corresponding to a given $q \in q_g$ can be done independently. Therefore, at a given level, we multiprocessed our code in order to train all the neural networks in parallel.

\subsection{Normalization constants, implementation parameters, and learning rates}

\begin{table}[H]
\centering
\begin{tabular}{|c|c|c|c|c|c|c|}
\hline
Level & q & m &    {$\Sigma$} & N & running\_cost & f\_precompute \\
\hline
1 & 6 & $10^{6}$ & $10^{12}$ & $10^4$ & $10^7$ & $10^5$\\
2 & 6 & $10^{6}$ & $10^{12}$ & $10^4$ & $10^7$ & $10^5$\\
3 & 6 & $10^{6}$ & $10^{12}$ & $10^4$ & $10^5$ & $10^6$\\
4 & 6 & $10^{6}$ & $10^{11}$ & $10^4$ & $10^5$ & $10^6$\\
\hline
\end{tabular}
\caption{Inputs' renormalization constants by Level}
\label{tab:renorm_constants}
\end{table} 
 
 \begin{table}[H]
\centering
\begin{tabular}{|c|c|c||c|c|c|}
\hline
Level & q & base\_learning\_rate&Level & q & base\_learning\_rate \\
\hline
1 & 6 & $10^{-9}$ &2 & 6 & $10^{-9}$ \\
1 & 10 & $10^{-9}$&2 & 10 & $10^{-9}$ \\
1 & 15 & $10^{-9}$ &2 & 15 & $10^{-9}$ \\
1 & 20 & $10^{-9}$ &2 & 20 & $10^{-9}$ \\
1 & 25 & $10^{-9}$& 2 & 25 & $10^{-9}$  \\
1 & 30 & $10^{-9}$ &2 & 30 & $10^{-9}$  \\
1 & 35 & $10^{-9}$ &2 & 35 & $10^{-9}$ \\
1 & 40 & $10^{-9}$ &2 & 40 & $10^{-9}$ \\
1 & 45 & $10^{-9}$ &2 & 45 & $10^{-9}$  \\
1 & 50 & $10^{-9}$ &2 & 50 & $10^{-9}$ \\
1 & 60 & $10^{-9}$ & 2 & 60 & $10^{-9}$ \\
1 & 70 & $10^{-9}$ & 2 & 70 & $10^{-9}$\\
1 & 80 & $10^{-9}$ &2 & 80 & $10^{-9}$ \\
1 & 90 & $10^{-9}$ & 2 & 90 & $10^{-9}$\\
1 & 100 & $10^{-9}$ & 2 & 100 & $10^{-9}$ \\
1 & 150 & $10^{-9}$ & 2 & 150 & $10^{-9}$ \\
1 & 200 & $10^{-10}$  & 2 & 200 & $10^{-10}$\\
1 & 253 & $10^{-10}$  & 2 & 253 & $10^{-10}$\\
\hline
3 & 6 & $10^{-7}$ &4 & 6 & $10^{-7}$ \\
\hline
\end{tabular}
\caption{Neural network base learning rates}
\label{tab:learning_rates_K_10**7}
\end{table}

\newpage
\begin{table}[H]
\centering
\begin{tabular}{|c|c|}
\hline
Parameter & Value \\
\hline
j\_batch\_size & 4 \\
k\_batch\_size & 4 \\
N\_batch\_change\_proportion & 1 000 \\
N\_iter\_show\_proportion & 100 \\
smaller\_learning\_rate\_proportion & 10 \\
N\_Iter\_smaller\_learning\_rate & 10 000 \\
L & 4 \\
n\_s & 253 \\
n\_w & 6 \\
k\_bar & 200 \\
j\_bar & 40 \\
i\_0 &   300  \\
k\_0 & 300   \\
 $\Sigma_0$ &  $(300 - 253 - 1)\Sigma$   \\
N\_wishart\_proportion & 1 000 \\
N\_mu\_tildes\_simulated & 1 000 000 \\
thread\_batch\_size & 4 \\
number\_of\_threads\_for\_level\_expectations & 4 \\
thread\_batch\_size\_for\_level\_expectations & 4 \\
p & 1 \\
r & 2 \\
c & 0 \\
N\_Iter & 1 000 000 \\
N\_Iter\_learning\_rate\_test & 100 000 \\
number\_of\_neural\_networks\_for\_learning\_rate\_test & 4 \\
K & 10 000 000 \\
N\_mu\_tildes\_simulated\_non\_final\_level & 1 000 \\
 {N\_memory\_new\_pricings\_opt} &  {100} \\
\hline
\end{tabular}
\caption{Implementation parameters}
\label{tab:implementation_parameters}
\end{table}

\end{document}